\pgfplotsset{width=10cm,compat=1.9}
\DeclareMathOperator{\tr}{tr}
\newcommand{\PreserveBackslash}[1]{\let\temp=\\#1\let\\=\temp}
\newcolumntype{C}[1]{>{\PreserveBackslash\centering}p{#1}}
\newcolumntype{R}[1]{>{\PreserveBackslash\raggedleft}p{#1}}
\newcolumntype{L}[1]{>{\PreserveBackslash\raggedright}p{#1}}
\newcommand{\thickhline}{%
	\noalign {\ifnum 0=`}\fi \hrule height 1pt
	\futurelet \reserved@a \@xhline
}
\newcolumntype{"}{@{\hskip\tabcolsep\vrule width 1pt\hskip\tabcolsep}}
\newcommand{\ket}[1]{\lvert #1 \rangle}
\newcommand{\bra}[1]{\langle #1 \rvert}
\newcommand{\ketbra}[2]{\lvert #1 \rangle\langle #2 \rvert}
\newcommand{\braket}[2]{\langle #1 \vert #2 \rangle}
\definecolor{myred}{RGB}{236, 17, 0}
\definecolor{myblue}{RGB}{10, 88, 153}
\definecolor{mygreen}{RGB}{26, 152, 81}
\definecolor{myorange}{RGB}{236, 137, 0}
\definecolor{LightGray}{RGB}{220,220,220}
\definecolor{LinkColor}{RGB}{167,20,49}
\DeclareRobustCommand\ground{\begin{tikzpicture}[scale=0.5]
		\draw[thick](-0.4,0)--(0.4,0);
		\draw[thick](-0.3,0.1)--(0.3,0.1);
		\draw[thick](-0.2,0.2)--(0.2,0.2);
		\draw[thick](-0.1,0.3)--(0.1,0.3);
	\end{tikzpicture}}
\theoremstyle{plain}
\newtheorem{thm}{Theorem}[section]
\newtheorem{lem}[thm]{Lemma}
\newtheorem{cond}[thm]{Condition}
\theoremstyle{definition}
\newtheorem{defn}[thm]{Definition}
\crefname{defn}{Definition}{Definitions}
\Crefname{defn}{Definition}{Definitions}
\crefname{thm}{Theorem}{Theorems}
\Crefname{thm}{Theorem}{Theorems}
\crefname{claim}{Claim}{Claims}
\Crefname{claim}{Claim}{Claims}
\crefname{lem}{Lemma}{Lemmas}
\Crefname{lem}{Lemma}{Lemmas}
\crefname{rem}{Remark}{Remarks}
\Crefname{rem}{Remark}{Remarks}
\crefname{prop}{Proposition}{Propositions}
\Crefname{prop}{Proposition}{Propositions}
\crefname{cor}{Corollary}{Corollaries}
\Crefname{cor}{Corollary}{Corollaries}
\crefname{section}{Section}{Sections}
\Crefname{section}{Section}{Sections}
\crefname{equation}{}{}
\Crefname{equation}{}{}
\crefname{figure}{Figure}{Figures}
\Crefname{figure}{Figure}{Figures}
\crefname{appendix}{Appendix}{Appendices}
\Crefname{appendix}{Appendix}{Appendices}
\crefname{table}{Table}{Tables}
\Crefname{table}{Table}{Tables}
\crefname{exmp}{Example}{Examples}
\Crefname{exmp}{Example}{Examples}
\crefname{footnote}{Footnote}{Footnotes}
\Crefname{footnote}{Footnote}{Footnotes}
\crefname{cond}{Condition}{Conditions}
\Crefname{cond}{Condition}{Conditions}
\newcommand{\todo}[1]{{\color{myred} TODO: #1}}
\begin{document}

\title{Security of differential phase shift QKD from relativistic principles}

\author{Martin Sandfuchs}
\email{martisan@phys.ethz.ch}
\affiliation{Institute for Theoretical Physics, ETH Zürich, Wolfgang-Pauli-Str.\ 27, 8093 Zürich, Switzerland}
\author{Marcus Haberland}
\email{marcus.haberland@aei.mpg.de}
\affiliation{Institute for Theoretical Physics, ETH Zürich, Wolfgang-Pauli-Str.\ 27, 8093 Zürich, Switzerland}
\affiliation{Max Planck Institute for Gravitational Physics (Albert Einstein Institute), Am M\"uhlenberg 1, 14476 Potsdam, Germany}
\author{V. Vilasini}
\email{vilasini@inria.fr}
\affiliation{Institute for Theoretical Physics, ETH Zürich, Wolfgang-Pauli-Str.\ 27, 8093 Zürich, Switzerland}
\affiliation{Université Grenoble Alpes, Inria, 38000 Grenoble, France}
\author{Ramona Wolf}
\email{ramona.wolf@uni-siegen.de}
\affiliation{Institute for Theoretical Physics, ETH Zürich, Wolfgang-Pauli-Str.\ 27, 8093 Zürich, Switzerland}
\affiliation{Naturwissenschaftlich-Technische Fakultät, Universität Siegen, 57068 Siegen, Germany}
%\date{\today}

\maketitle

\fontfamily{lmr}\selectfont

\begin{abstract}
The design of quantum protocols for secure key generation poses many challenges: On the one hand, they need to be practical concerning experimental realisations. On the other hand, their theoretical description must be simple enough to allow for a security proof against all possible attacks. Often, these two requirements are in conflict with each other, and the differential phase shift (DPS) QKD protocol exemplifies these difficulties: It is designed to be implementable with current optical telecommunication technology, which, for this protocol, comes at the cost that many standard security proof techniques do not apply to it. After about 20 years since its invention, this work presents the first full security proof of DPS QKD against general attacks, including finite-size effects. The proof combines techniques from quantum information theory, quantum optics, and relativity. We first give a security proof of a QKD protocol whose security stems from relativistic constraints. We then show that security of DPS QKD can be reduced to security of the relativistic protocol. In addition, we show that coherent attacks on the DPS protocol are, in fact, stronger than collective attacks. Our results have broad implications for the development of secure and reliable quantum communication technologies, as they shed light on the range of applicability of state-of-the-art security proof techniques.
\end{abstract}

%{
%	\hypersetup{linkcolor=black}
%	\tableofcontents
%}
%
%\newpage

% !TeX spellcheck = en_GB
\section{Introduction}

The art of encryption is as old as the concept of writing systems. For thousands of years, people have invented sophisticated cryptographic techniques to hide the content of messages for various purposes, such as secret communication between governments or militaries. However, a look back at history suggests that cryptography is caught in a vicious circle: Cryptanalysts have always been quick to find ways to break any supposedly secure encryption method, prompting cryptographers to invent even more sophisticated schemes to hide information, and so on. In today's society, secure communication is a highly relevant issue as a large amount of sensitive data is transmitted over the internet. Quantum key distribution (QKD) \cite{Bennett1984,Ekert1991} offers a possibility to break the vicious circle by providing information-theoretically secure encryption, which is based (almost) solely on the laws of physics. Nonetheless, caution is still advised in this case: even these protocols can only break the circle if they come with a complete security proof against all possible attacks.

While QKD, in principle, offers a way to achieve unbreakable encryption, it comes with a number of challenges, in particular when turning theoretical ideas into practical applications. A crucial issue in this transformation is that actual devices, such as quantum sources and measurements, rarely conform to their corresponding description in the theoretical protocol. For example, a typical information carrier in QKD protocols is single photons. However, perfect single photon sources and detectors do not exist in practice. Since the security proof only applies to the assumptions made in the protocol description, these deviations open up the possibility of side-channel attacks such as the photon number splitting (PNS) attack \cite{Bennett1992b,Brassard2000}. This attack exploits that in an implementation, information is typically encoded into weak coherent pulses instead of single photons. These pulses have a small non-zero probability that more than one photon is emitted in one pulse, which allows the adversary to split off one of these photons without influencing the second one. Since this photon contains all information encoded in the pulse, the adversary can obtain information on the key without being detected.

\begin{figure}[t]
	\centering
	\newcommand{\BS}[1]{
    \begin{scope}[shift={(#1)}]
        \draw[fill=gray!30] (-0.25,-0.25) rectangle (0.25,0.25);
        \draw (-0.25,-0.25) rectangle (0.25,0.25);
        \draw (-0.25,-0.25) -- (0.25,0.25);
    \end{scope}
}

\begin{tikzpicture}[
    boxnode/.style={rectangle, draw=black, minimum width=0.8cm, thick, fill=black!20},
    emptynode/.style={},
    mirror/.pic={
        \draw[decorate,decoration={markings, mark=between positions 0.015 and 0.98 step 0.1072 with {
            \draw (0,0)--(90:3pt);}}] (-0.2,-0.2) -- (0.2,0.2);
        \draw[thick] (-0.2,-0.2) -- (0.2,0.2);
    },
    detector/.pic={\draw[draw=black,fill=gray!20] (0,0.2) arc(90:-90:0.2cm and 0.2cm) -- cycle;},
    scale=0.9,
    ]
    \node[boxnode] (source) at (-4.5, 0) {source};
    \node[boxnode] (PM) at (-2, 0) {PM};

    \node (BSBob1) at (3, 0) {};
    \BS{BSBob1.center};

    \node (BSBob2) at (6, 0) {};
    \begin{scope}[rotate=90]
        \BS{BSBob2.center};
    \end{scope}

    \node (MirrorBob1) at (3, 1.5) {};
    \draw (MirrorBob1.center) pic {mirror};

    \node (MirrorBob2) at (6, 1.5) {};
    \draw (MirrorBob2.center) pic[rotate=-90] {mirror};

    \begin{scope}[shift={(0.3, 0)}]
        \begin{scope}[shift={(-1, 0)}]
            \draw[fill=myred,draw=none] (0, 0) to[out=0,in=180,looseness=0.3] (0.3, 1) to[out=0,in=180,looseness=0.3] (0.6, 0) -- cycle;
        \end{scope}
        \begin{scope}
            \draw[fill=myred,draw=none] (0, 0) to[out=0,in=180,looseness=0.3] (0.3, 1) to[out=0,in=180,looseness=0.3] (0.6, 0) -- cycle;
        \end{scope}
        \begin{scope}[shift={(1, 0)}]
            \draw[fill=none,draw=myred,thick] (0, 0) to[out=0,in=180,looseness=0.3] (0.3, 1) to[out=0,in=180,looseness=0.3] (0.6, 0) -- cycle;
        \end{scope}
        \draw[thick,|<->|] (-0.7, 1.2) -- node[above]{\small $\Delta \phi = 0$} (0.3, 1.2);
        \draw[thick,|<->|] (+0.3, -0.2) -- node[below=0.05]{$\small \Delta \phi = \pi$} (1.3, -0.2);

        \node[text=myred] at (2.0, 0.5) {$\cdots$};
        \node[text=myred] at (-1.3, 0.5) {$\cdots$};
    \end{scope}

    \draw[thick,draw=myred] (source.east) -- (PM.west);
    \draw[thick,draw=myred] (PM.east) --  (BSBob1.center) -- node[above]{} (BSBob2.center);
    \draw[thick,draw=myred] (BSBob1.center) -- (MirrorBob1.center) -- node[above]{} (MirrorBob2.center) -- (BSBob2.center);

    \draw[thick,draw=myred] (BSBob2.center) -- ([xshift=+1.0cm] BSBob2.center) pic{detector} node[right,xshift=0.2cm]{};
    \draw[thick,draw=myred] (-1,0) (BSBob2.center) -- ([yshift=-1.0cm] BSBob2.center) pic[rotate=-90]{detector} node[below,yshift=-0.2cm]{};
\end{tikzpicture}
    \caption{\label{fig:dps_overview} Overview of the differential phase shift QKD protocol. A phase modulator (PM) is used to apply a random phase $\phi \in \{0, \pi\}$ (represented by the shading in the diagram) to each pulse in a train of coherent states. Alice's key bit is determined by the relative phase $\Delta \phi$ between subsequent pulses. Bob obtains his key by measuring the relative phase using a Mach-Zehnder interferometer.}
\end{figure}

A way to get around these kinds of problems is to design protocols whose theoretical description is closer to an experimentally feasible implementation, an approach that is followed by the differential phase shift (DPS) QKD protocol originally proposed in \cite{Inoue2002}. Already on the level of the theoretical description, this protocol employs coherent states as information carriers instead of single photons, which allows for an implementation with readily available optical telecommunication equipment. However, as explained above, using weak coherent pulses opens up the possibility of the PNS attack. DPS QKD counteracts this attack by combining coherent states with encoding information in the relation between two consecutive rounds rather than into single rounds (see \cref{fig:dps_overview} for an overview of DPS QKD). This directly rules out attacks that extract information from individual pulses, which includes the PNS attack. However, designing a protocol with a focus on implementations comes at a cost: While the protocol is simpler concerning its experimental realisation, the security proof poses two significant challenges: 
\begin{enumerate}
	\item Using coherent states instead of single photons means we have to deal with states in an infinite-dimensional Fock space instead of a qubit (or some other finite-dimensional) Hilbert space. This renders any numerical method for calculating secure key rates infeasible if one tries to apply it directly to states in the Fock space.
	\item The fact that information is encoded into the relation between two consecutive rounds rather than the individual rounds directly rules out some of the standard security proof techniques such as the quantum de Finetti theorem \cite{Renner2007,Renner2008} and the postselection technique \cite{Christandl2009}. This is because these techniques require the protocol rounds to be permutation invariant.
\end{enumerate}

In light of these challenges, it is perhaps not surprising that a full security proof of DPS QKD has not been achieved yet. Instead, the security of DPS has been proven in various simplified scenarios. These efforts of proving the security of the DPS protocol generally fall into two categories: In the first, additional assumptions are made about the possible attacks that an eavesdropper can carry out. Consequently, these proofs only provide conditional (rather than unconditional) security of the protocol. One example in this category is a security proof that only applies to the class of individual attacks \cite{Waks2006}. In the second category, typically, a modified version of the protocol with a (block) iid\footnote{``iid'' stands for ``independent and identically distributed'' and describes attacks where the eavesdropper applies the same strategy to each signal and, in particular, does not exploit correlations between signals.} structure is analysed. Notable examples include the security proof of single photon DPS \cite{Wen2009} and security proofs for versions of DPS with phase-randomised blocks \cite{Tamaki2012,Mizutani2017,Mizutani2023}. In addition to the security proofs, attacks on DPS QKD have been devised which provide upper bounds on its performance \cite{Curty2006,Curty2008,Curty2009}.

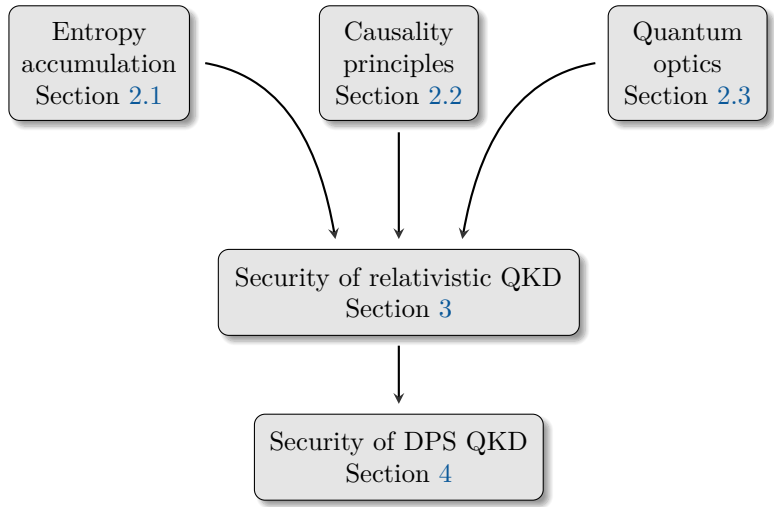
\begin{figure}[t]
    \centering
    \begin{tikzpicture}[scale=0.95]
       	\draw[fill=LightGray!75,draw=black,rounded corners,blur shadow={shadow blur steps=10}] (-2.5,-3.8) rectangle (2.5,-2.6);
    	\draw[fill=LightGray!75,draw=black,rounded corners,blur shadow={shadow blur steps=10}] (-2,-6.1) rectangle (2,-4.9);
    	\draw[fill=LightGray!75,draw=black,rounded corners,blur shadow={shadow blur steps=10}] (-5.4,-0.8) rectangle (-2.9,0.8);
    	\draw[fill=LightGray!75,draw=black,rounded corners,blur shadow={shadow blur steps=10}] (2.9,-0.8) rectangle (5.1,0.8);
    	\draw[fill=LightGray!75,draw=black,rounded corners,blur shadow={shadow blur steps=10}] (-1.1,-0.8) rectangle (1.1,0.8);
        \node[align=center] (DPS) at (0, -5.5) {Security of DPS QKD\\\cref{sec:SecurityDPS}};
        \node[align=center] (relativistic) at (0, -3.2) {Security of relativistic QKD\\\cref{sec:rel_qkd}};
        \node[align=center] (EAT) at (-4.15, 0) {Entropy\\accumulation\\\cref{sec:qkd_security}};
        \node[align=center] (causality) at (0, 0) {Causality\\principles\\\cref{sec:causality}};
        \node[align=center] (squash) at (4, 0) {Quantum\\optics\\\cref{sec:quantum_optics}};
        \draw[->,>=stealth,thick] ([xshift=0.75em] EAT.east) to[out=-10,in=100] ([xshift=-2.5em,yshift=0.75em] relativistic.north);
        \draw[->,>=stealth,thick] ([yshift=-0.75em] causality.south) to[out=-90,in=90] ([yshift=0.75em]relativistic.north);
        \draw[->,>=stealth,thick] ([xshift=-0.75em] squash.west) to[out=190,in=80] ([xshift=2.5em,yshift=0.75em] relativistic.north);
        \draw[->,>=stealth,thick] ([yshift=-0.75em] relativistic.south) to[out=-90,in=90] ([yshift=0.75em]DPS.north);
    \end{tikzpicture}
    \caption{\label{fig:proofingrerdients}Overview of the ingredients of the security proofs presented in this work, together with their corresponding sections in the paper.}
\end{figure}

In this work, we provide a security proof of DPS QKD against general attacks, which combines ideas from quantum information theory, quantum optics, and relativity. As such, it is the first full security proof of a protocol where information is encoded in between rounds instead of into individual rounds, which does not require modifying the protocol to recover a (block) iid structure. The method we employ to achieve this task is a generalisation of the \emph{entropy accumulation theorem} (EAT) \cite{Dupuis2020,Depuis2019,Metger2022,Metger2023}, which allows us to derive secure key rates against general attacks taking into account finite-size effects. In contrast to methods such as the de Finetti theorem, it does not require the rounds of the protocol to be symmetric under permutation. It can hence be applied to protocols where information is encoded between two rounds. To account for the problem of the infinite-dimensional Fock space that describes the involved quantum states, we use a method called \emph{squashing} \cite{Gottesman2002, Tsurumaru2008, Beaudry2008, Gittsovich2014}. The general idea here is to formulate a protocol on a low-dimensional Hilbert space that is analogous (with respect to its security) to the actual protocol. On the level of the low-dimensional space, we can then apply numerical techniques for calculating the secure key rate. In order to construct an appropriate squashing map that can be applied together with the generalised EAT to prove the security of DPS QKD, we are in need of one missing ingredient: To meet the requirements of the generalised EAT, we need a well-defined sequence of channels. This can be enforced if Alice sends her signal states with a sufficient time delay, which provides a natural connection to relativistic principles, particularly causality, since it implies that certain systems cannot signal to each other. This then allows us to reduce the security analysis of DPS to the security analysis of relativistic protocols \cite{Molotkov2011,Molotkov2012,Radchenko2014,Kravtsov2018}. The ingredients and overall structure of our security proof are sketched in \cref{fig:proofingrerdients}. This timing condition plays a vital role throughout our security proof, particularly in constructing the squashing map. Remarkably, we can show that this condition is not just a technical requirement for our proof technique but an inherent feature of the DPS protocol, without which the protocol would be insecure.

In addition to a general security proof, our analysis reveals new insights into the power of different classes of attacks for DPS QKD. For the vast majority of QKD protocols (which have a full security proof), general attacks are not stronger than collective attacks, a restricted class of attacks where the adversary has to act identically and independently in each round, thus cannot exploit any correlations between rounds. Although there are a few exceptions, see \cite{Thinh2016,Sandfuchs2023}, they are very limited in number.
Upon comparing our findings on the DPS protocol with existing attacks, we demonstrate that it serves as a new example of a protocol where general attacks indeed surpass collective attacks in strength. This result enhances our understanding of the security characteristics of QKD protocols that do not have iid structure, as well as the limits of current proof techniques.

Lastly, we note that the techniques presented in this work are also of interest for QKD protocols other than DPS QKD. Since many of our tools concern themselves with relativistic constraints, this leads to a natural connection to relativistic QKD protocols \cite{Molotkov2011,Molotkov2012,Radchenko2014,Kravtsov2018}. These protocols have been developed separately of DPS QKD and are therefore of independent interest. Using our techniques, we are able to derive a full security proof for such relativistic protocols, including finite size effects, thus broadening the scope of our work.

In summary, in this paper, we give a security proof for DPS QKD and relativistic QKD against general attacks, including finite-size effects. This proof combines concepts from different areas, namely quantum information theory, quantum optics, and relativistic principles. To make it accessible for readers with different scientific backgrounds, we first introduce all necessary concepts from these areas in \cref{sec:preliminaries}. With these concepts at hand, we can then introduce relativistic QKD protocols and prove their security in \cref{sec:rel_qkd}. In \cref{sec:SecurityDPS}, we explain the DPS QKD protocol and show how to reduce its security to that of relativistic QKD protocols. Some aspects of these proofs deserve a more in-depth discussion, in particular the assumptions we use, which is provided in \cref{sec:Discussion}. In \cref{sec:outlook}, we conclude by providing some perspective on how our techniques can be used or modified for security proofs of related protocols.

% !TeX spellcheck = en_GB
\section{Preliminaries and techniques}
\label{sec:preliminaries}

In this section we cover the necessary background knowledge that is required to understand the security proofs. As hinted at in the introduction, there are three central ingredients: The entropy accumulation theorem, causality, and the squashing technique. In the following, we will cover these topics in that order. The notation and some basic definitions that are used throughout the section are listed in \cref{tab:notation}. The more technical definitions can be found in \cref{sec:tech_defs}.

\begin{table}[t]
    \centering
    \begin{tabular}{c|l}
        \hline
        Symbol & Definition \\ \hline
        $\mathcal{S}(A)$ & Density operators on the system $A$ \\
        $\mathcal{S}_\leq(A)$ & Sub-normalised density operators on the system $A$ \\
        $\mathcal{I}_A$ & The identity channel on system $A$ \\
        $\mathbb{P}_\mathcal{X}$ & Probability distributions over the alphabet $\mathcal{X}$ \\
        $\| M \|_1$ & Trace norm of $M$ \\
        $M^*$ & The adjoint of $M$ \\
        $A^n$ & Concatenation of the systems $A_1 \ldots A_n$ \\
        $\log(x)$ & The logarithm of $x$ to base 2 \\
        $\oplus$ & Addition modulo 2 \\
        $[n]$ & The set $\{1, 2, \ldots, n\}$  \\
        $\Omega^c$ & The complement of the set $\Omega$ \\
        \hline
    \end{tabular}
    \caption{\label{tab:notation}Various symbols and their definitions}
\end{table}

%\input{Sections/Notation}

% !TeX spellcheck = en_GB
\subsection{Security of quantum key distribution}
\label{sec:qkd_security}

The security proof of any quantum key distribution protocol has to guarantee that the resulting key can be used in any application, for example in an encryption scheme. This is called \emph{composable security} \cite{Maurer2011,Portmann2022}, and achieving it boils down to deriving a security definition that ensures the security statement holds in any context. In this section, we explain the composable security definition we use in this work that goes back to \cite{Renner2008}, including notions of correctness, secrecy, and completeness, and discuss what a security proof must entail to meet it.

The general idea of the security definition is that we aim to quantify how far the actual key resource whose security we want to prove is from an ideal key resource. The goal of a QKD protocol is for two spatially distant parties (called Alice and Bob) to establish a shared secret key, hence the ideal key resource should fulfil two properties: (i) the resulting key has to be the same for Alice and Bob, and (ii)
an adversary must not have any knowledge of it. A secure QKD protocol will either produce a key that fulfils these properties or abort. This is captured by the following definition:

\begin{defn}
	\label{def:QKDsecurity}
	Consider a QKD protocol which can either produce a key of length $l$ or abort, and let $\rho_{K_A^l K_B^l E}$ be the final quantum state at the end of the protocol, where $K_A^l$ and $K_B^l$ are Alice's and Bob's version of the final key, respectively, and $E$ is the quantum system that contains all knowledge available to an adversary Eve. The protocol is said to be $\varepsilon^\mathrm{cor}$-correct, $\varepsilon^\mathrm{sec}$-secret, and $\varepsilon^\mathrm{comp}$-complete if the following holds:
	\begin{enumerate}
		\item \emph{Correctness:} For any implementation of the protocol and any behaviour of the adversary, 
			\begin{equation}
				\Pr[K_A^l\neq K_B^l \wedge  \mathrm{accept} ] \le\varepsilon^\mathrm{cor}.
			\end{equation}
		\item \emph{Secrecy:} For any implementation of the protocol and any behaviour of the adversary, 
			\begin{equation}
				\frac{1}{2} \left\|(\rho_{\wedge\Omega})_{K_A^l E}-\tau_{K_A^l}\otimes (\rho_{\wedge\Omega})_E\right\|_1\le\varepsilon^\mathrm{sec},
			\end{equation}
		where $\rho_{\wedge\Omega}$ is the sub-normalized state after running the protocol conditioned on the event $\Omega$ of not aborting, and $\tau_{K_A^l}=\frac{1}{2^{l}}\sum_k\ketbra{k}{k}_{K_A^l}$ is the maximally mixed state on the system $K_A^l$ (i.e., a uniformly random key for Alice).
		\item \emph{Completeness:} There exists an honest implementation of the protocol such that
			\begin{equation}
				\Pr[\mathrm{abort}]\le\varepsilon^\mathrm{comp}.
			\end{equation}
	\end{enumerate}
\end{defn}

Note that in the above definition, correctness and secrecy must be fulfilled for any behaviour of the adversary. These conditions ensure that Alice and Bob's probability of getting different or insecure keys without detecting it (i.e., without the protocol aborting) is low. They are often summarized to a single condition called \emph{soundness}:

%\begin{defn}
%	\label{def:soundness}
%	Consider a QKD protocol which can either produce a key of length $l$ or abort, and let $\rho_{K_A^l K_B^l E}$ be the final quantum state at the end of the protocol, where $K_A^l$ and $K_B^l$ are Alice's and Bob's version of the final key, respectively, and $E$ is the quantum system that contains all knowledge available to an adversary Eve. The protocol is said to be $\varepsilon^\mathrm{snd}$-sound if
%		\begin{equation}
%			\Pr[\mathrm{accept}]\ \frac{1}{2}\left\|(\rho_{|\Omega})_{K_A^lK_B^lE}-\tau_{K_A^l K_B^l}\otimes(\rho_{|\Omega})_E\right\|_1\le\varepsilon^\mathrm{snd},
%		\end{equation}
%	where $(\rho_{|\Omega})$ is the normalized state after running the protocol conditioned on the event $\Omega$ of not aborting, and $\tau_{K_A^l K_B^l}=\frac{1}{2^{l}}\sum_k\ketbra{kk}{kk}_{K_A^l K_B^l}$ is the maximally mixed state on the system $K_A^l$ (i.e., an identical pair of uniformly random keys for Alice and Bob).
%\end{defn}
% An equivalent way to formulate \cref{def:soundness} is via subnormalised states (see \cref{def:states}), which we also use in this work:
\begin{defn}
	\label{def:soundness_subnorm}
	Consider a QKD protocol which can either produce a key of length $l$ or abort, and let $\rho_{K_A^l K_KE}$ be the final quantum state at the end of the protocol, where $K_A^l$ and $K_B^l$ are Alice's and Bob's version of the final key, respectively, and $E$ is the quantum system that contains all knowledge available to an adversary Eve. The protocol is said to be $\varepsilon^\mathrm{snd}$-sound if
	\begin{equation}
		\frac{1}{2}\left\|(\rho_{\wedge\Omega})_{K_A^l K_B^l E}-\tau_{K_A^l K_B^l}\otimes(\rho_{\wedge\Omega})_E\right\|_1\le\varepsilon^\mathrm{snd},
	\end{equation}
	where $\rho_{\wedge\Omega}$ is the sub-normalized state after running the protocol conditioned on the event $\Omega$ of not aborting, and $\tau_{K_A^l K_B^l}=\frac{1}{2^{l}}\sum_k\ketbra{kk}{kk}_{K_A^l K_B^l}$ is the maximally mixed state on the system $K_A^l K_B^l$ (i.e., an identical pair of uniformly random keys for Alice and Bob).
\end{defn}
It is straightforward to show that if a protocol is $\varepsilon^\mathrm{cor}$-correct and $\varepsilon^\mathrm{sec}$-secret, then it is $\varepsilon^\mathrm{snd}$-sound with $\varepsilon^\mathrm{snd}=\varepsilon^\mathrm{cor}+\varepsilon^\mathrm{sec}$ (see, for example, \cite{Portmann2022}). It is possible to either show correctness and secrecy separately or to show soundness directly. For the differential phase shift protocol, we choose to show soundness directly.
In contrast to soundness, completeness is concerned only with the honest implementation, i.e., the case where the adversary is not trying to corrupt the execution of the protocol. For instance, a protocol that always aborts fulfils the first two conditions of the definition, but it is not a useful protocol. These kinds of protocols are excluded by imposing the completeness condition.

To prove that a QKD protocol fulfils the conditions in \cref{def:QKDsecurity}, we typically employ two-universal hash functions and randomness extractors (see, for example, the protocol described in \cref{subsec:DPSprotocol}). Here, we give a rough sketch of what a security proof entails and briefly recall the definitions of the required primitives. In \cref{sec:security_general,sec:security_rel_qkd,sec:security_proof_dps}, you can find a detailed security proof.

\subsubsection*{Completeness} To show completeness, one has to show that there exists an honest implementation of the protocol such that it aborts with low probability. This is usually straightforward to show as the honest behaviour typically has an IID structure. As long as we allow for enough tolerance in the parameter estimation step, one can choose the amount of resources used for the error correction step such that the probability of aborting is low (see \cref{sec:security_general}).

\subsubsection*{Correctness} Showing correctness means deriving a bound on the probability that the error-corrected strings are not equal but the protocol does not abort. In case the error-correction procedure includes a step where Bob checks whether his guess of Alice's string is correct, this is also straightforward to show. The checking step can be implemented using two-universal hash functions:
	\begin{defn}[Two-universal hash function]
		\label{def:twounivhash}
		Let $\mathcal{F}$ be a family of hash functions between sets $\mathcal{X}$ and $\mathcal{Z}$. We call $\mathcal{F}$ two-universal if for all $x,x'\in\mathcal{X}$ with $x\neq x'$ it holds that
		\begin{equation}
			\Pr_{f\in\mathcal{F}}\left[f(x)=f(x')\right]\le\frac{1}{|\mathcal{Z}|},
		\end{equation}
		where $f\in\mathcal{F}$ is chosen uniformly at random.
	\end{defn}

Alice and Bob can hence choose a hash function $f\in\mathcal{F}$ and compare the outputs of Alice's key and Bob's guess of her key. If their bit strings are not equal, they will detect it with probability $1-1/|\mathcal{Z}|$, which means that by choosing the size of the output set $\mathcal{Z}$ one can ensure that this probability is high. This checking step is independent of the actual error correction procedure that is employed, hence it allows us to decouple the proof of correctness from all properties of the error correction step (except its output).

\subsubsection*{Secrecy} Showing secrecy is the most difficult part of the security proof, as one has to take into account any possible behaviour of the adversary. Secrecy is ensured in the last step of the protocol, privacy amplification. A possible procedure to implement this step is again based on two-universal hashing \cite{Renner2008}: As in the checking step after the error correction procedure, Alice and Bob choose a function from a family of two-universal hash functions and apply it to their respective strings. The following lemma (taken from \cite{Tomamichel2017}) then ensures that the resulting key fulfils the properties described in \cref{def:QKDsecurity}:

%This property can be ensured by applying a so-called \emph{quantum-proof strong randomness extractor} to the bit strings Alice and Bob hold after the error correction step:
%\begin{defn}
%	A quantum-proof strong $(k,\varepsilon)$-randomness extractor is a function $\mathsf{Ext}:\{0,1\}^n\times\{0,1\}^d\to\{0,1\}^m$ if for all classical-quantum states $\rho_{XE}$ (where $X\in\{0,1\}^n$ is a classical random variable) with $H_\mathrm{min}(X|E)\ge k$ and a uniformly random seed $Y\in\{0,1\}^d$, it holds that
%		\begin{equation}
%			\frac{1}{2}\left\|\rho_{\mathsf{Ext}(X,Y)YE}-\rho_U\otimes\rho_Y\otimes\rho_E\right\|_1\le\varepsilon.
%		\end{equation}
%\end{defn}
%The above definition is given in terms of the non-smooth min-entropy, but it can be modified to only require a bound on the smooth min-entropy.

\begin{lem}[Quantum leftover hashing]
	\label{lem:leftoverhashing}
	Let $\rho_{f(X)FE}\in\mathcal{S}_\le(ZFE)$ be the (sub-normalized) state after applying a function $f$, randomly chosen from a family of two-universal hash functions $\mathcal{F}$ from $\mathcal{X}$ to $\mathcal{Z}$, to the bit string $X$. Then, for every $\varepsilon>0$ it holds that 
	\begin{equation}
		\frac{1}{2}\left\|\rho_{f(X)FE}-\tau_Z\otimes\rho_{FE}\right\|_1\le 2\varepsilon+2^{-\frac{1}{2}\left(H_\mathrm{min}^\varepsilon(X|E)-l+2\right)},
	\end{equation}
	where $l=|\mathcal{Z}|$, $\tau_Z$ is the maximally mixed state on $Z$, and $F$ is the register that holds the choice of the hash function.
\end{lem}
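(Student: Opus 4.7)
The plan is to prove this by combining a smoothing argument with the unsmoothed quantum leftover hash lemma, itself a Hilbert--Schmidt estimate exploiting two-universality. First, I would invoke the definition of the smooth min-entropy to obtain a sub-normalised state $\tilde\rho_{XE}\in\mathcal{S}_\le(XE)$ satisfying $\tfrac12\|\rho_{XE}-\tilde\rho_{XE}\|_1\le\varepsilon$ and $H_\mathrm{min}(X|E)_{\tilde\rho}\ge H_\mathrm{min}^\varepsilon(X|E)_\rho$. Sampling $f\in\mathcal{F}$ into a classical register $F$ and replacing $X$ by $f(X)$ is a CPTP map, so contractivity of the trace norm preserves this $\varepsilon$-closeness on the $f(X)FE$ systems and (by tracing out $f(X)$) on $FE$. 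A triangle inequality then gives
\begin{equation}
\tfrac{1}{2}\|\rho_{f(X)FE}-\tau_Z\otimes\rho_{FE}\|_1\le 2\varepsilon+\tfrac{1}{2}\|\tilde\rho_{f(X)FE}-\tau_Z\otimes\tilde\rho_{FE}\|_1,
\end{equation}
reducing the task to the unsmoothed case on $\tilde\rho$.

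For the remaining term I would use a weighted Cauchy--Schwarz inequality of the form $\|M\|_1\le\sqrt{|\mathcal{Z}|}\,\|\sigma_E^{-1/4}M\sigma_E^{-1/4}\|_2$ on each block indexed by $f$, where $\sigma_E$ is a state on $E$ achieving $H:=H_\mathrm{min}(X|E)_{\tilde\rho}$ (so that $\tilde\rho^x_E\le 2^{-H}\sigma_E$ for each $x$), and then apply Jensen's inequality to pull the expectation over $F$ inside the square root. Expanding the squared Hilbert--Schmidt norm, the contribution of ``collision pairs'' $x\ne x'$ with $f(x)=f(x')$ is bounded by $1/|\mathcal{Z}|$ via two-universality, which (after using $\sum_x\tilde\rho^x_E=\tilde\rho_E$) exactly cancels the $\tau_Z\otimes\tilde\rho_E$ cross-terms, leaving only the diagonal contribution $\sum_x\mathrm{tr}[\tilde\rho^x_E\,\sigma_E^{-1/2}\,\tilde\rho^x_E\,\sigma_E^{-1/2}]$. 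This is in turn bounded by $2^{-H}\mathrm{tr}[\tilde\rho_E]\le 2^{-H}$ via the operator inequality $\tilde\rho^x_E\le 2^{-H}\sigma_E$. Taking a square root and multiplying by $\tfrac{1}{2}\sqrt{|\mathcal{Z}|}$ yields the exponential term $\tfrac{1}{2}\cdot 2^{-\frac{1}{2}(H-l)}=2^{-\frac{1}{2}(H-l+2)}$, and the inequality $H\ge H_\mathrm{min}^\varepsilon(X|E)_\rho$ completes the bound.

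The main obstacle is the second step: the weighting by $\sigma_E^{\pm 1/4}$ has to be chosen precisely so that both the Cauchy--Schwarz estimate and the bound on the diagonal sum come out in terms of the conditional min-entropy $H_\mathrm{min}(X|E)_{\tilde\rho}$, rather than in terms of $\dim E$ (which would be useless for QKD, where Eve's system may be arbitrarily large). Choosing instead $\sigma_E=\mathds{1}_E/\dim E$ would give a classical-style bound that fails in the quantum setting; the non-trivial insight is that the state achieving the min-entropy is the correct weight. The remainder of the argument---smoothing, the triangle inequality, and contractivity of the trace distance under CPTP maps---is routine bookkeeping.
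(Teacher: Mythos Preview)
The paper does not give its own proof of this lemma; it is quoted as a known result from \cite{Tomamichel2017} (the argument itself goes back to Renner's thesis). Your sketch is the standard proof found in that literature---smoothing via the purified-distance $\varepsilon$-ball followed by the weighted Hilbert--Schmidt estimate with weight $\mathds{1}_Z\otimes\sigma_E$, where $\sigma_E$ is the optimiser in $H_\mathrm{min}(X|E)_{\tilde\rho}$---and it is correct. One cosmetic remark: the paper's statement writes ``$l=|\mathcal{Z}|$'', but the formula only makes sense with $l=\log|\mathcal{Z}|$, which is indeed how you use it in your final arithmetic $\tfrac12\sqrt{|\mathcal{Z}|}\cdot 2^{-H/2}=2^{-\frac12(H-l+2)}$.
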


Note that the smooth min-entropy $H_{\min}^\varepsilon(X|E)$ is evaluated on the state $\rho_{XE}$, i.e., the state of the
system before applying the hash function. \Cref{lem:leftoverhashing} then states that if there is a sufficient amount of initial smooth min-entropy, applying a random hash function results in a state that is almost product with the adversary’s information. This means that to prove secrecy we need to find a sufficiently large lower bound on the smooth min-entropy which holds for general attacks of the adversary.

There are several techniques for finding such a bound. The typical strategy in a security proof is to find a bound that is valid if the adversary is limited to collective attacks, i.e., they apply the same attack in every round, and the individual rounds are uncorrelated. From this, a bound that is valid for general attacks can be inferred via techniques based on the quantum de Finetti theorem \cite{Renner2008,Christandl2009} or the entropy accumulation theorem (EAT) \cite{Dupuis2020,George2022}. From the bound on the smooth min-entropy we can then obtain a lower bound on the \emph{key rate}
	\begin{equation}
		r=\frac{l}{n},
	\end{equation}
where $l$ is the length of the final key, and $n$ is the number of rounds via \cref{lem:leftoverhashing} (more details about this can be found in \cref{sec:soundness}). It is often easier to calculate this bound in the asymptotic case where the number of rounds $n$ goes to infinity. However, for a full security proof and to obtain meaningful bounds for practical protocols, it is necessary to also include finite-size effects, which occur because the protocol consists only of a finite number of rounds.

The technique we employ in this work is the EAT in its recently developed generalised form \cite{Metger2022,Metger2023}. Apart from guaranteeing security against general attacks it allows us to take into account finite-size effects. The EAT relates the smooth min-entropy of $n$ rounds in the case of general attacks to the von Neumann entropy of a single round in the case of collective attacks, which, in general, is much easier to bound. The general setting in which we can apply the generalised EAT is depicted in \cref{fig:dps_setup_testing}. Before we can state the theorem, we need to introduce some definitions that describe the setup to which it applies, in particular, the notion of EAT channels. For the technical definitions we refer to \cref{sec:tech_defs}.
 
% For any initial state $\rho_{E_0}^\mathrm{in}$ and any quantum channels $\mathcal{M}_i$, the theorem then states that
%	\begin{equation}
%		H_{\min}^\varepsilon(A^n|E_n)_{\mathcal{M}_n\circ\dots\circ\mathcal{M}_1(\rho_{E_0}^\mathrm{in})}\ge nh-\mathcal{O}(\sqrt{n}),
%	\end{equation}
%where $A^n=A_1A_2\dots A_n$ and $h\le \inf_{\omega}H(A_i|E_i\tilde{E}_{i-1})_{\mathcal{M}_i(\omega)}$ for all rounds $i$. Here, $\tilde{E}_{i-1}\equiv E_{i-1}$ is a purifying system and the infimum is taken over all states $\omega$ on $\tilde{E}_{i-1}E_{i-1}$. 
%For more details on the technical aspects of the theorem we refer to \cref{sec:EAT} and \cite{Metger2022}. The EAT reduces the problem to deriving a lower bound $h$ on the von Neumann entropy, which is explicitly worked out in \cref{sec:security_rel_qkd}.

%In this section we will reproduce the statement of the generalised EAT \cite{Metger2022}. For this we first need to define the setup to which the EAT applies, namely the notion of EAT channels.
\begin{defn}[EAT channel]
\label{def:eat_channel}
    Let $\{ \mathcal{M}_i : \mathcal{S}(R_{i-1}E_{i-1}) \rightarrow \mathcal{S}(R_iE_iA_iC_i) \}_{i \in [n]}$ be a sequence of CPTP maps, where $C_i$ are classical registers with common alphabet $\mathcal{C}$. We call the channels $\{\mathcal{M}_i\}_i$ \emph{EAT channels} if they satisfy the following conditions:
	\begin{enumerate}
        \item There exists a CPTP map $\mathcal{R}_i : \mathcal{S}(E_{i-1}) \rightarrow \mathcal{S}(E_i)$ such that $\tr_{A_iR_iC_i} \circ \mathcal{M}_i = \mathcal{R}_i \circ \tr_{R_{i-1}}$.
        \item Let $\mathcal{M}_i' = \tr_{C_i} \circ \mathcal{M}_i$. Then there exists a CPTP map $\mathcal{T}: \mathcal{S}(A^nE_n) \rightarrow \mathcal{S}(C^nA^nE_n)$ of the form
		\begin{align}
			\mathcal{T}(\omega_{A^nE_n}) = \sum_{y \in \mathcal{Y}, z \in \mathcal{Z}} (\Pi_{A^n}^{(y)} \otimes \Pi_{E_n}^{(z)}) \omega_{A^nE_n} (\Pi_{A^n}^{(y)} \otimes \Pi_{E_n}^{(z)}) \otimes \ketbra{r(y,z)}{r(y,z)}_{C^n},
		\end{align}
		such that $\mathcal{M}_n \circ \ldots \circ \mathcal{M}_1 = \mathcal{T} \circ \mathcal{M}_n' \circ \ldots \circ \mathcal{M}_1'$. The operators $\{\Pi_{A^n}^{(y)}\}_y$ and $\{\Pi_{E_n}^{(z)}\}_z$ are mutually orthogonal projectors and $r: \mathcal{Y}\times\mathcal{Z} \rightarrow \mathcal{C}$ is a (deterministic) function.
	\end{enumerate}
\end{defn}

The first of these conditions states that the map $\mathcal{M}_i$ does not signal from $R_{i-1}$ to $E_i$. This non-signalling constraint is required as part of the EAT and is distinct from the other non-signalling constraints that will arise in the analysis of the protocol. For a more detailed discussion of non-signalling maps we refer to \cref{sec:causality}. We note that the second condition above is always satisfied if $C_i$ is computed from classical information in $A^n$ and $E_n$. A diagram of the channels is shown in \cref{fig:dps_setup_testing}.

\begin{figure}[t]
	\centering
	\begin{tikzpicture}[
		boxnode/.style={rectangle, draw=black, minimum height=1.5cm, minimum width=0.8cm, fill=myblue!30},
	]
		% Nodes
		\node (rho) at (-2.5, 0) {$\rho_{E_0R_0}^\mathrm{in}$};
		\node[boxnode] (M1) at (0, 0) {$\mathcal{M}_1$};
		\node[boxnode] (M2) at (2, 0) {$\mathcal{M}_2$};
		\node (ellipsis) at (4, 0) {$\ldots$};
		\node[boxnode] (Mn) at (6, 0) {$\mathcal{M}_n$};
		\node (end) at (7.5, 0) {};
		
		% Arrows
		\draw[->,>=stealth] (rho.east) -- ([xshift=-1.0cm,yshift=0.4cm] M1.west) -- node[above]{$E_0$} ([yshift=0.4cm] M1.west);
		\draw[->,>=stealth] (rho.east) -- ([xshift=-1.0cm,yshift=-0.4cm] M1.west) -- node[above]{$R_0$} ([yshift=-0.4cm] M1.west);
		
		\draw[->,>=stealth] ([yshift=0.4cm] M1.east) -- node[above]{$E_1$} ([yshift=0.4cm] M2.west);
		\draw[->,>=stealth] ([yshift=-0.4cm] M1.east) -- node[above]{$R_1$} ([yshift=-0.4cm] M2.west);
		
		\draw[->,>=stealth] ([yshift=0.4cm] M2.east) -- node[above]{$E_2$} ([yshift=0.4cm] ellipsis.west);
		\draw[->,>=stealth] ([yshift=-0.4cm] M2.east) -- node[above]{$R_2$} ([yshift=-0.4cm] ellipsis.west);
		
		\draw[->,>=stealth] ([yshift=0.4cm] ellipsis.east) -- node[above]{$E_{n-1}$} ([yshift=0.4cm] Mn.west);
		\draw[->,>=stealth] ([yshift=-0.4cm] ellipsis.east) -- node[above]{$R_{n-1}$} ([yshift=-0.4cm] Mn.west);
		
		\draw[->,>=stealth] ([yshift=0.4cm] Mn.east) -- node[above]{$E_n$} ([yshift=0.4cm] end.west);
		\draw[->,>=stealth] ([yshift=-0.4cm] Mn.east) -- node[above]{$R_n$} ([yshift=-0.4cm] end.west);
		
		% Outputs
		\node (A1) at ([xshift=-0.5cm,yshift=-1.0cm] M1.south) {$A_1$};
		\node (C1) at ([xshift=+0.5cm,yshift=-1.0cm] M1.south) {$C_1$};
		\draw[->,>=stealth] ([xshift=-0.2cm] M1.south) -- (A1.north);
		\draw[->,>=stealth] ([xshift=+0.2cm] M1.south) -- (C1.north);
		
		\node (A2) at ([xshift=-0.5cm,yshift=-1.0cm] M2.south) {$A_2$};
		\node (C2) at ([xshift=+0.5cm,yshift=-1.0cm] M2.south) {$C_2$};
		\draw[->,>=stealth] ([xshift=-0.2cm] M2.south) -- (A2.north);
		\draw[->,>=stealth] ([xshift=+0.2cm] M2.south) -- (C2.north);
		
		\node (An) at ([xshift=-0.5cm,yshift=-1.0cm] Mn.south) {$A_n$};
		\node (Cn) at ([xshift=+0.5cm,yshift=-1.0cm] Mn.south) {$C_n$};
		\draw[->,>=stealth] ([xshift=-0.2cm] Mn.south) -- (An.north);
		\draw[->,>=stealth] ([xshift=+0.2cm] Mn.south) -- (Cn.north);
	\end{tikzpicture}
	\caption{Setup of the generalised EAT with testing. In each round $i$ the channels take quantum inputs $E_{i-1}$ and $R_{i-1}$ and produce classical outputs $A_i$ and $C_i$. The registers $C_i$ are used to collect statistics to constrain the set of allowed channels $\{\mathcal{M}_i\}_i$.}
	\label{fig:dps_setup_testing}
\end{figure}
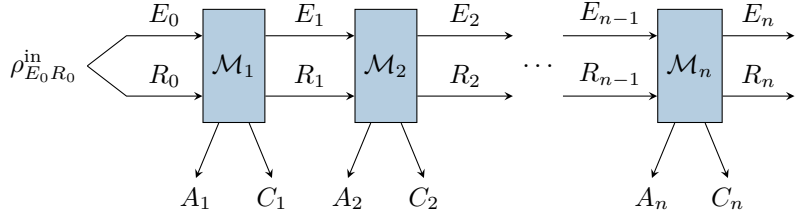

\begin{defn}[Min-tradeoff function] \label{defn:min_tradeoff} Let $\{\mathcal{M}_i\}_i$ be a sequence of EAT channels. For $i \in [n]$ and $q \in \mathbb{P}_\mathcal{C}$ we define the set $\Sigma_i(q)$ of all states that are compatible with the statistics $q$ under the map $\mathcal{M}_i$:
	\begin{align}
        \Sigma_i(q) = \{ \nu_{C_iA_iR_iE_i\tilde{E}_{i-1}} = \mathcal{M}_i(\omega_{R_{i-1}E_{i-1}\tilde{E}_{i-1}}) \; | \; \omega \in \mathcal{S}(R_{i-1}E_{i-1}\tilde{E}_{i-1}) \; \mathrm{and} \; \nu_{C_i} = q \},
	\end{align}
	where $\nu_{C_i}$ represents the distribution over $\mathcal{C}$ given by $\mathrm{Pr}[c]=\bra{c}\nu_{C_i}\ket{c}$ and $\tilde{E}_{i-1}$ is a system isomorphic to $R_{i-1}E_{i-1}$. 
	An affine function $f: \mathbb{P}_\mathcal{C} \rightarrow \mathbb{R}$ is called a \emph{min-tradeoff function} for $\{\mathcal{M}_i\}_i$ if it satisfies
	\begin{align}
		f(q) \leq \inf_{\nu \in \Sigma_i(q)} H(A_i|E_i\tilde{E}_{i-1})_\nu \qquad \forall q \in \mathbb{P}_\mathcal{C}, i \in [n].
	\end{align}
\end{defn}

The second-order corrections in the EAT will depend on some properties of the min-tradeoff function (see \cref{defn:min_tradeoff}), which are given in the following definition:
\begin{defn}[Min, Max and Var]
	Let $\{ \mathcal{M}_i \}_i$ be a sequence of EAT channels and let $f: \mathbb{P}_\mathcal{C} \rightarrow \mathbb{R}$ be an affine function. We define
	\begin{equation}
		\begin{aligned}
			\mathrm{Max}(f) &= \max_{q \in \mathbb{P}_\mathcal{C}} f(q), \\
			\mathrm{Min}_{\Sigma}(f) &= \min_{q: \Sigma(q) \neq \emptyset} f(q), \\
			\mathrm{Var}(f) &= \max_{q: \Sigma(q) \neq \emptyset} \left\{ \sum_{c \in \mathcal{C}} q(c)f(\delta_c)^2 - \left( \sum_{c \in \mathcal{C}} q(c)f(\delta_c) \right)^2 \right\},
		\end{aligned}
	\end{equation}
	where $\Sigma(q) = \bigcup_{i} \Sigma_i(q)$ and $\delta_c$ is the distribution with deterministic output $c$.
\end{defn}

\begin{defn}
	Let $\mathcal{C}$ be some finite alphabet and let $C^n \in \mathcal{C}^n$ for some $n \in \mathbb{N}$. Then $\mathrm{freq}(C^n) \in \mathbb{P}_\mathcal{C}$ is defined as the probability distribution given by:
	\begin{align}
		\mathrm{freq}(C^n)(c) = \frac{|\{i | C_i = c\}|}{n} \qquad \forall c \in \mathcal{C}.
	\end{align}
\end{defn}

With this in hand we are now able to state the theorem:
\begin{thm}[Generalised EAT \cite{Metger2022}] \label{thm:gen_EAT}
    Let $\{ \mathcal{M}_i \}_i$ be a sequence of EAT channels and let $f$ be a min-tradeoff function for those channels. Furthermore, let $\Omega \subseteq \mathcal{C}^n$ and $\rho_{A^nC^nR_nE_n} = \mathcal{M}_n \circ \ldots \circ \mathcal{M}_1 (\rho_{R_0E_0}^{\mathrm{in}})$ be the output state for some initial state $\rho_{R_0E_0}^{\mathrm{in}} \in \mathcal{S}(R_0E_0)$. Then for all $\alpha \in (1, 3/2)$ and $\varepsilon > 0$,
	\begin{equation}
		\begin{aligned}
			H_\mathrm{min}^{\varepsilon}(A^n|E_n)_{\rho_{|\Omega}}
			\geq n t - n \frac{\alpha-1}{2-\alpha}\frac{\ln(2)}{2}V^2 - \frac{g(\varepsilon) + \alpha \log\left(\frac{1}{\rho[\Omega]}\right)}{\alpha-1} - n \left(\frac{\alpha-1}{2-\alpha}\right)^2 K(\alpha),
		\end{aligned}
	\end{equation}
	where $\rho[\Omega]$ is the probability of observing the event $\Omega$, and
	\begin{equation}
		\begin{aligned}
			t &= \min_{c^n \in \Omega} f(\mathrm{freq}(c^n)), \\
			g(\varepsilon) &= \log \frac{1}{1 - \sqrt{1-\varepsilon^2}}, \\
			V &= \log(2 d_A^2 + 1) + \sqrt{2 + \mathrm{Var}(f)}, \\
			K(\alpha) &= \frac{(2-\alpha)^3}{6(3-2\alpha)^3 \ln 2} 2^{\frac{\alpha-1}{2-\alpha}(2\log d_A + \mathrm{Max}(f) - \mathrm{Min}_\Sigma(f))} \ln^3 \left(2^{2\log d_A + \mathrm{Max}(f) - \mathrm{Min}_\Sigma(f)} + e^2 \right),
		\end{aligned}
	\end{equation}
	with $d_A=\max_i \dim (A_i)$.
\end{thm}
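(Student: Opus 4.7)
The strategy is standard for entropy accumulation: I would first pass from the smooth min-entropy to a sandwiched Rényi entropy, then establish a chain rule that decomposes the global Rényi entropy into a sum of per-round contributions, lower-bound each single-round term via the min-tradeoff function, and finally handle the event-conditioning and the testing registers. Concretely, using the standard relation $H_\mathrm{min}^\varepsilon(A|E)_\sigma \geq H_\alpha^\uparrow(A|E)_\sigma - g(\varepsilon)/(\alpha-1)$ valid for $\alpha$ slightly above $1$, together with the conversion from the sub-normalised state $\rho_{|\Omega}$ back to the unconditional state $\rho$, which contributes the $\alpha \log(1/\rho[\Omega])/(\alpha-1)$ term, the task reduces to lower-bounding $H_\alpha^\uparrow(A^n|E_n)_\rho$.

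The technical heart of the proof is a Rényi chain rule that, thanks to condition 1 of \cref{def:eat_channel}, splits the global entropy round by round. The non-signalling constraint $\tr_{A_iR_iC_i}\circ\mathcal{M}_i = \mathcal{R}_i \circ \tr_{R_{i-1}}$ allows the introduction of an auxiliary system $\tilde E_{i-1}$ isomorphic to $R_{i-1}E_{i-1}$, in the spirit of a purification, so that the per-round increment can be controlled by $\inf_\omega H(A_i|E_i\tilde E_{i-1})_{\mathcal{M}_i(\omega)}$. Iterating the single-round chain rule yields
\begin{equation}
H_\alpha^\uparrow(A^n|E_n)_\rho \geq \sum_{i=1}^n \inf_{\omega} H(A_i|E_i\tilde E_{i-1})_{\mathcal{M}_i(\omega)} - n \frac{\alpha-1}{2-\alpha}\frac{\ln 2}{2}V^2 - n \Bigl(\frac{\alpha-1}{2-\alpha}\Bigr)^2 K(\alpha),
\end{equation}
up to Taylor-expansion remainders. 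Establishing this chain rule under the weaker non-signalling condition of \cref{def:eat_channel}, rather than the strict Markov chain condition of the original EAT, is the main obstacle, and I would follow the auxiliary-system replacement argument developed in \cite{Metger2022,Metger2022a}, which is the technical innovation that makes the generalised EAT go through.

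Condition 2 of \cref{def:eat_channel} enters next: since each $C_i$ is obtained from projective measurements on $A^n$ and $E_n$, the event $\Omega \subseteq \mathcal{C}^n$ is a deterministic function of those registers, and conditioning on $\Omega$ can be absorbed into the smoothing step of the first paragraph without enlarging the side information. By the definition of the min-tradeoff function, each per-round infimum then satisfies $\inf_\omega H(A_i|E_i\tilde E_{i-1})_{\mathcal{M}_i(\omega)} \geq f(\nu_{C_i})$ whenever the output of $\mathcal{M}_i$ is compatible with the distribution $\nu_{C_i}$. Affinity of $f$ converts the sum into a function of the empirical frequency, giving
\begin{equation}
\sum_{i=1}^n f(\nu_{C_i}) = n\, f\bigl(\tfrac{1}{n}\textstyle\sum_i \nu_{C_i}\bigr) \geq n \min_{c^n \in \Omega} f(\mathrm{freq}(c^n)) = nt
\end{equation}
on the event $\Omega$, which supplies the leading $nt$ term.

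The remaining second- and higher-order corrections come from the Taylor estimates used in the chain rule: the coefficient $V = \log(2d_A^2+1) + \sqrt{2+\mathrm{Var}(f)}$ arises from bounding the second derivative of $H_\alpha^\uparrow$ by the dimension of $A_i$ combined with the variance of $f$ evaluated on the output statistics, while $K(\alpha)$ absorbs the cubic remainder in the expansion of sandwiched Rényi entropies; both are standard estimates appearing in \cite{Dupuis2020,Metger2022}. Assembling the smoothing step, the chain rule, the min-tradeoff lower bound, and these error terms yields the stated inequality for every $\alpha \in (1,3/2)$.
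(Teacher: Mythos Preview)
The paper does not actually prove this theorem: its entire proof reads ``See \cite{Metger2022}.'' Your proposal is therefore not at odds with the paper's approach---it is a sketch of the argument from the cited reference, which is precisely what the paper defers to. In that sense your outline is consistent with the paper, just far more detailed than what the paper itself provides.
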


\begin{proof}
	See \cite{Metger2022}.
\end{proof}

% !TeX spellcheck = en_GB
\subsection{Relativistic principles and causality}
\label{sec:causality}

Relativistic principles of causation prohibit signalling outside the future light-cone. In order to incorporate such principles into quantum protocols in a space-time, we must consider how non-signalling conditions can be formulated at the level of quantum operations. Consider a quantum operation (a completely positive and trace preserving linear map) $\mathcal{E}_{SR\rightarrow S'R'}: \mathcal{S}(SR) \rightarrow \mathcal{S}(S'R')$, where $S$, $R$, $S'$ and $R'$ are quantum systems of arbitrary (possibly infinite) dimensions. Suppose the input quantum system $S$ and the output system $R'$ are associated with spacelike separated locations. We would then desire that $S$ does not signal to $R'$. Operationally speaking, the choice of a local operation $\mathcal{M}_S: \mathcal{S}(S) \rightarrow \mathcal{S}(S)$ performed on the input $S$ of $\mathcal{E}_{SR\rightarrow S'R'}$ must never be detectable when accessing the system $R'$ alone. This is captured by the following definition:

\begin{defn}
\label{defn:nonsignalling}
  We say that $S$ does not signal to $R'$ in a linear CPTP map $\mathcal{E}_{SR\rightarrow S'R'}: \mathcal{S}(SR) \rightarrow \mathcal{S}(S'R')$ if and only if for all local operations $\mathcal{M}_S: \mathcal{S}(S) \rightarrow \mathcal{S}(S)$ on $S$, the following holds
  \begin{equation}
  \label{eq: nonsignalling1}
  \tr_{S'}  \circ \mathcal{E}_{SR\rightarrow S'R'} =  \tr_{S'}  \circ \mathcal{E}_{SR\rightarrow S'R'} \circ \big(\mathcal{M}_S\otimes \mathcal{I}_R \big).
  \end{equation}
\end{defn}

%\begin{figure}
   % \centering
  %  \include{Paper/figures/nonsignalling1.tex}
  %  \caption{Diagrammatic representation of definition~\ref{defn:nonsignalling}. $A$ does not signal to $D$ in the map $\mathcal{E}_{AB\rightarrow CD}$ iff for all local quantum operations $\mathcal{M}_A$ on $A$, the above diagrammatic equality holds.}
  %  \label{fig:nonsig1}
%\end{figure}

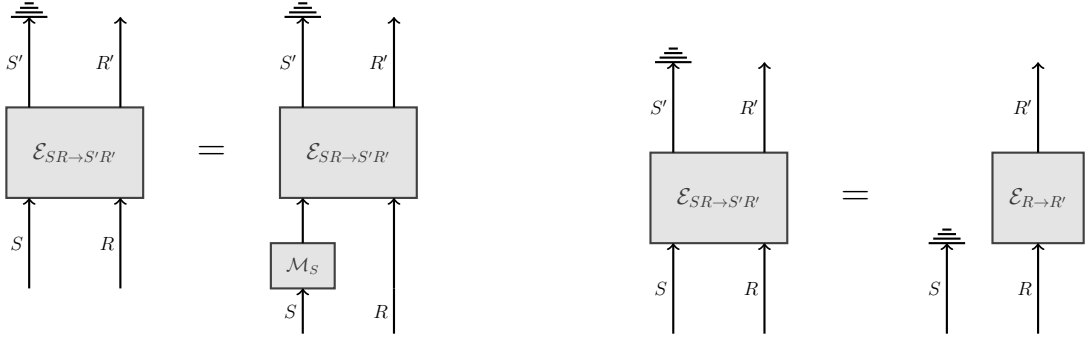
\begin{figure}[t]
    \centering
 \begin{subfigure}[b]{0.425\textwidth}
 	\centering
 	\begin{tikzpicture}[trace/.pic={\draw [thick](-0.4,0)--(0.4,0);\draw [thick](-0.3,0.1)--(0.3,0.1);\draw [thick](-0.2,0.2)--(0.2,0.2);\draw [thick](-0.1,0.3)--(0.1,0.3);}, scale=0.6, transform shape]
 
\draw[thick, fill=LightGray, opacity=0.75 ] (0,0) rectangle node[align=center]{\Large{$\mathcal{E}_{SR\rightarrow S'R'}$}} (3,2);

 \draw[thick,black,->] (0.5,-2)--node[anchor=east]{\large$S$}(0.5,0);  \draw[thick,black,->] (2.5,-2)--node[anchor=east]{\large$R$}(2.5,0); \draw[thick,black,->] (0.5,2)--node[anchor=east]{\large$S'$}(0.5,4); \draw[thick,black,->] (2.5,2)--node[anchor=east]{\large$R'$}(2.5,4);  \draw (0.5,4) pic {trace};

   \node[align=center, black] at (4.5,1) {\Huge{$\mathbf{=}$}};

\begin{scope}[xshift=-1cm]
	   \draw[thick, fill=LightGray, opacity=0.75 ] (7,0) rectangle node[align=center]{\Large{$\mathcal{E}_{SR\rightarrow S'R'}$}} (10,2);
% \draw[thick,black,->] (7.5,-2)--node[anchor=east]{$\mathbf{A}$}(7.5,0);  
\draw[thick,black,->] (7.5,-1)--(7.5,0);  
\draw[thick,black,->] (9.5,-2)--(9.5,0); \draw[thick,black,->] (7.5,2)--node[anchor=east]{\large$S'$}(7.5,4); \draw[thick,black,->] (9.5,2)--node[anchor=east]{\large$R'$}(9.5,4);  \draw (7.5,4) pic {trace};

   \draw[thick, fill=LightGray, opacity=0.75 ] (6.8,-2) rectangle node[align=center]{\large$\mathcal{M}_S$} (8.2,-1);
\draw[thick,black,->] (7.5,-3)--node[anchor=east]{\large$S$}(7.5,-2);
\draw[thick,black] (9.5,-3)--node[anchor=east]{\large$R$}(9.5,-2);
\end{scope}
\end{tikzpicture}
 \caption{Diagrammatic representation of \cref{eq: nonsignalling1}. The above equality must hold for all local maps $\mathcal{M}_S$ on $S$.}
 \end{subfigure}
\hfill
 \begin{subfigure}[b]{0.425\textwidth}
 	\centering
 	\begin{tikzpicture}[trace/.pic={\draw [thick](-0.4,0)--(0.4,0);\draw [thick](-0.3,0.1)--(0.3,0.1);\draw [thick](-0.2,0.2)--(0.2,0.2);\draw [thick](-0.1,0.3)--(0.1,0.3);}, scale=0.6, transform shape]
   
\draw[thick, fill=LightGray, opacity=0.75 ] (0,0) rectangle node[align=center]{\Large{$\mathcal{E}_{SR\rightarrow S'R'}$}} (3,2);

 \draw[thick,black,->] (0.5,-2)--node[anchor=east]{\large$S$}(0.5,0);  \draw[thick,black,->] (2.5,-2)--node[anchor=east]{\large$R$}(2.5,0); \draw[thick,black,->] (0.5,2)--node[anchor=east]{\large$S'$}(0.5,4); \draw[thick,black,->] (2.5,2)--node[anchor=east]{\large$R'$}(2.5,4);  \draw (0.5,4) pic {trace};

   \node[align=center, black] at (4.5,1) {\Huge{$\mathbf{=}$}};

\begin{scope}[xshift=-1cm]
	 \draw[thick,black,->] (7.5,-2)--node[anchor=east]{\large$S$}(7.5,0);  \draw[thick,black,->] (9.5,-2)--node[anchor=east]{\large$R$}(9.5,0); 
 \draw[thick,black,->] (9.5,2)--node[anchor=east]{\large$R'$}(9.5,4); 
\draw (7.5,0) pic {trace};

\draw[thick, fill=LightGray, opacity=0.75 ] (8.5,0) rectangle node[align=center]{\Large{$\mathcal{E}_{R\rightarrow R'}$}} (10.5,2);
\end{scope}

\end{tikzpicture}
  \caption{Diagrammatic representation of \cref{eq: nonsignalling2}. There must exist a quantum CPTP map $\mathcal{E}_{R\rightarrow R'}$ such that the above equality holds.}
 \end{subfigure}
\caption{\label{fig:nonsig2} Diagrammatic representation of two equivalent definitions of non-signalling in a quantum map. The ground symbol \ground\ denotes the trace operation.}
\end{figure}

Another natural way to define signalling would be to require that once we trace out $S'$ and only observe the output at $R'$, then we can also trace out the input at $S$ and only use the input at $R$. That is, there exists a quantum channel  $\mathcal{E}_{R\rightarrow R'}: \mathcal{S}(R)\rightarrow \mathcal{S}(R')$ such that 
\begin{equation}
\label{eq: nonsignalling2}
   \tr_{S'}  \circ \mathcal{E}_{SR\rightarrow S'R'}  = \tr_S \otimes \mathcal{E}_{R\rightarrow R'} .
\end{equation}
In fact, it turns out that the two definitions of signalling, \cref{eq: nonsignalling1} and \cref{eq: nonsignalling2} are equivalent \cite{Ormrod2023}.\footnote{While this result is stated only for unitary $\mathcal{E}_{SR\rightarrow S'R'}$ in \cite{Ormrod2023}, their proof applies to arbitrary quantum CPTP maps.}

The following lemma provides another equivalent condition to non-signalling in the CPTP map $\mathcal{E}_{SR\rightarrow S'R'}$, in terms of its Choi state $\mathcal{C}(\mathcal{E}_{SR\rightarrow S'R'})\in \mathcal{S}(\bar{S}\bar{R}S'R')$, in the case where $S$, $R$, $S'$ and $R'$ are finite dimensional quantum systems. The Choi state of a CP map on finite dimensional systems
 is defined as follows. 
\begin{equation}
\label{eq: choi}
    \mathcal{C}(\mathcal{E}_{SR\rightarrow S'R'})\coloneqq (\mathcal{I}_{\bar{S}\bar{R}}\otimes \mathcal{E}_{SR\rightarrow S'R'}) \ket{\Phi}\bra{\Phi}_{\bar{S}\bar{R}SR},
\end{equation}
where $\bar{S}$ and $\bar{R}$ have isomorphic state spaces to the systems $S$ and $R$, respectively, and $\ket{\Phi}_{\bar{S}\bar{R}SR}=\frac{1}{\sqrt{d_Sd_R}}\sum_{ij}\ket{ijij}_{\bar{S}\bar{R}SR}$ is the normalised maximally entangled state on the bi-partition $\bar{S}\bar{R}$ and $SR$ with respect to a chosen basis $\{\ket{i}\}_i$ of the isomorphic systems $S$ and $\bar{S}$, and the basis $\{\ket{j}\}$ of the isomorphic systems $R$ and $\bar{R}$. While a Choi representation for the infinite dimensional case can be defined, it does not correspond to a state, but to a sesquilinear positive-definite form \cite{Holevo2011}. In this paper, we will only require the finite-dimensional Choi representation which is captured by the Choi state of a CP map. Note however that the definitions of signalling defined above also apply to the infinite dimensional case.

We now state the lemma. It is based on the idea of encoding channel decomposition properties in the Choi state of the channel  which is commonly employed in the quantum causality literature (see for instance \cite{Chiribella2009, Araujo2015}). Here we formulate the lemma in terms the non-signalling constraint on channels which is of relevance to us.
\begin{lem}
\label{lem:choi}
    $S$ does not signal to $R'$ in a linear CPTP map $\mathcal{E}_{SR\rightarrow S'R'}: \mathcal{S}(SR) \rightarrow \mathcal{S}(S'R')$ on finite-dimensional quantum systems $S$, $R$, $S'$ and $R'$ if and only if
    \begin{equation}
        \label{eq: choi_condition}
        \tr_{S'}\left[\mathcal{C}(\mathcal{E}_{SR\rightarrow S'R'})\right]= \frac{\mathds{1}_{\bar{S}}}{d_{\bar{S}}}\otimes \tr_{\bar{S}S'}\left[\mathcal{C}(\mathcal{E}_{SR\rightarrow S'R'})\right],
    \end{equation}
    where $\mathcal{C}(\mathcal{E}_{SR\rightarrow S'R'})$ is the Choi state of the map $\mathcal{E}_{SR\rightarrow S'R'}$, given by \cref{eq: choi}.
\end{lem}
\begin{proof} See Appendix \ref{sec:proof_choi}.
\end{proof}

The above form of the non-signalling condition in terms of the Choi state \cref{eq: choi_condition} derives its usefulness from the fact that it no longer involves any quantifiers, in contrast to \cref{eq: nonsignalling1} and \cref{eq: nonsignalling2}. Furthermore, it is a linear constraint on the Choi state. Both these features are beneficial for conveniently encoding relativistic constraints within the numerical procedure of our QKD security proofs, as we will see in \cref{sec:security_rel_qkd}. 

It is important to note that while relativistic principles associated with a spacetime may motivate us to impose certain non-signalling conditions (e.g., between $S$ and $R'$ when they are spacelike separated), these conditions are independent of the spacetime locations of the systems involved and rely on the information-theoretic structure of the associated CPTP map.
Thus, such no-signalling conditions may be of interest even in scenarios where $S$ and $R'$ are timelike separated, but where we wish to restrict the information flow from $S$ to $R'$.

The above results are relevant for a single round of the cryptographic protocols that we consider in this paper. In order to prove security against general attacks for these protocols, we need to study non-signalling conditions at the level of the full attack channel of the adversary over multiple rounds. The following theorem will be important for this purpose, as it will enable us to decompose the full attack channel into the sequential form required for proving security through the generalised EAT (\cref{def:eat_channel}).

\begin{thm}
\label{thm:causality_sequence}
Consider a linear CPTP map $\mathcal{E}:\mathcal{S}(E_0S_1S_2...S_n)\rightarrow \mathcal{S}(S_1'S_2'...S_n'E_n)$. If $S_i$ does not signal to $S_1'S_2'...S_{i-1}'$ for all $i\in\{2,...,n\}$, then $\mathcal{E}$ admits a decomposition of the form shown in \cref{fig:causality_sequence}, i.e., there exists a set of CPTP maps $\{\mathcal{E}_j:\mathcal{S}(E_{j-1}S_j)\rightarrow \mathcal{S}(S_j'E_j)\}_{j=1}^n$ (for some systems $E_1,...,E_{n-1}$) such that 
\begin{equation}
\label{eq: causality_sequence}
    \mathcal{E}=\big(\mathcal{I}_{E_0S_1...S_{n-1}}\otimes\mathcal{E}_n\big)\circ\dots\circ \big(\mathcal{I}_{E_0S_1}\otimes \mathcal{E}_2\otimes \mathcal{I}_{S_3...S_n}\big)\circ\big(\mathcal{E}_1\otimes \mathcal{I}_{S_2...S_n}\big),
\end{equation}
where the sequence in between the $\mathcal{E}_2$ and $\mathcal{E}_n$ terms consists of the remaining maps $\mathcal{E}_3$,...,$\mathcal{E}_{n-1}$ (in that order) appropriately tensored with identities on the remaining systems.
\end{thm}
\begin{proof} See Appendix \ref{sec:proof_causality_sequence}.
\end{proof}

\begin{figure}
    \centering
    \begin{tikzpicture}[trace/.pic={\draw [thick](-0.4,0)--(0.4,0);\draw [thick](-0.3,0.1)--(0.3,0.1);\draw [thick](-0.2,0.2)--(0.2,0.2);\draw [thick](-0.1,0.3)--(0.1,0.3);}, scale=0.48, transform shape]
 
\draw[thick, fill=LightGray, opacity=0.75 ] (0,0) rectangle node[align=center]{\LARGE{$\mathcal{E}$}} (5.5,2);

 \draw[thick,black,->] (0.5,-2)--node[anchor=east]{\Large$S_1$}(0.5,0); 

  \draw[thick,black,->] (2,-2)--node[anchor=east]{\Large$S_2$}(2,0); 
  
 \draw[thick,black,->] (5,-2)--node[anchor=east]{\Large$S_n$}(5,0); 

\node at (3.5,-1) {$\dots$};

  \draw[thick,black,->] (0.5,2)--node[anchor=east]{\Large$S_1'$}(0.5,4); 

  \draw[thick,black,->] (2,2)--node[anchor=east]{\Large$S_2'$}(2,4); 
  
 \draw[thick,black,->] (5,2)--node[anchor=east]{\Large$S_n'$}(5,4); 

 \node at (3.5,3) {$\dots$};

  \draw[thick,black,->] (-2,1)--node[anchor=south]{\Large$E_0$}(0,1); 
    \draw[thick,black,->] (5.5,1)--node[anchor=south]{\Large$E_n$}(7.5,1); 

    \node[align=center, black] at (9.5,1) {\Huge{$\mathbf{=}$}};

  \draw[thick, fill=LightGray, opacity=0.75 ] (13.5,0) rectangle node[align=center]{\LARGE{$\mathcal{E}_1$}} (15.5,2);  
  \draw[thick,black,->] (11.5,1)--node[anchor=south]{\Large$E_0$}(13.5,1); 
    \draw[thick,black,->] (14.5,-2)--node[anchor=east]{\Large$S_1$}(14.5,0); 
 \draw[thick,black,->] (14.5,2)--node[anchor=east]{\Large$S_1'$}(14.5,4); 

   \draw[thick, fill=LightGray, opacity=0.75 ] (17.5,0) rectangle node[align=center]{\LARGE{$\mathcal{E}_2$}} (19.5,2);   \draw[thick,black,->] (15.5,1)--node[anchor=south]{\Large$E_1$}(17.5,1); 
 \draw[thick,black,->] (18.5,-2)--node[anchor=east]{\Large$S_2$}(18.5,0); 
 \draw[thick,black,->] (18.5,2)--node[anchor=east]{\Large$S_2'$}(18.5,4); 

\draw[thick,black,->] (19.5,1)--node[anchor=south]{\Large$E_2$}(21.5,1); 
\node at (22.5,1) {$\dots$};
\draw[thick,black,->] (23.5,1)--node[anchor=south]{\Large$E_{n-1}$}(25.5,1);

  \draw[thick, fill=LightGray, opacity=0.75 ] (25.5,0) rectangle node[align=center]{\LARGE{$\mathcal{E}_n$}} (27.5,2);  
    \draw[thick,black,->] (26.5,-2)--node[anchor=east]{\Large$S_n$}(26.5,0); 
 \draw[thick,black,->] (26.5,2)--node[anchor=east]{\Large$S_n'$}(26.5,4); 
 \draw[thick,black,->] (27.5,1)--node[anchor=south]{\Large$E_{n}$}(29.5,1);

\end{tikzpicture}
    \caption{Diagrammatic representation of the sequence decomposition of the map $\mathcal{E}$ given in \cref{eq: causality_sequence}.}
    \label{fig:causality_sequence}
\end{figure}
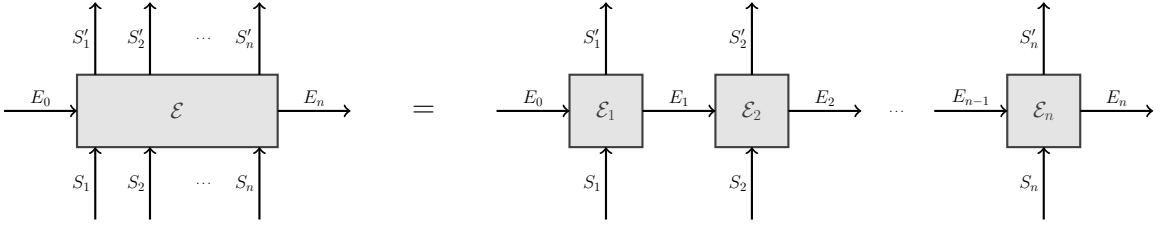

% !TeX spellcheck = en_GB
\subsection{Quantum optics}
\label{sec:quantum_optics}

In this section, we turn to more practical considerations when studying photonic implementations of QKD protocols. In particular, we introduce the necessary background knowledge required to formulate the photonic implementations of our protocols. Our protocol will make use of the two most common devices in quantum optics, namely the beam splitter (BS) and the threshold detector. Therefore, in the following, we introduce the mathematical language required to describe the operation of these two devices.

% The second important component will be the BS. 

The first of these components is the beam splitter. A $50/50$ BS can be described using the following transformation of creation operators: \\
\begin{minipage}{0.3\textwidth}
    \begin{flushright}
        \begin{tikzpicture}
            \draw[thick,fill=gray!30] (-0.4,-0.4) rectangle (0.4,0.4);
            \draw[thick] (-0.4,+0.4) -- (0.4,-0.4);
            \begin{scope}[very thick,decoration={
                markings,
                mark=at position 0.5 with {\arrow{>}}}
                ] 
                \draw[myred,postaction={decorate},>=stealth] (-1,0) node[left,text=black] {$S$} -- (0,0);
                \draw[myred,postaction={decorate},>=stealth] (0,1) node[above,text=black] {$R$} -- (0,0);
                \draw[myred,->,>=stealth] (0,0) -- (1,0) node[right,text=black] {$A$};
                \draw[myred,->,>=stealth] (0,0) -- (0,-1) node[below,text=black] {$B$};
            \end{scope}
        \end{tikzpicture}
    \end{flushright}
\end{minipage}
\begin{minipage}{0.7\textwidth}
    \begin{equation}
    \begin{aligned}
        a^\dagger_A &= \frac{1}{\sqrt{2}}\left(a_S^\dagger + a_R^\dagger\right), \\
        a^\dagger_B &= \frac{1}{\sqrt{2}}\left(a_S^\dagger - a_R^\dagger\right),
    \end{aligned}
    \end{equation}
\end{minipage}
where the modes are as shown in the picture. These operators create states with photons in a given mode ($S$, $R$, $A$ or $B$). This allows us to view the above transformations as a transformation acting on states by writing
\begin{equation}
\begin{aligned}
    \ket{N, 0}_{AB} &= \frac{1}{\sqrt{N!}} \left(a_A^\dagger\right)^N \ket{0,0} = \frac{1}{\sqrt{2^N N!}}\left(a_S^\dagger + a_R^\dagger\right)^N\ket{0,0}, \\
    \ket{0, N}_{AB} &= \frac{1}{\sqrt{N!}} \left(a_B^\dagger\right)^N \ket{0,0} = \frac{1}{\sqrt{2^N N!}}\left(a_S^\dagger - a_R^\dagger\right)^N\ket{0,0}.
    \label{eq:BS_state_trsf}
\end{aligned}
\end{equation}
Of particular importance is the action of a BS on a coherent state:
\begin{equation}
    \ket{\alpha} = e^{-|\alpha|^2/2} \sum_{n=0}^\infty \frac{\alpha^n}{\sqrt{n!}}\ket{n}.
\end{equation}
The parameter $\alpha\in\mathbb{C}$ quantifies the amplitude of a laser pulse, as the state has an average photon number $\langle n \rangle_{\ket{\alpha}}=|\alpha|^2$. Under the action of the BS, coherent states are mapped to coherent states, more precisely
\begin{equation}
    \ket{\alpha}_S\ket{\beta}_R \mapsto \ket{(\alpha+\beta)/\sqrt{2}}_A \otimes \ket{(\alpha-\beta)/\sqrt{2}}_B.
\label{eq:BS_coherent_states}
\end{equation}
Furthermore, the probability of detecting no photon when measuring a coherent state $\ket{\alpha}$ is given by 
	\begin{equation}
		\mathrm{Pr}[n=0|\alpha] = \left|\braket{0}{\alpha}\right|^2 = e^{-|\alpha|^2}.
		\label{eq:loss}
	\end{equation}
When a coherent state $\ket{\alpha}$ travels through a lossy beam line with transmittance $\eta\in[0,1]$, it is mapped to the state $\ket{\sqrt{\eta}\alpha}$.

We will be interested in the scenario where two threshold detectors are placed at the output ports of a beam splitter. There are four different measurement outcomes: only the first detector clicks, only the second detector clicks, both detectors click, or neither detector clicks. These four events are described by the following POVM:
\begin{equation}
\begin{aligned}
    \tilde{M}^{(0)} &= \sum_{N=1}^{\infty} \ketbra{N, 0}{N, 0}_{AB}, \\
    \tilde{M}^{(1)} &= \sum_{N=1}^{\infty} \ketbra{0, N}{0, N}_{AB}, \\
    \tilde{M}^{(\mathrm{dc})} &= \sum_{N=2}^{\infty} \sum_{n=1}^{N-1} \ketbra{N - n, n}{N - n, n}_{AB}, \\
    \tilde{M}^{(\bot)} &= \ketbra{0,0}{0,0} = \mathds{1} - \tilde{M}^{(0)} - \tilde{M}^{(1)} - \tilde{M}^{(\mathrm{dc})}.
\end{aligned}
\end{equation}
Since the pair of threshold detectors is placed behind a BS, the states in the above expressions can also be understood as photonic states in the respective mode before the BS (compare with \cref{eq:BS_state_trsf}). In our protocols we will post-process the measurement outcomes by randomly assigning the double-click outcomes to the outcome $0$ or $1$. We describe this via the post-processed POVM:
\begin{equation}
\label{eq:mmt_ops_Bob}
\begin{aligned}
    M^{(0)} &= \tilde{M}^{(0)} + \frac{1}{2} \tilde{M}^{(\mathrm{dc})}, \\
    M^{(1)} &= \tilde{M}^{(1)} + \frac{1}{2} \tilde{M}^{(\mathrm{dc})}, \\
    M^{(\bot)} &= \tilde{M}^{(\bot)}.
\end{aligned}
\end{equation}
Studying photonic implementations directly is infeasible due to the infinite dimension of the Fock space. This issue can be addressed using a theoretical tool known as squashing maps \cite{Gottesman2002, Tsurumaru2008, Beaudry2008, Gittsovich2014}:

\begin{defn}[Squashing map] \label{def:squashing} Let $A$ and $A'$ be two quantum systems. Let $\{M_A^{(x)}\}_x$ and $\{N_{A'}^{(x)}\}_x$ be two POVMs for the systems $A$ and $A'$, respectively. A CPTP map $\Lambda: \mathcal{S}(A) \rightarrow \mathcal{S}(A')$ is called  a \emph{squashing map from $\{M_A^{(x)}\}_x$ to $\{N_{A'}^{(x)}\}_x$} if for all $x$ and all $\rho_A \in \mathcal{S}(A)$, 
\begin{align}
    \tr \left[ M_A^{(x)} \rho_A \right] = \tr \left[ N_{A'}^{(x)} \Lambda(\rho_A) \right].
\end{align}
\end{defn}

A squashing map is an essential tool in our security proof because it allows us to reduce the analysis of photonic QKD implementations to qubit-based implementations, which are much simpler to analyse. The argument goes as follows: Since the measurement statistics are preserved by the squashing map, introducing an artificial squashing map before Bob's detectors does not change the post-measurement state. We can now see this squashing map as part of Eve's attack channel. Hence, any attack on the large system implies an equally strong attack on the reduced system. Thus, the key rate of the qubit protocol is a lower bound on the key rate of the original protocol. Finding a squashing map for the given detectors is usually good enough for security proofs. Unfortunately, this is not sufficient in our case because we have a non-signalling constraint on Eve's attack, which needs to be preserved by the squashing map. Therefore, we want a squashing map that is non-signalling. This is achieved by the following theorem:

\begin{thm} \label{thm:squashing} 
Let $S$ and $R$ be two Fock spaces and let $S'$ and $R'$ be two qubit systems. The target POVM is
\begin{equation}
    \begin{aligned}
        \label{eq:mmt_ops_Bob_qubits}
        N_{S'R'}^{(0)} &= \ketbra{\phi^+}{\phi^+} + \frac{1}{2}\ketbra{11}{11}, \\
        N_{S'R'}^{(1)} &= \ketbra{\phi^-}{\phi^-} + \frac{1}{2}\ketbra{11}{11}, \\
        N_{S'R'}^{(\bot)} &= \ketbra{00}{00},
    \end{aligned}
\end{equation}
    where $\ket{\phi^\pm} = (\ket{01} \pm \ket{10})/\sqrt{2}$ are Bell states. Note that the POVM above is simply the restriction of the POVM defined in \cref{eq:mmt_ops_Bob} to the one-photon subspaces. Let $N \in \mathbb{N}_{>0}$, $0 \leq k \leq N$ and $0 \leq l \leq N$, where $k$ is an odd and $l$ is an even number in $\mathbb{N}$.
    Define the Kraus operators
    \begin{align}
        K^{(0)} &= \ket{00}_{S'R'}\bra{0,0}_{SR}, \\
        K^{(N)}_{k,l} &= \frac{\sqrt{2}}{\sqrt{2^N}} \left( \sqrt{\binom{N}{l}} \ket{01}_{S'R'}\bra{N-k, k}_{SR} + \sqrt{\binom{N}{k}} \ket{10}_{S'R'}\bra{N-l, l}_{SR} \right).
    \end{align}
    Then the map
    \begin{align}
        \Lambda(\rho_{SR}) = K^{(0)} \rho_{SR} \left(K^{(0)}\right)^* + \sum_{N = 1}^{\infty} \sum_{k, l}^N K^{(N)}_{k,l} \rho_{SR} \left( K^{(N)}_{k,l} \right)^*
    \end{align}
    is a squashing map from the POVM given in \cref{eq:mmt_ops_Bob} to the POVM given in \cref{eq:mmt_ops_Bob_qubits}. Furthermore $\Lambda$ is non-signalling from $S$ to $R'$.
\end{thm}
\begin{proof} See \cref{sec:proof_squashing}.
\end{proof}

% Hence, the state

% !TeX spellcheck = en_GB

\begin{figure}[t]
	\centering
	\newcommand{\BS}[1]{
		\begin{scope}[shift={(#1)}]
			\draw[fill=gray!30] (-0.25,-0.25) rectangle (0.25,0.25);
			\draw (-0.25,-0.25) rectangle (0.25,0.25);
			\draw (-0.25,-0.25) -- (0.25,0.25);
		\end{scope}
	}
	\begin{tikzpicture}[
		boxnode/.style={rectangle, draw=black, minimum width=0.8cm, thick, fill=black!20},
		emptynode/.style={rectangle},
		mirror/.pic={
			\draw[decorate,decoration={markings, mark=between positions 0.015 and 0.98 step 0.1072 with {
					\draw (0,0)--(90:3pt);}}] (-0.2,-0.2) -- (0.2,0.2);
			\draw[thick] (-0.2,-0.2) -- (0.2,0.2);
		},
		detector/.pic={\draw[draw=black,fill=gray!20] (0,0.2) arc(90:-90:0.2cm and 0.2cm) -- cycle;},
		scale=0.9,
		]
		\draw[draw=myblue,dashed] (0, 0) rectangle (4, 4.5);
		\node[text=myblue,align=left] at (0.8, 3.9) {Alice's\\lab};
		
		\draw[draw=mygreen,dashed] (4.5, 0) rectangle (7, 4.5);
		\node[text=mygreen,align=left] at (5.8, 2) {Eve};
		
		\draw[draw=myblue,dashed] (7.5, 0) rectangle (11, 5.5);
		\node[text=myblue,align=right] at (10.3, 0.6) {Bob's\\lab};
		
		% The laser
		\node[boxnode] (laser) at (1, 0.5) {Source};
		
		\node[emptynode] (BSAlice) at (3,0.5) {};
		\BS{BSAlice.center};
		
		\node[emptynode] (MirrorAlice) at (3, 3.5) {};
		\draw (MirrorAlice.center) pic {mirror};
		
		\draw[thick,myred] (laser.east) -- (BSAlice.center);
		\draw[thick,myred] (BSAlice.center) -- (MirrorAlice.center);
		
		\node[boxnode] (PM) at (3, 1.4) {PM};
		\draw[thick,->,>=stealth] ([xshift=-0.5cm] PM.west) node[left]{$V_i$} -- (PM.west);
		
		\draw[thick,myred] (2.5, 2.2) ellipse (0.5 and 0.3);
		\draw[thick,myred] (2.5, 2.4) ellipse (0.5 and 0.3);
		\draw[thick,myred] (2.5, 2.6) ellipse (0.5 and 0.3);
		\draw[thick,myred] (2.5, 2.8) ellipse (0.5 and 0.3);
		\node[text=myred] at (1.3, 2.5) {Delay};
		
		\node[emptynode] (MirrorBob) at (8.5, 0.5) {};
		\draw (MirrorBob.center) pic[rotate=180] {mirror};
		
		\node[emptynode] (BSBob) at (8.5,3.5) {};
		\BS{BSBob.center}
		
		\draw[thick,myred] (MirrorBob.center) -- (BSBob.center);
		
		\draw[thick,myred] (9, 1.8) ellipse (0.5 and 0.3);
		\draw[thick,myred] (9, 2.0) ellipse (0.5 and 0.3);
		\draw[thick,myred] (9, 2.2) ellipse (0.5 and 0.3);
		\draw[thick,myred] (9, 2.4) ellipse (0.5 and 0.3);
		\node[text=myred] at (10.2, 2.1) {Delay};
		
		\node[emptynode] (DetectorBob1) at ([xshift=1cm] BSBob.east) {};
		\draw (DetectorBob1.center) pic{detector} node[right=0.2cm]{``0''};
		\draw[thick,myred] (BSBob.center) -- (DetectorBob1.center);
		
		\node[emptynode] (DetectorBob2) at ([yshift=1cm] BSBob.north) {};
		\draw (DetectorBob2.center) pic[rotate=90]{detector} node[above=0.2cm]{``1''};
		\draw[thick,myred] (BSBob.center) -- (DetectorBob2.center);
		
		\draw[thick,myred] (MirrorAlice.center) -- node[above]{$\ket{(-1)^{V_i}\alpha}_S$} (BSBob.center);
		\draw[thick,myred] (BSAlice.center) -- node[above]{$\ket{\alpha}_R$} (MirrorBob.center);
	\end{tikzpicture}
	\caption{\label{fig:rel_QKD_setup} The experimental setup of our novel relativistic QKD protocol, which boils down to a shared Mach-Zehnder interferometer between Alice and Bob. In each round of the protocol, Alice chooses a uniformly random bit $V_i\in\{0,1\}$. She then sends a weak coherent state through a BS, creating a reference state and a signal state. The reference state $\ket{\alpha}_R$ is sent to Bob immediately. Additionally, Alice uses a phase modulator (PM) to apply a phase $(-1)^{V_i}$ to the signal state $\ket{\alpha}_S$, producing the state $\ket{(-1)^{V_i} \alpha}_S$. This state is delayed by a time $\Delta t$ before Alice sends it to Bob (depicted via a delay line). Bob correspondingly first receives the reference state and delays it by the same amount $\Delta t$. Upon receiving the signal state, he measures the relative phase between the reference and signal state using a BS on his side.}
\end{figure}

\section{Security of relativistic QKD}
\label{sec:rel_qkd}

As a first step towards proving the security of DPS QKD, we introduce a novel relativistic QKD protocol, based on ideas from \cite{Radchenko2014,Kravtsov2018}. This serves two purposes: As we will show in \cref{sec:SecurityDPS}, the security of DPS QKD is inherited from the security of the relativistic QKD protocol. The reason is that both protocols share the same measurement operators and similar relativistic constraints, even if their experimental setups differ. Secondly, the relativistic QKD protocol we introduce in this section may be of independent interest since it comes with a complete security proof against general attacks.
% Like in the B92 \cite{Bennett1992} and BB84 \cite{Bennett1984} protocol, the individual rounds of the protocol are sequential and independent of each other, allowing us to apply the EAT and hence simplify the security proof.

The setup of the relativistic QKD protocol is as follows: Broadly speaking, Alice and Bob share a Mach-Zehnder interferometer with two delay lines as shown in \cref{fig:rel_QKD_setup}. A single round of the protocol contains the following steps: At the time $t_A^{(i)}$ Alice prepares two states, a weak coherent \emph{reference} pulse $\ket{\alpha}_R$ that she immediately sends to Bob, and a weak coherent \emph{signal} state that she delays by a time $\Delta t$ before sending it to Bob. Alice encodes her uniformly random raw key bit $V_i \in \{0,1\}$ in the phase of the signal state, i.e., she sends $\ket{(-1)^{V_i}\alpha}_S$ to Bob. The delay $\Delta t$ is chosen such that the following condition holds: 
% \footnote{Note that security against collective attacks will emerge from \cref{cond:relativistic} alone, such that Bob doesn't need to choose a random measurement basis in contrast to, e.g., the protocols described in \cite{Radchenko2014,Kravtsov2018} or the BB84 protocol.}:

\begin{cond}
	\label{cond:relativistic}
	Eve does not signal from the signal state to the reference state.
\end{cond}

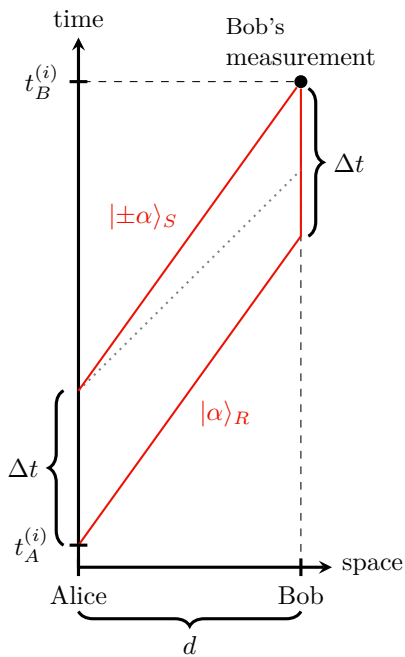
\begin{wrapfigure}[34]{l}{0.4\textwidth}
	% \begin{figure}[H]
		\centering
		\resizebox{0.37\textwidth}{!}{
			\begin{tikzpicture}[
				boxnode/.style={rectangle, draw=black, minimum width=0.8cm, thick, fill=black!20},
				emptynode/.style={rectangle},
				mirror/.pic={
					\draw[decorate,decoration={markings, mark=between positions 0.015 and 0.98 step 0.1072 with {
							\draw (0,0)--(90:3pt);}}] (-0.2,-0.2) -- (0.2,0.2);
					\draw[thick] (-0.2,-0.2) -- (0.2,0.2);
				},
				detector/.pic={\draw[draw=black,fill=gray!20] (0,0.2) arc(90:-90:0.2cm and 0.2cm) -- cycle;},
				scale=0.6,
				]
				\draw[->,>=stealth,very thick] (0, 0) -- (5.7, 0) node[right]{space};
				\draw[->,>=stealth,very thick] (0, 0) -- (0, 12) node[above]{time};
				
				\draw[dashed] (5, 0) -- (5, 10);
				\draw[very thick] (0, -0.2) node[below]{Alice} -- (0, 0.2);
				\draw[very thick] (5, -0.2) node[below]{Bob} -- (5, 0.2);
				
				\node[circle, fill=black, inner sep=0em, minimum size=0.5em] (mmtBob) at (5, 11) {};
				\draw[] (mmtBob) node[above=0.3em, align=left]{\small Bob's\\measurement};
				
				\draw[thick, myred] (0, 0.5) -- node[below right]{$\ket{\alpha}_R$} ++(5, 7) -- (mmtBob);
				\draw[dotted, thick, gray] (0, 4) -- (5, 9.0);
				\draw[thick, myred] (0, 4) -- node[above left]{$\ket{\pm \alpha}_S$} (mmtBob);
				\draw[dashed] (mmtBob) -- ++(-5,0);
				
				\draw[very thick, decorate, decoration={brace,amplitude=0.2cm,raise=0.2cm}] (0, 0.5) -- (0, 4) node[pos=0.5,left=0.4cm]{$\Delta t$};
				\draw[very thick, decorate, decoration={brace,amplitude=0.2cm,raise=0.6cm}] (5, 0) -- (0, 0) node[pos=0.5,below=0.8cm]{$d$};
				\draw[very thick, decorate, decoration={brace,amplitude=0.2cm,raise=0.1cm}] (mmtBob) -- (5, 7.5) node[pos=0.5,right=0.3cm]{$\Delta t$};
				
				\draw[very thick] (0.2, 0.5) -- (-0.2, 0.5) node[left=0.4em]{$t_A^{(i)}$};
				\draw[very thick] (0.2, 11) -- (-0.2, 11) node[left]{$t_B^{(i)}$};
		\end{tikzpicture}}
		\caption{A spacetime diagram depicting how \protect{\cref{cond:relativistic}} can be enforced. Alice's lab is depicted as the left world line, and Bob's lab is separated by a distance $d$. The (red) world lines of the (not necessarily lightlike) reference state $\ket{\alpha}_R$ and the signal state $\ket{\pm \alpha}_S$ are separated by the time shift $\Delta t$. The dotted line depicts the future light cone of Alice revealing information about the signal state. For large enough $\Delta t$, Eve therefore can't influence the reference based on this information before it enters Bob's lab.}
		\label{fig:rel_QKD_spacetime}
		% \end{figure}
\end{wrapfigure}

\cref{fig:rel_QKD_spacetime} shows how \cref{cond:relativistic} can be enforced in an experimental setup via the delay of the signal by $\Delta t$, which is implemented in \cref{fig:rel_QKD_setup} via the delay lines. In \cref{sec:seq_condition}, we further elaborate on how to enforce this condition by choosing an appropriate value for $\Delta t$.
In line with the concepts introduced in \cref{sec:causality}, we interpret this condition as a non-signalling constraint on Eve's possible attacks. 

After the delay, Alice sends the modulated signal pulse $\ket{(-1)^{V_i}\alpha}_S$ to Bob. He correspondingly first receives the reference state, which he delays by the same amount $\Delta t$. At time $t_B$, he receives the signal state and interferes both states through his own BS. Using \cref{eq:BS_coherent_states}, this transformation can be written as
% As long as at least one photon was present in the interferometer (and in the absence of noise and adversaries), Bob can unambiguously determine Alice's raw key bit based on which of his detectors clicks at time $t_B$, since the action of the beam splitter on the received states is given by
\begin{equation}
\begin{aligned}
	\ket{(-1)^0 \alpha}_S \ket{\alpha}_R &\mapsto \ket{+ \sqrt{2}\alpha}_A \otimes \ket{0}_B, \\
	\ket{(-1)^1 \alpha}_S \ket{\alpha}_R &\mapsto \ket{0}_A \otimes \ket{-\sqrt{2}\alpha}_B. 
    \label{eq:BS_trsf_rel}
\end{aligned}
\end{equation}
We see that, depending on the phase of the signal state, only one of Bob's detectors will click. This then allows Bob to recover Alice's raw key bit $V_i$. Bob's measurement can be seen as optimal unambiguous state discrimination between $\ket{\pm\alpha}_S$. Bob also records the time $t_B$ at which his detector clicked. % If no detector clicks, Bob formally sets $t_B=t_A+\Delta t+d/c$, as he indeed measured vacuum at an appropriate time.

If both detectors click due to the presence of noise or the interaction of an adversary, Bob randomly reassigns the measurement outcome to either $0$ or $1$. This leaves him with the measurement operators as defined in \cref{eq:mmt_ops_Bob}, where single detector-click outcomes $0$ or $1$ correspond to his guess for Alice's raw key bit $V_i$, and the inconclusive outcome $\bot$ represents that no detector has clicked.

%\enlargethispage*{\baselineskip}
Afterwards, Alice communicates the time $t_A^{(i)}$ at which she dispatched the reference state over an authenticated classical channel to Bob. If Bob determines the time of interference $t_B^{(i)}$ to be above a threshold given explicitly by $t_A^{(i)} + 2\Delta t + d/c$, he aborts the protocol\footnote{It may appear drastic to abort the whole protocol if the timing of a single round was off. This, however, only allows Eve to abort the protocol at her will, which is a possibility she has in any QKD protocol. For instance, she could block the quantum transmission line such that none of Alice's states arrive at Bob's lab. In practice, one should only count detector clicks up to $t_B^{(i),\max} = t_A^{(i)} + 2\Delta t + d/c$ to realize \cref{cond:relativistic}.}. Through this abort condition, Alice and Bob know that \cref{cond:relativistic} is satisfied if the protocol didn't abort.

To be able to apply the generalised EAT in the security proof of the relativistic QKD protocol, it is necessary that the assumptions of the theorem are fulfilled. In particular, we have to enforce a sequential form of our protocol which is formalized through the following additional condition:

\begin{cond} 
	\label{cond:sequential_relQKD}
	Eve does not signal from round $i+1$ to the rounds $1, \ldots, i$.
\end{cond}
This condition is easily satisfied by requiring that Alice starts the $i+1$-th round at a time $t_A^{(i+1)} = t_A^{(i)} + 2\Delta t$ by the same argument we used to ensure that \cref{cond:relativistic} is fulfilled (see \cref{sec:seq_condition}). Conceptually, Alice should therefore send a (reference or signal) pulse every $\Delta t$, as agreed upon by Alice and Bob beforehand.

\subsection{Protocol}

Next, we formalise the protocol as described above and include the classical post-processing steps after repeating $n\in\mathbb{N}$ rounds of the protocol.

For a fraction $\gamma\in(0,1)$ of the rounds, Bob publicly announces his measurement outcome $B_i$, which allows Alice to compute statistics in order to upper-bound Eve's knowledge. We refer to these rounds as \emph{test rounds}. These statistics take values in the alphabet $\mathcal{C} = \{\mathrm{corr}, \mathrm{err}, \bot, \varnothing\}$. The first three correspond to Bob determining the value for Alice's raw key bit $V_i$ correctly, incorrectly, or not at all, respectively. The value $\varnothing$ denotes that the round was not a test round and hence no statistics have been collected. Correspondingly, we define the evaluation function $\mathrm{EV}: \{0,1, \bot\} \times \{0,1,\bot,\varnothing\} \rightarrow \mathcal{C}$ with inputs from Alice's raw key bit $A$ and Bob's measurement outcome $J$ as
\begin{align}
    \mathrm{EV}(A,J) = \begin{cases}
        \mathrm{corr}, & \text{if $J \in \{0, 1 \}$ and $A = J$} \\
        \mathrm{err}, & \text{if $J \in \{0, 1 \}$ and $A \neq J$} \\
        \bot, & \text{if $J = \bot$} \\
        \varnothing, & \text{if $J = \varnothing$}.
    \end{cases}
    \label{eq:eval}
\end{align}
With these definitions we are now able to formally state the relativistic QKD protocol with $\Delta t$ chosen such that \cref{cond:relativistic,cond:sequential_relQKD} are satisfied as described in \cref{sec:seq_condition}. The structure of this protocol (summarised in~\hyperref[protocol:relativisticQKD]{Protocol 1}) is then the same one as the general prepare-and-measure protocol in \cite{Metger2023}.

\subsection{Sketch of security proof}
\label{sec:proofsketch}
Here, we present a brief sketch of the security proof of the relativistic QKD protocol. The interested reader can find the details of the proof in \cref{sec:security_general,sec:security_rel_qkd}.
The main steps of the security proof can be summarised as follows:
\begin{enumerate}
	\item Cast the soundness condition into a form that matches the conditions of the leftover hashing lemma (\cref{lem:leftoverhashing}). This lemma ensures that the trace-distance between the ideal state and the state that describes the actual protocol can be upper-bounded, given a lower bound on the smooth min-entropy.
    \item Ensure that all requirements for applying the generalised EAT are fulfilled: via appropriate entropic chain rules, we can to bring the smooth min-entropy into the form that appears in the generalised EAT, and \cref{cond:sequential_relQKD} together with \cref{thm:causality_sequence} ensures the existence of well-defined EAT channels $\mathcal{M}_i$.
    \item To get a bound on $H_\mathrm{min}^\varepsilon$ out of the generalised EAT we need a min-tradeoff function. This requires lower-bounding Eve's uncertainty about the raw key, i.e., finding a lower-bound on $H(A|EIJ)$.
    \begin{enumerate}
	    \item Use \cref{cond:relativistic} and \cref{thm:squashing} to squash the relativistic protocol into a qubit protocol that still satisfies \cref{cond:relativistic} (as Eve could have applied the squashing map herself). The measurement operators of the squashed protocol are then given by \cref{eq:mmt_ops_Bob_qubits}.
        \item The numerical optimization requires us to minimize the conditional entropy over all possible attacks of Eve that are non-signalling (compare \cref{defn:nonsignalling}). This can be conveniently included in the optimization constraints by optimizing over Choi states and applying \cref{lem:choi}.
    \end{enumerate}
\end{enumerate}

\begin{figure}[H]
	\captionsetup{name=Box}
	\begin{mdframed}[style=MyFrame]
		\small
		{\large \textsf{Protocol 1: Relativistic QKD}}\label{protocol:relativisticQKD}\\[0.2cm]
		The protocol is defined in terms of the following parameters, which are chosen before the protocol begins: \\[0.1cm]
		\begin{tabular}{rl}
			$\alpha \in \mathbb{C}$: & amplitude of the laser light \\
			$n\in\mathbb{N}$: & number of protocol rounds \\
			$\gamma\in (0,1)$: & testing frequency \\
			$\mathrm{leak_{EC}}$: & maximum length of error correction \\
			$\varepsilon_\mathrm{EC}$: & error tolerance during error correction \\
			$f: \mathds{P}_\mathcal{C} \rightarrow \mathds{R}$: & collective attack bound \\
			$H_\mathrm{exp}$: & minimum expected single-round entropy \\
			$l \in \mathbb{N}$: & length of the final secret key \\
		\end{tabular}\\[0.5cm]
		\hrule\vspace{10pt}
		\begin{enumerate}
			\item \textbf{Quantum Phase:} For $i\in [n]$:
			\begin{enumerate}
				\item Alice chooses a bit $V_i \in \{0, 1\}$ uniformly at random, prepares a reference state $\ket{\alpha}_R$ and a signal state $\ket{(-1)^{V_i} \alpha}_S$ and sends them to Bob such that \cref{cond:relativistic} is enforced.
                \item Bob receives a joint state $\rho_{SR}$ and performs a measurement given by the POVM $\{M_{SR}^{(b)}\}_b$ of \cref{eq:mmt_ops_Bob}. He records his measurement outcome in the register $B_i \in \{0, 1, \bot\}$.
				\item If $B_i = \bot$ then Bob sets $I_i = \bot$ and $I_i = \top$ otherwise.
				\item Bob chooses $T_i \in \{0,1\}$ randomly with $\mathrm{Pr}[T_i=1] = \gamma$. If $T_i=1$ Bob records $J_i = B_i$ to Alice and $J_i = \varnothing$ otherwise.
				\item Alice waits to enforce \cref{cond:sequential_relQKD}.
			\end{enumerate}
            \item \textbf{Public announcement:} Bob announces $I^nJ^n$.
			\item \textbf{Sifting:} For all $i \in [n]$ Alice sets $A_i = V_i$ if $I_i \neq \bot$ and $A_i = \bot$ otherwise.
			\item \textbf{Error correction:}
			\begin{enumerate}
				\item Alice and Bob use their outputs $A^{n}$ and $B^{n}$ to perform error correction by communicating at most $\mathrm{leak_{EC}}$ number of bits. Bob stores his guess for Alice's key in $\tilde{A}^{n}$.
				\item Alice chooses a hash function $h \in \mathcal{F}$ uniformly at random from a family of two-universal hash functions of length $\lceil \log(1/\varepsilon_\mathrm{EC}) \rceil$ and applies it to her raw key. She sends the output $h(A^{n})$ and her choice of hash function to Bob.
				\item Bob applies the same hash function to his guess $\tilde{A}^{n}$. If the two hashes disagree, Alice and Bob abort the protocol.
			\end{enumerate}
			\item \textbf{Parameter estimation:} For all $i \in [n]$ Alice computes $C_i=\mathrm{EV}(A_i, J_i)$. If $f({\mathrm{freq}(C^n)}) \leq H_\mathrm{exp}$ they abort the protocol.
			\item \textbf{Privacy amplification:} Alice and Bob perform privacy amplification on $A^{n}$ and $\tilde{A}^{n}$ to obtain raw keys $K_A^l$ and $K_B^l$.
		\end{enumerate}
	\end{mdframed}
\end{figure}

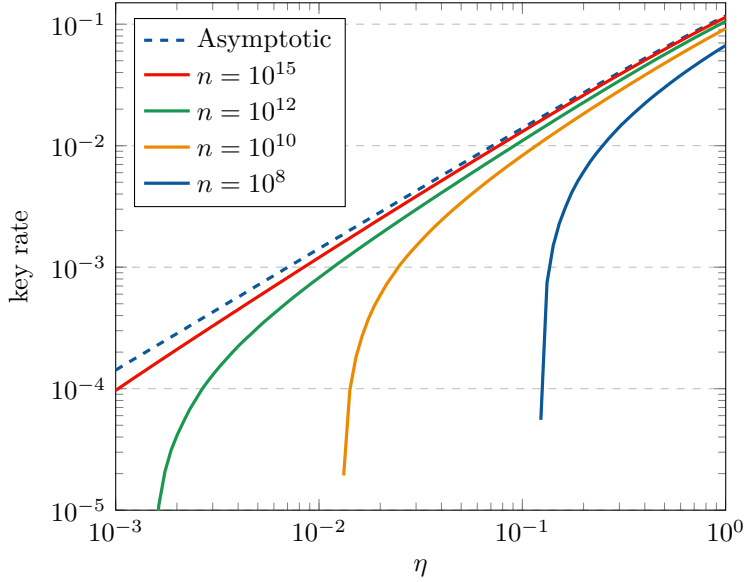
\begin{figure}[t]
	\centering
	\begin{tikzpicture}
		\begin{axis}[
			xmode=log,
			ymode=log,
			width = 0.7\linewidth,
			xlabel={$\eta$},
			ylabel={key rate},
			xmin=1e-3, xmax=1,
			ymin=1e-5, ymax=0.15,
			ymajorgrids=true,
			grid style=dashed,
			legend pos=north west,
			every mark=none,
			cycle list name=mycolors,
			legend cell align={left},
			scale=0.9,
			]
			\addplot+[dashed,very thick,mark=none] table[x=eta, y=rate, col sep=comma]{data/rel_qkd/opt_alphas.csv};
			\addlegendentry{Asymptotic}
			
			% gamma=1e-3, eps_snd=4e-12, eps_com=1e-2
			\addplot+[very thick,mark=none] table[x=eta, y=min_ent, col sep=comma]{data/rel_qkd/opt_grads_n=1e15.csv};
			\addlegendentry{$n=10^{15}$}
			
			% gamma=1e-2, eps_snd=4e-12, eps_com=1e-2
			\addplot+[very thick,mark=none] table[x=eta, y=min_ent, col sep=comma]{data/rel_qkd/opt_grads_n=1e12.csv};
			\addlegendentry{$n=10^{12}$}
			
			% gamma=5e-2, eps_snd=4e-12, eps_com=1e-2
			\addplot+[very thick,mark=none] table[x=eta, y=min_ent, col sep=comma]{data/rel_qkd/opt_grads_n=1e10.csv};
			\addlegendentry{$n=10^{10}$}
			
			% gamma=5e-2, eps_snd=4e-12, eps_com=1e-2
			\addplot+[very thick,mark=none] table[x=eta, y=min_ent, col sep=comma]{data/rel_qkd/opt_grads_n=1e8.csv};
			\addlegendentry{$n=10^{8}$}
		\end{axis}
	\end{tikzpicture}
	\caption{The key rates for a finite number of rounds $n$ as given in the legend in dependence of different transmittances $\eta\in[0,1]$ of the lossy beam line without considering QBER and for optimal $\alpha$. Note that asymptotically one can distil a secret key for arbitrarily low transmittances.}
	\label{fig:losses_relQKD}
\end{figure}

 \begin{figure}[H]
     \centering
 	\begin{tikzpicture}
 		\begin{axis}[
 			xmode=normal,
 			ymode=log,
 			width = 0.6\textwidth,
 			xlabel={QBER},
 			ylabel={key rate},
 			xmin=0, xmax=0.15,
 			ymin=0, ymax=0.13,
 			ymajorgrids=true,
 			grid style=dashed,
 			legend pos=south west,
 			every mark=none,
 			cycle list name=mycolors,
 			legend cell align={left},
             xticklabel style={
                 /pgf/number format/.cd,
                 fixed,
                 fixed zerofill,
                 precision=2,
             },
             anchor=west,
 		]
 			\addplot+[very thick,mark=none] table[x=q, y=key_rate, col sep=comma]{data/rel_qkd/rates_qber.csv};
            \addlegendentry{$\eta=1$};

 			\addplot+[very thick,mark=none] table[x=q, y=key_rate, col sep=comma]{data/rel_qkd/rates_qber_low_eta.csv};
            \addlegendentry{$\eta=0.1$};

 			\addplot+[very thick,mark=none] table[x=q, y=key_rate, col sep=comma]{data/rel_qkd/rates_qber_very_low_eta.csv};
            \addlegendentry{$\eta=0.01$};
 		\end{axis}
 	\end{tikzpicture}
     \caption{Asymptotic key rate in the limit $n\rightarrow \infty$ for different QBERs and transmittances. We choose $\alpha$ to optimize the key rates. Note that a secret key can be distilled up to a threshold $\mathrm{QBER}\approx13\%$ independent of transmittance.}
 	\label{fig:qber_plot_relQKD}
 \end{figure}
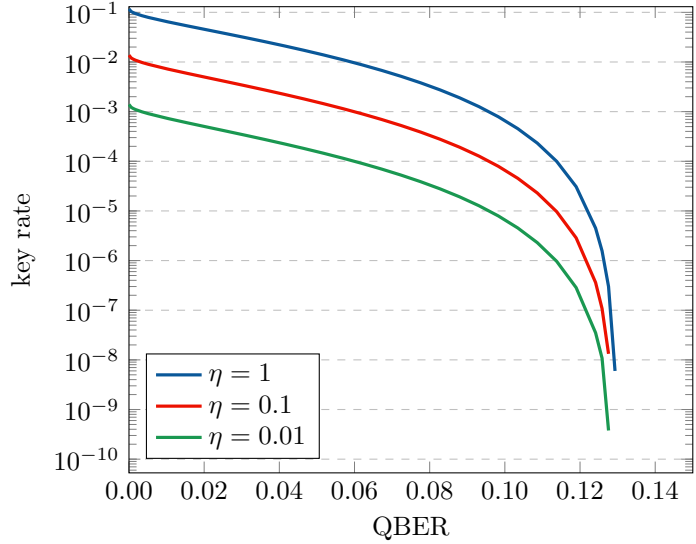

\subsection{Results} \label{sec:rel_qkd_res}

Via the strategy sketched in \cref{sec:proofsketch}. one can compute asymptotic and finite-size key rates of the relativistic QKD protocol under general adversarial attacks that resemble noise. Recall that the key rate is defined as $r=l/n$, where $l$ is the length of the key and $n$ is the number of rounds. Note that there are two laser pulses (reference and signal) per round. The key rate in time is then at most $r/(2\Delta t)$, based on the timing of our protocol.

To further study the behaviour of our protocol under noise, we consider channel losses through a lossy channel with transmittance $\eta\in[0,1]$ and a general quantum-bit error rate $\mathrm{QBER}\in[0,1]$ on the sifted key\footnote{A more in-depth analysis could include effects like detector dark counts.}.

Based on \cref{eq:loss,eq:mmt_ops_Bob,eq:eval,eq:BS_trsf_rel} one finds the statistics for an honest implementation of the protocol with noise to be
\begin{equation}
	\begin{aligned}
		\mathrm{Pr}[\bot|\alpha,\eta,\mathrm{QBER}] &= e^{-2\eta|\alpha|^2}, \\
		\mathrm{Pr}[\mathrm{err}|\alpha,\eta,\mathrm{QBER}] &= \left(1-e^{-2\eta|\alpha|^2}\right)\cdot \mathrm{QBER}, \\
		\mathrm{Pr}[\mathrm{corr}|\alpha,\eta,\mathrm{QBER}] &= \left(1-e^{-2\eta|\alpha|^2}\right)\cdot(1-\mathrm{QBER}),
	\end{aligned}
	\label{eq:statistics}
\end{equation}
which we use as the observed statistics under Eve's attack. The respective key rates can be computed numerically (see \cref{sec:security_rel_qkd}) and are depicted in \cref{fig:qber_plot_relQKD,fig:losses_relQKD}.

The amplitude $\alpha$ of the laser light is chosen to optimize the asymptotic key rates for a given amount of noise. Typical values are $\alpha \approx 0.45$. We emphasize that there are many places in the finite-size analysis where the bound on the key rate could possibly be tightened. The finite-size plots should therefore be viewed only for illustrational purposes. We have chosen a soundness parameter of $\varepsilon^{\mathrm{snd}}=4 \cdot 10^{-12}$ and a completeness parameter of $\varepsilon^{\mathrm{comp}}=10^{-2}$.

An interesting observation is the linear scaling of the asymptotic key rate for the entire parameter range. As a consequence, the asymptotic key rate remains positive for arbitrary amounts of losses. This is important for applications between parties at large distances (i.e., for low transmittances) who aim to establish a secret key. We also highlight that the threshold QBER up to which the protocol stays secure is given by $\mathrm{QBER}\approx 13\%$ and stays there even for $\eta<1$ (compare \cref{fig:qber_plot_relQKD}).

% !TeX spellcheck = en_GB
\section{Security of DPS QKD}
\label{sec:SecurityDPS}
In the DPS protocol Alice encodes her raw key in the relative phase between subsequent coherent pulses. Bob then uses a Mach-Zehnder interferometer to measure the relative phase of these pulses to reconstruct Alice's key. This setup is sketched in \cref{fig:dps_setup}. The main motivation for the DPS protocol is that it is both experimentally simple to implement while being resistant against the photon number splitting attack \cite{Bennett1992b,Brassard2000}. The reason for this is that the PNS attack requires Eve to measure the total photon number. This measurement is undetectable by a polarization measurement but does influence the phase coherence (a coherent state is transformed into a mixed state). In this section we present the key steps in proving security of the DPS protocol. The full technical details can be found in \cref{sec:security_general,sec:security_rel_qkd,sec:security_proof_dps}.

\begin{figure}[t]
	\centering
	\newcommand{\BS}[1]{
    \begin{scope}[shift={(#1)}]
        \draw[fill=gray!30] (-0.25,-0.25) rectangle (0.25,0.25);
        \draw (-0.25,-0.25) rectangle (0.25,0.25);
        \draw (-0.25,-0.25) -- (0.25,0.25);
    \end{scope}
}

\begin{tikzpicture}[
    boxnode/.style={rectangle, draw=black, minimum width=0.8cm, thick, fill=black!20},
    emptynode/.style={},
    mirror/.pic={
        \draw[decorate,decoration={markings, mark=between positions 0.015 and 0.98 step 0.1072 with {
            \draw (0,0)--(90:3pt);}}] (-0.2,-0.2) -- (0.2,0.2);
        \draw[thick] (-0.2,-0.2) -- (0.2,0.2);
    },
    detector/.pic={\draw[draw=black,fill=gray!20] (0,0.2) arc(90:-90:0.2cm and 0.2cm) -- cycle;},
    scale=0.9,
    ]
    \node[boxnode] (source) at (-4.5, 0) {source};
    \node[boxnode] (PM) at (-2, 0) {PM};

    \node (BSBob1) at (3, 0) {};
    \BS{BSBob1.center};

    \node (BSBob2) at (6, 0) {};
    \begin{scope}[rotate=90]
        \BS{BSBob2.center};
    \end{scope}

    \node (MirrorBob1) at (3, 1.5) {};
    \draw (MirrorBob1.center) pic {mirror};

    \node (MirrorBob2) at (6, 1.5) {};
    \draw (MirrorBob2.center) pic[rotate=-90] {mirror};

    \draw[thick,draw=myred] (source.east) -- node[above,text=myred]{${\ket{\alpha}}_S$} (PM.west);
    \draw[thick,draw=myred] (PM.east) -- node[pos=0.48,above,text=myred]{$\ket{(-1)^{U_i} \alpha}_S$} (BSBob1.center) -- node[above]{} (BSBob2.center);
    \draw[thick,draw=myred] (BSBob1.center) -- (MirrorBob1.center) -- node[above]{} (MirrorBob2.center) -- (BSBob2.center);

    \draw[thick,draw=myred] (BSBob2.center) -- ([xshift=+1.0cm] BSBob2.center) pic{detector} node[right,xshift=0.2cm]{``0''};
    \draw[thick,draw=myred] (-1,0) (BSBob2.center) -- ([yshift=-1.0cm] BSBob2.center) pic[rotate=-90]{detector} node[below,yshift=-0.2cm]{``1''};

    \draw[thick,->,>=stealth] ([yshift=0.5cm] PM.north) node[above] {$U_i$} -- (PM.north);
    
    \draw[draw=myblue,dashed] (-5.65,-2) rectangle (-1, 2.25);
    \node[text=myblue,align=left] at (-4.75, -1.25) {Alice's\\lab};
    
    \draw[draw=mygreen,dashed] (-0.5,-2) rectangle (1.75,2.25);
    \node[text=mygreen] at (0, -1.5) {Eve};
    
    \draw[draw=myblue,dashed] (2.25,-2) rectangle (8.25, 2.25);
    \node[text=myblue,align=right] at (7.5, -1.25) {Bob's\\lab};
\end{tikzpicture}
	\caption{\label{fig:dps_setup} The experimental setup of the DPS QKD protocol. In each round, Alice picks a uniformly random bit $U_i$ and uses a phase modulator (PM) to apply a random phase $(-1)^{U_i}$ to a coherent state $\ket{\alpha}$, producing the state $\ket{(-1)^{U_i} \alpha}$ which she sends to Bob. Bob then measures the relative phases between subsequent states using a Mach-Zehnder interferometer. Alice's raw key bit is given by the relative phase $V_i = U_i \oplus U_{i-1}$.}
\end{figure}

Historically, the security of QKD protocols against general attacks (including finite-size effects) was proven using de Finetti type arguments \cite{Renner2007, Renner2008} or the post-selection technique \cite{Christandl2009}. These techniques however require that the protocol of study be permutation invariant. Unfortunately, this is not given for the DPS protocol (permuting the rounds completely changes Bob's raw key bits and does not merely permute them). Thankfully, the EAT does not have this limitation since it applies to any situation where a sequence of channels are applied to some initial state. In fact, a generalised version of the EAT \cite{Metger2022} has recently been used to prove security of QKD protocols \cite{Metger2023}. However, the generalised EAT comes with its own restrictions: In order to apply the generalised EAT we need a well-defined sequence of channels. This then leads us to the following condition:
\begin{cond} 
    \label{cond:sequential}
    Eve does not signal from round $i+1$ to the rounds $1,\ldots,i$.
\end{cond}

A discussion of this condition can be found in \cref{sec:seq_condition}. For the DPS QKD protocol the above condition can be thought of as encompassing both \cref{cond:relativistic} and \cref{cond:sequential_relQKD} of the relativistic protocol. 

\subsection{Protocol}
\label{subsec:DPSprotocol}

\hyperref[protocol:DPS]{Protocol 2} provides a formal description of the steps of the DPS QKD protocol sketched in \cref{fig:dps_setup}. This serves two purposes: firstly, it ensures that the steps of the protocol are clearly laid out. Secondly, it introduces all the registers which will be referenced in the full security proof (see \cref{sec:security_general,sec:security_proof_dps}).

\subsection{Reduction to the relativistic protocol}

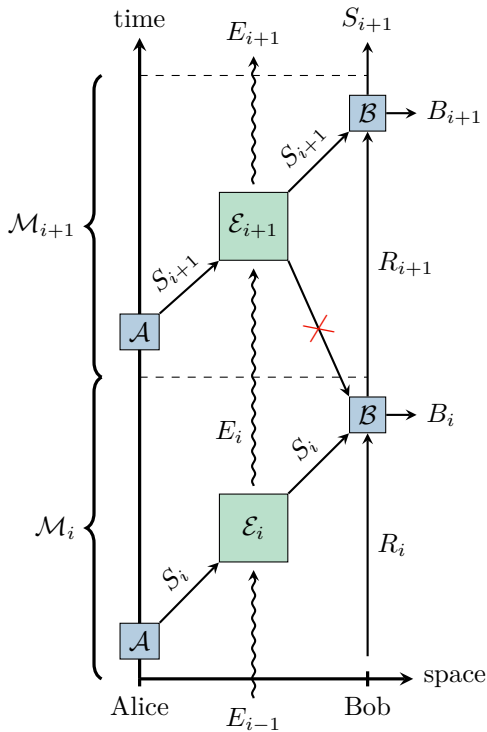
\begin{wrapfigure}{l}{0.5\textwidth}
    \begin{tikzpicture}[
    boxnode/.style={rectangle, draw=black, minimum height=0.9cm, minimum width=0.9cm, fill=myblue!30},
    emptynode/.style={}
]
    % The axis
    \draw[very thick,->,>=stealth] (-1.5, -2.0) -- (2.1, -2.0) node[right] {space};
    \draw[very thick,->,>=stealth] (-1.5, -2.0) -- (-1.5, 6.5) node[above] {time};

    \draw[very thick] (-1.5, -2.1) node[below]{Alice} -- (-1.5, -2.0);
    \draw[very thick] (1.5, -2.1) node[below]{Bob} -- (1.5, -1.9);

    \draw[very thick,decorate,decoration={brace,amplitude=0.2cm,raise=0.5cm}] (-1.5, -2.0) -- (-1.5, 2.0) node[pos=0.5,left=0.7cm]{$\mathcal{M}_i$};
    \draw[dashed] (-1.5, 2.0) -- (1.5, 2.0);

    \draw[very thick,decorate,decoration={brace,amplitude=0.2cm,raise=0.5cm}] (-1.5, 2.0) -- (-1.5, 6.0) node[pos=0.5,left=0.7cm]{$\mathcal{M}_{i+1}$};
    \draw[dashed] (-1.5, 6.0) -- (1.5, 6.0);

    % The channels
    \node[rectangle,draw=black,fill=myblue!30] (A1) at (-1.5, -1.5) {$\mathcal{A}$};
    \node[rectangle,draw=black,fill=myblue!30] (A2) at (-1.5, 2.6) {$\mathcal{A}$};

    \node[rectangle,draw=black,fill=myblue!30] (B1) at (1.5, 1.5) {$\mathcal{B}$};
    \node[rectangle,draw=black,fill=myblue!30] (B2) at (1.5, 5.5) {$\mathcal{B}$};

    \node[boxnode,fill=mygreen!30] (E1) at (0, 0) {$\mathcal{E}_i$};
    \node[boxnode,fill=mygreen!30] (E2) at (0, 4.0) {$\mathcal{E}_{i+1}$};

    % The system arrows
    \draw[thick,->,>=stealth] (A1.north east) -- node[above,sloped]{$S_i$} (E1.south west);
    \draw[thick,->,>=stealth] (E1.north east) -- node[above,sloped]{$S_i$} (B1.south west);

    \draw[thick,->,>=stealth] (A2.north east) -- node[above,sloped]{$S_{i+1}$} (E2.south west);
    \draw[thick,->,>=stealth] (E2.north east) -- node[above,sloped]{$S_{i+1}$} (B2.south west);

    \draw[thick,->,>=stealth] (1.5, -1.7) -- node[right] {$R_{i}$} (B1.south);
    \draw[thick,->,>=stealth] (B1.north) -- node[right] {$R_{i+1}$} (B2.south);
    \draw[thick,->,>=stealth] (B2.north) -- ([yshift=0.7cm] B2.north) node[above]{$S_{i+1}$};

    % Bob's outputs
    \draw[thick,->,>=stealth] (B1.east) -- ([xshift=0.4cm] B1.east) node[right]{$B_i$};
    \draw[thick,->,>=stealth] (B2.east) -- ([xshift=0.4cm] B2.east) node[right]{$B_{i+1}$};

    \draw[thick,->,>=stealth,decorate,decoration={snake,amplitude=0.2mm,segment length=2.0mm}]
        ([yshift=-1.8cm] E1.south) node[below]{$E_{i-1}$} -- ([yshift=-0.1cm] E1.south);
    \draw[thick,->,>=stealth,decorate,decoration={snake,amplitude=0.2mm,segment length=2.0mm}]
        ([yshift=0.1cm] E1.north) -- node[pos=0.25,left]{$E_i$} ([yshift=-0.1cm] E2.south);
    \draw[thick,->,>=stealth,decorate,decoration={snake,amplitude=0.2mm,segment length=2.0mm}]
        ([yshift=0.1cm] E2.north) -- ([yshift=1.8cm] E2.north) node[above]{$E_{i+1}$};
    \draw[thick,->,>=stealth] (E2.south east) -- node[sloped, myred, rotate=35]{\large{|}} node[sloped, myred, rotate=-35]{\large{|}} (B1.north west);

\end{tikzpicture}
    \caption{Two rounds of Eve's attack. Alice sends a signal state $S_i$ which gets interrupted by Eve. Eve applies $\mathcal{E}_i$ to this signal state and her previous side-information. Bob measures the signal state together with $S_{i-1}$ to produce his key raw bit $B_i$. Eve's attack cannot signal from $S_{i}$ to $R_i$.}
    \label{fig:dps_seq_channels}
\end{wrapfigure}

The main idea behind the security proof is to reduce the DPS QKD protocol to the relativistic protocol introduced in \cref{sec:rel_qkd}. As a consequence we can then recycle the results of the relativistic protocol to prove the security of the DPS QKD protocol. For this, we assume that Bob again monitors the measurement times and that Alice and Bob abort if the observed timing suggests that there could be any signalling between neighbouring rounds, i.e., they experimentally enforce \cref{cond:sequential} (see also \cref{sec:seq_condition}).

To see the equivalence between the two protocols we model Eve's attack using a sequence of CPTP maps that take as input Alice's signal states on $S_i$ and some prior (possibly quantum) side-information $E_{i-1}$ and produces a new state on $S_i$ and some new side-information $E_i$ (see \cref{fig:dps_seq_channels}). The collective attack bound is then given by $H(A_i|E_iI^iJ^i\tilde{E}_{i-1})$ (the system $\tilde{E}_{i-1}$ is as defined in \cref{defn:min_tradeoff}). 
The equivalence between the two protocols becomes clear in \cref{fig:dps_seq_channels} on the left: in every round Alice sends a new signal state which gets interrupted by Eve. Due to the sequential condition, Eve cannot hold on to the signal state $S_i$ for too long. In particular, she cannot signal from round $i+1$ back to round $i$. In \cref{fig:dps_seq_channels} this is ensured by the fact that it is impossible to signal backwards in time. For our purposes, however, a simple spacelike separation is sufficient. This non-signalling constraint then allows us to define the channels $\mathcal{M}_i$ for the DPS QKD protocol (for a more detailed description of $\mathcal{M}_i$ see also \cref{sec:security_proof_dps}). Therefore the DPS protocol can be seen as the relativistic protocol where the signal state from round $i$ becomes the reference state in round $i+1$.

% Most of the security proof for the DPS QKD protocol works the same way as the one for the relativistic protocol. 
% As in the security proof of the relativistic protocol, we first apply the steps outlined in \cref{sec:security_general} to reduce the soundness condition to a bound on the smooth min-entropy which can be studied using the generalised EAT. % We again use the non-signalling condition to argue for the existence of the required EAT channels (see also \cref{sec:security_proof_dps}).
% 
To formally see the equivalence we apply the same steps in \cref{sec:security_general} to reduce the security analysis to bounding the smooth min-entropy using the generalised EAT. The remaining task then is to evaluate the single-round von Neumann entropy
% To further illustrate this equivalence, consider the single-round von Neumann entropy
\begin{equation}
    % \inf_{\nu \in \Sigma_i(q)} H(A_i|E_iI^iJ^i\tilde{E}_{i-1})_{\nu}.
    H(A_i|E_iI^iJ^i\tilde{E}_{i-1}).
\end{equation}
% Here we will show that this can be reduced to the same computation as in the relativistic protocol (see \cref{sec:security_rel_qkd}).
Here we will show that the above quantity can be lower-bounded by the analogous quantity of the relativistic protocol. For this we first separate the situation where Alice keeps her key bit $A_i$ and the situation where she discards it:
\begin{equation}
\begin{aligned}
    H(A_i|E_iI^iJ^i\tilde{E}_{i-1}) =& H(A_i|E_iI^{i-1}J^i\tilde{E}_{i-1},I_i=\bot)\mathrm{Pr}[I_i=\bot] \\
        &+ H(A_i|E_iI^{i-1}J^i\tilde{E}_{i-1},I_i=\top)\mathrm{Pr}[I_i=\top] \\
    =& H(A_i|E_iI^{i-1}J^i\tilde{E}_{i-1},I_i=\top)\mathrm{Pr}[I_i=\top],
\end{aligned}
\end{equation}
where we noted that if $I_i=\bot$ then $A_i=\bot$ is deterministic. Next, we note that by strong subadditivity we have that
\begin{equation}
    H(A_i|E_iI^{i-1}J^i\tilde{E}_{i-1},I_i=\top) \geq H(A_i|E_iI^{i-1}J^i\tilde{E}_{i-1}U_{i-1},I_i=\top).
\end{equation}
Alternatively, one could argue that $\tilde{E}_{i-1}$ can already contain a copy of $U_{i-1}$ and therefore we have equality in the above equation (although the lower bound suffices for our purposes). Since $A_i$ is a deterministic function of $U_i$ and $U_{i-1}$ we may write
\begin{equation}
    H(A_i|E_iI^{i-1}J^i\tilde{E}_{i-1}U_{i-1},I_i=\top) = H(U_i|E_iI^{i-1}J^i\tilde{E}_{i-1}U_{i-1},I_i=\top).
\end{equation}
Finally, we note that $U_i$ is chosen independently from $U_{i-1}I^{i-1}J^{i-1}$ and hence
\begin{equation}
    H(U_i|E_iI^{i-1}J^i\tilde{E}_{i-1}U_{i-1},I_i=\top) = H(U_i|E_iJ_i\tilde{E}_{i-1},I_i=\top).
\end{equation}
Since $U_i$ corresponds to the key register in the relativistic protocol we see that indeed the single-round entropy of the DPS protocol can be evaluated in the same way as for the relativistic protocol (see \cref{sec:security_rel_qkd}). Consequently we also expect the key rates of the DPS protocol to behave almost identically to the relativistic protocol.

% There is a last technicality: we optimize over $\nu \in \Sigma_i(q)$ which is different between the DPS and the relativistic protocol. In the DPS protocol bit errors are defined between Bob's bit $B_i$ and Alice's bit $A_i$ whereas in the relativistic protocol they are defined with respect to $U_i$. This however is not problem since the two situations are related by a bit-flip of 

% Next, we note that in round $i$ Eve may possess a copy of $U_{i-1}$ (formally as part of $\tilde{E}_{i-1}$) and hence $H(A_i|E_iI^iJ^i\tilde{E}_{i-1}) = H(U_{i}|E_iI^iJ^i\tilde{E}_{i-1})$. Looking at the situation with the new raw key register, we see that indeed the DPS protocol corresponds to the relativistic protocol with the identification $U_i \leftrightarrow A_i$ (due to symmetry between $\ket{\pm\alpha}$, the value of $U_{i-1}$ does not matter)\footnote{In order to properly include the conditioning system when applying the generalised EAT, we pass a copy of the outputs from round to round, see also the discussion in \cref{sec:security_proof_dps}.}.
Finally we would like to make two comments about our security proof: Firstly, we do not assume that Eve has no phase reference, different to some prior work \cite{Waks2006}. This is justified by the observation that Eve could always sacrifice a small fraction of rounds at the start of the protocol to learn the phase of Alice's laser to arbitrary precision. The second comment is that there are some subtleties when applying the generalised EAT regarding the assignment of the memory system (the upper arm in Bob's Mach-Zehnder interferometer) as well as the construction of the conditioning registers $C_i$. For a more detailed discussion of these issues we refer to \cref{sec:security_proof_dps}.

% \enlargethispage{2\baselineskip}

\vspace{-5pt}
\begin{figure}[H]
	\captionsetup{name=Box}
	\begin{mdframed}[style=MyFrame]
		\small
		{\large \textsf{Protocol 2: Differential phase shift QKD}}\label{protocol:DPS}\\[0.2cm]
		The protocol is defined in terms of the following parameters, which are chosen before the protocol begins: \\[0.1cm]
		\begin{tabular}{rl}
			$\alpha \in \mathbb{C}$: & amplitude of the laser light \\
			$n\in\mathbb{N}$: & number of protocol rounds \\
			$\gamma \in \mathbb{R}$: & testing frequency \\
			$\mathrm{leak_{EC}}$: & maximum length of error correction \\
			$\varepsilon_\mathrm{EC}$: & error tolerance during error correction \\
			$f: \mathbb{P}_\mathcal{C} \rightarrow \mathbb{R}$: & a valid min-tradeoff function \\
			$H_\mathrm{exp} \in \mathbb{R}$: & minimum expected single-round entropy \\
			$l \in \mathbb{N}$: & length of the final secret key \\
		\end{tabular}\\[0.5cm]
		\hrule\vspace{10pt}
		\begin{enumerate}
			\item \textbf{Initialization:} Alice chooses a bit $U_0 \in \{0,1\}$ uniformly at random and sends the state $\ket{(-1)^{U_0}\alpha}_S$ to Bob.
			\item \textbf{Measurement:} For $i\in [n]$:
			\begin{enumerate}
				\item Alice chooses a bit $U_i \in \{0, 1\}$ uniformly at random.
				\item Alice prepares the state $\ket{(-1)^{U_i} \alpha}_S$ and sends it to Bob.
				\item Alice computes her raw key bit $V_i = U_{i} \oplus U_{i-1}$.
				\item Bob receives a state $\rho_{S}$ and sends it through a Mach-Zehnder interferometer (see \cref{fig:dps_setup}).
				\item Bob applies the POVM $\{M_{SR}^{(b)}\}_b$ to the output of the interferometer and records the outcome in the register $B_i \in \{0, 1, \bot\}$.
				\item If $B_i = \bot$ then Bob sets $I_i = \bot$ and $I_i = \top$ otherwise.
				\item Bob chooses $T_i \in \{0,1\}$ randomly with $\mathrm{Pr}[T_i=1] = \gamma$. If $T_i=1$ then Bob records $J_i = B_i$ and $J_i = \varnothing$ otherwise.
                \item Alice waits to enforce \cref{cond:sequential}.
			\end{enumerate}
            \item \textbf{Public announcement:} Bob announces $I^nJ^n$.
			\item \textbf{Sifting:} For all $i \in [n]$ Alice sets $A_i = V_i$ if $I_i = \top$ and $A_i = \bot$ otherwise.
			\item \textbf{Error correction:}
			\begin{enumerate}
				\item Alice and Bob use their outputs $A^{n}$ and $B^{n}$ to perform error correction by communicating at most $\mathrm{leak_{EC}}$ number of bits. Bob stores his guess for Alice's key in $\tilde{A}^{n}$.
				\item Alice chooses a hash function $h \in \mathcal{F}$ uniformly at random from a family of two-universal hash functions of length $\lceil \log(1/\varepsilon_\mathrm{EC}) \rceil$ and applies it to her raw key. She sends the output $h(A^{n})$ and her choice of hash function to Bob.
				\item Bob applies the same hash function to his guess $\tilde{A}^{n}$. If the two hashes disagree, Alice and Bob abort the protocol.
			\end{enumerate}
			\item \textbf{Parameter estimation:} For all $i \in [n]$ Alice computes $C_i=\mathrm{EV}(A_i, J_i)$. If $f({\mathrm{freq}(C^n)}) < H_\mathrm{exp}$ they abort the protocol.
			\item \textbf{Privacy amplification:} Alice and Bob perform privacy amplification on $A^{n}$ and $\tilde{A}^{n}$ to obtain raw keys $K_A^l$ and $K_B^l$.
		\end{enumerate}
	\end{mdframed}
\end{figure}

\subsection{Results}
We now present the results of the security analysis of the DPS QKD protocol. We limit ourselves to loss as the only source of noise in our protocol. Furthermore, we choose a soundness parameter of $\varepsilon^{\mathrm{snd}}=4 \cdot 10^{-12}$ and a completeness parameter of $\varepsilon^{\mathrm{comp}}=10^{-2}$. The amplitude $\alpha$ of the laser light is chosen such that it optimizes the asymptotic key rates. Typical values are $\alpha \approx 0.45$. Again, we emphasize that there are many places in the finite-size analysis where the bound on the key rate could be tightened. The finite-size plots in \cref{fig:key_rates_dps} should therefore be viewed only for illustrational purposes. 

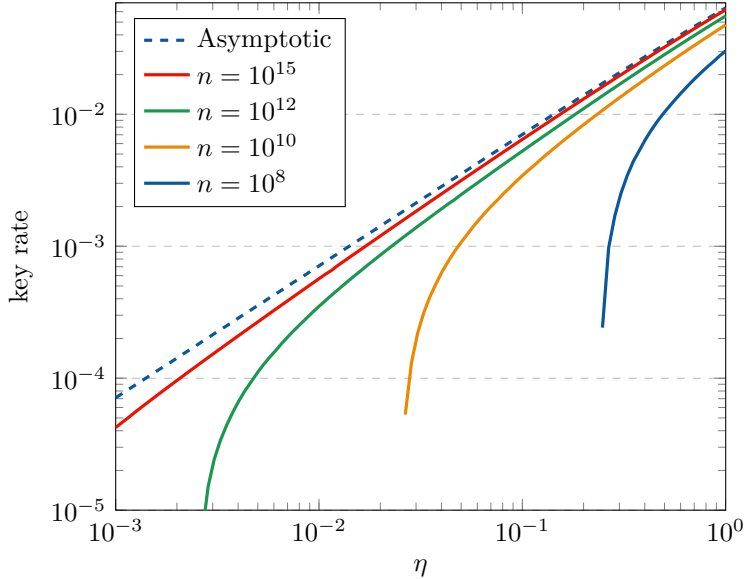
\begin{figure}[t]
    \centering
    \begin{tikzpicture}
        \begin{axis}[
            xmode=log,
            ymode=log,
            width = 0.7\linewidth,
            xlabel={$\eta$},
            ylabel={key rate},
            xmin=1e-3, xmax=1,
            ymin=1e-5, ymax=0.07,
            ymajorgrids=true,
            grid style=dashed,
            legend pos=north west,
            every mark=none,
            cycle list name=mycolors,
            legend cell align={left},
            scale=0.9,
        ]
            \addplot+[dashed,very thick,mark=none] table[x=eta, y=rate, col sep=comma]{data/dps/opt_alphas.csv};
            \addlegendentry{Asymptotic}

            % gamma=1e-3, eps_snd=4e-12, eps_com=1e-2
            \addplot+[very thick,mark=none] table[x=eta, y=min_ent, col sep=comma]{data/dps/opt_grads_n=1e15.csv};
            \addlegendentry{$n=10^{15}$}

            % gamma=1e-2, eps_snd=4e-12, eps_com=1e-2
            \addplot+[very thick,mark=none] table[x=eta, y=min_ent, col sep=comma]{data/dps/opt_grads_n=1e12.csv};
            \addlegendentry{$n=10^{12}$}

            % gamma=5e-2, eps_snd=4e-12, eps_com=1e-2
            \addplot+[very thick,mark=none] table[x=eta, y=min_ent, col sep=comma]{data/dps/opt_grads_n=1e10.csv};
            \addlegendentry{$n=10^{10}$}

            % gamma=5e-2, eps_snd=4e-12, eps_com=1e-2
            \addplot+[very thick,mark=none] table[x=eta, y=min_ent, col sep=comma]{data/dps/opt_grads_n=1e8.csv};
            \addlegendentry{$n=10^{8}$}
        \end{axis}
    \end{tikzpicture}
    \caption{\label{fig:key_rates_dps} Key rates of the DPS QKD protocol as a function of the transmittance $\eta\in[0,1]$ for different numbers of rounds $n$. Loss is the only type of noise that is considered here.}
\end{figure}

Similarly to the relativistic protocol we observe a linear scaling of the key rate for the entire parameter range of efficiencies. When compared with the relativistic protocol (\cref{sec:rel_qkd}), we observe a modest increase in the asymptotic key rate (for a fair comparison we need to half the key rates of the relativistic protocol since it uses two light pulses per key bit). Other than that, the protocol behaves in the same way as the relativistic protocol. This is expected since, after all, the security proof exploits the equivalence of the two protocols.

% !TeX spellcheck = en_GB
\section{Discussion}
\label{sec:Discussion}

There are a some aspects of the security proofs of the two protocols that deserve special attention. First, we would like to discuss how to satisfy \cref{cond:relativistic,cond:sequential_relQKD} for the relativistic QKD protocol. We will also discuss the impact and enforceability of \cref{cond:sequential} for the DPS QKD protocol. 
%In the second part, we compare our relativistic protocol to the one it originates from and discuss the analytic asymptotic key rate given in prior work. 
In the second part of the discussion, we relate our work to a known attack on DPS QKD. In particular, we will make good on the promise made in the introduction by showing that for DPS coherent attacks are indeed stronger than collective attacks.

\subsection{Sequential conditions}
\label{sec:seq_condition}
Here we discuss some implications of \cref{cond:relativistic,cond:sequential_relQKD,cond:sequential}. First, we note that, in practice, this condition can be imposed by Alice and Bob if Bob monitors the arrival time of the quantum systems (or equivalently his detection times). For there to be no signalling between the signal and the reference we require that the arrival of the reference in Bob's lab is outside the future light cone of Alice sending the signal state (grey dotted line in \cref{fig:rel_QKD_spacetime}). From \cref{fig:rel_QKD_spacetime} it follows that for the two signals to be spacelike separated we require that
\begin{equation}
    t_A^{(i)} + \Delta t + \frac{d}{c} > t_B^{(i)} - \Delta t \iff t_B^{(i)} - t_A^{(i)} < 2 \Delta t + \frac{d}{c}.
    \label{eq:spacelike_separation}
\end{equation}
If we assume that Alice and Bob are connected by a fibre of refractive index $n$ and length $d$, we require that $t_B^{(i)} \geq t_A^{(i)} + n d/c + \Delta t$. Inserting this into \cref{eq:spacelike_separation} provides a lower bound on the time delay $\Delta t$ that is required for the protocol to not abort:
\begin{equation}
	\Delta t > (n-1)\frac{d}{c}.
    \label{eq:min_time_delay}
\end{equation}
One way to think about these conditions is that \cref{eq:spacelike_separation} is required for the soundness of the protocol, whereas \cref{eq:min_time_delay} is required for completeness (note that \cref{eq:spacelike_separation} does not make any assumptions about the refractive index of the fibre). To enforce \cref{cond:sequential_relQKD} we choose $t_A^{(i+1)} = t_A^{(i)} + 2\Delta t$. Combining this with \cref{eq:spacelike_separation} we get that $t_A^{(i+1)} + d/c = t_A^{(i)} + 2\Delta t + d/c > t_B^{(i)}$, which says that the reference from round $i+1$ cannot signal to round $i$.

Enforcing these conditions requires Alice and Bob to share a pair of synchronized clocks. This is a reasonable request, as synchronized clocks are already needed in the DPS protocol so that Bob knows which of his detections corresponds to which of Alice's key bits. Enforcing the sequential condition, however, imposes a minimal time delay between signals. This has two consequences: firstly, it limits the repetition rate of protocol rounds. Secondly, this might introduce additional noise (due to the longer arm in Bob's interferometer) and requires better phase coherence of Alice's laser. The first problem can be fixed if Bob measures the relative phase between more temporally distant pulses instead of performing interferometry on neighbouring pulses. Effectively, this corresponds to running many copies of the DPS QKD protocol in parallel. Note that the impact of the sequential condition on the repetition rate varies significantly depending on the transmission channel. For free-space implementations, for instance, the sequential condition can be enforced without significant loss in repetition rate. Lastly, we note that this non-signalling assumption is also implicitly made when considering many restricted sets of attacks such as individual attacks \cite{Waks2006} or collective attacks (for which the security of DPS QKD has not been established before this paper). Therefore, the security statement presented in this paper is stronger than that of prior work.

\subsection{Comparison with upper bounds}

\begin{figure}[t]
	\centering
	\begin{tikzpicture}
		\begin{axis}[
			xmode=log,
			width = 0.8\linewidth,
			xlabel={$\eta$},
			ylabel={QBER},
			xmin=1e-5, xmax=1e-1,
			ymin=1e-4, ymax=0.15,
			ymajorgrids=true,
			grid style=dashed,
			legend pos=north west,
			every mark=none,
			cycle list name=mycolors,
			legend cell align={left},
            scale=0.9,
			]
			\path[name path=top] (current axis.north west) -- (current axis.north east);
			\path[name path=xaxis] (current axis.south west) -- (current axis.south east);
			
			\addplot+[forget plot, solid, myred, very thick, mark=none, name path=curty] table[x=eta, y=qber, col sep=comma]{data/dps/bound_curty.csv};
			% \addlegendentry{Bounds from \cite{Curty2008}}
			
			\addplot+[forget plot, solid, myblue, very thick, mark=*, name path=ns] table[x=eta, y=qber, col sep=comma]{data/dps/threshold_qbers.csv};
			% \addlegendentry{Thresholds with non-signalling constraint}
			
			\addplot[myred!40, opacity=0.4] fill between[of=curty and top];
			\addlegendentry{Insecure according to \cite{Curty2008}}
			
			\addplot[myblue!40, opacity=0.4] fill between[of=ns and xaxis];
			\addlegendentry{Secure with non-signalling constraint}
		\end{axis}
	\end{tikzpicture}
	\caption{\label{fig:upper_bounds} Comparison between noise thresholds derived in \cite{Curty2008} and the ones derived in this paper. The red region is insecure according to \cite{Curty2008}, whereas the blue region is secure according to our security proof. There is a non-empty overlap of the two regions. Both curves were computed at $\alpha=0.4$ with zero detector dead time.}
\end{figure}
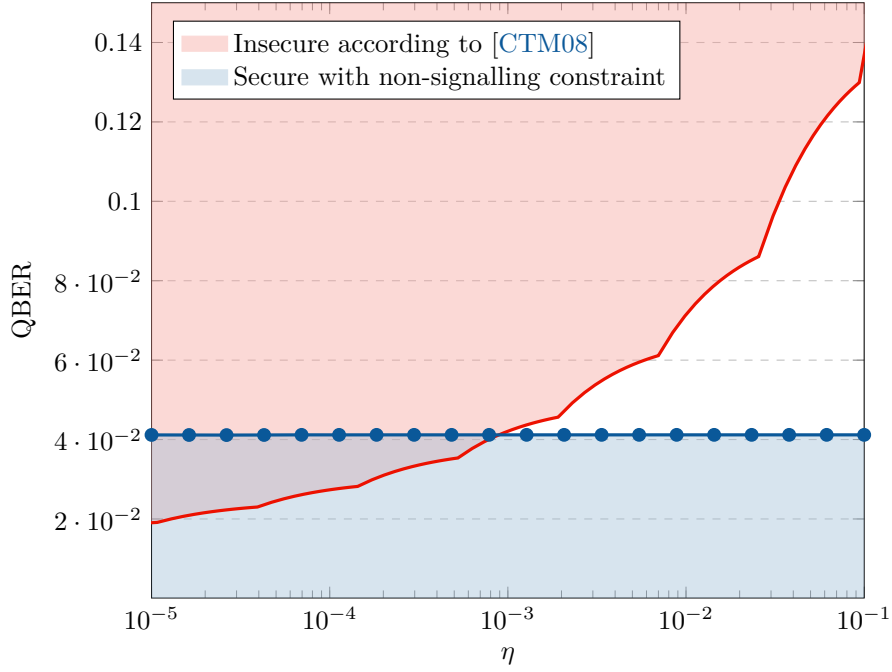

Next, we discuss the relation of our work to the upper bounds on DPS derived in \cite{Curty2008}. In this work, the authors discovered an attack where Eve performs an intercept-resend attack but only resends the pulses if she gets a sufficient number of consecutive conclusive outcomes. This allows Eve to exploit losses to reduce the detectable effects (i.e., the QBER) of her attack. This attack constitutes an entanglement-breaking channel, and as a result, the authors found a parameter regime in which DPS QKD is insecure. Note, however, that this attack violates \cref{cond:sequential}; hence, we do not necessarily expect that this upper bound holds for our security claim. In fact, using our methods, we can derive a parameter regime for which the DPS QKD protocol is secure, as shown in \cref{fig:upper_bounds}. We observe that there exists a region where the two regimes overlap, i.e., the security claims disagree. This shows that the attack in \cite{Curty2008} is stronger than any collective attack since those are covered by our security proof. Furthermore, the converse also holds: any attempt to reduce the security of DPS QKD to collective attacks necessarily requires additional assumptions (since in such a reduction, you need to exclude the type of attack reported in \cite{Curty2008}).

% !TeX spellcheck = en_GB
\section{Conclusion and outlook}
\label{sec:outlook}

In this work, we proved the security of DPS QKD against general attacks by exploiting relativistic constraints. In particular, we use relativity to enforce a non-signalling constraint on the eavesdropper, i.e., the eavesdropper can only signal from previous to future rounds but not in the other direction. This strategy then allows us to reduce the DPS QKD protocol to a relativistic protocol. We then applied methods from quantum information theory, relativity, and quantum optics to prove the security of this relativistic protocol. The particular methods of interest are the generalised entropy accumulation theorem (discussed in \cref{sec:qkd_security}), a formal way of treating non-signalling (discussed in \cref{sec:causality}), and the squashing technique (discussed in \cref{sec:quantum_optics}). We observed linear scaling of the secret key rate as a function of the detection efficiency, which is the best we can hope for \cite{Takeoka2014,Pirandola2017}. This observation and the practicality of implementing DPS QKD make the protocol attractive for real-world implementations. Naturally, one may ask whether it would be possible to prove the security of the DPS QKD protocol against general attacks without the non-signalling constraint. By comparing our results with upper-bounds for DPS QKD derived in \cite{Curty2008}, we observed that our security statement can violate these bounds (which do not satisfy the non-signalling condition). Since any collective attack fulfils the non-signalling property, it follows that it is impossible to reduce the security analysis of the DPS QKD protocol to collective attacks without additional assumptions. Since many proof techniques proceed by reducing security against general attacks to the security against collective attacks, they cannot be applied to DPS QKD without such additional assumptions (such as the non-signalling assumption). Furthermore, even if one were to prove the security of DPS without the non-signalling assumption, this would incur a loss of secret key rate and hence could limit the practicality of the protocol.

In addition, the insight that coherent attacks are stronger than collective attacks for DPQ QKD sheds light on the limitations of current security proof techniques. State-of-the-art techniques such as the (generalised) EAT and the quantum de Finetti theorem rely on demonstrating that general attacks are not stronger than collective attacks, thereby simplifying the task to proving security against collective attacks. These techniques have proven effective, as only a handful of protocols are known to exhibit instances where coherent attacks surpass collective attacks, primarily in device-independent protocols where the inner workings of quantum devices remain uncharacterised \cite{Thinh2016,Sandfuchs2023}. Our results thus provide the first example of a device-\emph{dependent} QKD protocol where general attacks are stronger than collective attacks.  This outcome is consistent with the nature of the DPS QKD protocol; for protocols demonstrating an iid structure, it is unsurprising that adversaries cannot gain an advantage by introducing correlations between rounds. This is different for protocols where individual rounds are not independent of each other; here, general attacks can exploit these correlations, something not attainable by collective attacks. Consequently, it is reasonable to speculate that, in general, protocols lacking an iid structure may frequently witness general attacks prevailing over collective ones, thereby rendering standard security proof techniques unsuitable for direct application. As such, the methods presented in our work provide a way to identify under which conditions state-of-the-art techniques still allow to prove security against general attacks for such protocols.

Our work opens up several directions for future research.
A natural question to ask is whether similar techniques could be applied to other distributed phase reference protocols such as the coherent one-way protocol \cite{Stucki2005}. Here, we note that the trick of exploiting the non-signalling condition to cast the protocol into a relativistic protocol works quite generally. The main challenge when trying to apply the techniques to other protocols lies in finding a squashing map which is itself non-signalling. In general, this is a difficult problem which we leave for future work.

The security analysis presented in this paper could be improved in many places. Firstly, we assume here that both detectors have equal efficiency, which is required to be able to push all losses from the detectors into the channel. This allows us to consider ideal detectors when applying the squashing technique. Recently, the scenario with unequal detection efficiencies has been studied in \cite{Zhang2021}. However, it is not obvious how their technique could be applied to our protocols, since we require the ``squashed'' protocol to retain the relativistic constraint on Eve's attack. Similar problems arise when trying to apply different dimension reduction techniques such as the one presented in \cite{Upadhyaya2021} to our protocols. 
Furthermore, the same caveats of typical QKD security analyses apply to this paper: If the implementation does not match with the theoretical model of the devices, the protocol could become insecure. One example of such a limitation are imperfections in the source, which we do not consider here. Lastly, there are some places where the security analysis could be tightened. To carry out the security proof of DPS QKD, we need to give Eve more power than she  has in practice (see \cref{sec:security_proof_dps}), which is required to put the protocol into a form that allows for the application of the generalised EAT. This requirement, however, seems rather artificial; hence, one can hope that it is possible to avoid it. Unfortunately, it seems that reducing Eve's area of influence would also prohibit one from finding (an obvious) squashing map for the DPS QKD protocol.

Finally, there are several other directions that can be considered for the relativistic QKD protocol. Compared to other protocols, the relativistic QKD protocol has quite a low threshold QBER. To remedy this, one could try to allow Alice to choose different bases to encodce her key bit. This is motivated by the fact that the BB84 and six-state protocols can tolerate a higher QBER than the B92 protocol and so we might hope to observe similar effects here. Similarly, there could be opportunities for improved key rates by considering a phase-randomized version of the relativistic protocol. This is motivated by the observation that some other protocols benefit from phase randomization \cite{Lo2006}.

\section*{Code availability}

The code and data necessary to reproduce the results of this paper are available at\\
\href{https://gitlab.phys.ethz.ch/martisan/dps-key-rates}{https://gitlab.phys.ethz.ch/martisan/dps-key-rates}.

\section*{Acknowledgements}

We thank Tony Metger, Renato Renner, and Ernest Tan for helpful comments and discussions. This work was supported by the Air Force Office of Scientific Research (AFOSR), grant No.~FA9550-19-1-0202, the QuantERA project eDICT, the National Centre of Competence in Research SwissMAP, and the Quantum Center at ETH Zurich. VV is supported by an ETH Postdoctoral Fellowship and acknowledges financial support from the Swiss National Science Foundation (SNSF) Grant Number 200021\_188541.

\bibliographystyle{halpha}
\bibliography{RelativisticQKDbib}

\newpage
\appendix

\appendixpage
\startcontents[sections]
{\hypersetup{linkcolor=black}\printcontents[sections]{l}{1}{\setcounter{tocdepth}{2}}}

% !TeX spellcheck = en_GB
\section{Technical definitions}
\label{sec:tech_defs}
\begin{defn}
	\label{def:states}
    Let $\rho_{XA} \in \mathcal{S}(XA)$ be a classical-quantum state, i.e., $\rho_{XA}$ can be written in the form
    \begin{align}
        \rho_{XA} = \sum_{x \in \mathcal{X}} p(x) \ketbra{x}{x}_X \otimes \rho_A^{[x]},
    \end{align}
    where $\mathcal{X}$ is some alphabet, $p(x)$ is a probability distribution over $\mathcal{X}$, and $\rho_A^{[x]} \in \mathcal{S}(A)$. For an event $\Omega \subseteq \mathcal{X}$ we define the following states: The subnormalised state $(\rho_{\land\Omega})_{XA}$ conditioned on $\Omega$,
    \begin{align}
        (\rho_{\land\Omega})_{XA} &= \sum_{x \in \Omega} p(x) \ketbra{x}{x}_X \otimes \rho_A^{[x]},
    \end{align}
	and the normalised state $(\rho_{|\Omega})_{XA}$ conditioned on $\Omega$,
	\begin{equation}
		 (\rho_{|\Omega})_{XA} = \frac{(\rho_{\land\Omega})_{XA}}{\rho[\Omega]},\quad \mathrm{where}\;\rho[\Omega] = \tr \left[ (\rho_{\land\Omega})_{XA} \right] = \sum_{x \in \Omega} p(x).
	\end{equation}
\end{defn}

\begin{defn} (von Neumann entropy)
    Let $A$ and $B$ be two quantum systems and $\rho_{A} \in \mathcal{S}(A)$ be a state. The \emph{entropy} of $\rho_{A}$ is defined as
    \begin{align}
        H(A)_\rho = -\tr \left[ \rho_A \log \rho_A \right].
    \end{align}
    For a state $\rho_{AB} \in \mathcal{S}(AB)$ we define the \emph{conditional entropy} as
    \begin{align}
        H(A|B)_\rho = H(AB)_\rho - H(B)_\rho,
    \end{align}
    where $H(B)_\rho$ is the entropy evaluated on $\rho_B = \tr_A \rho_{AB}$.
\end{defn}

\begin{defn} (Generalised fidelity)
    Let $\rho_A, \sigma_A \in \mathcal{S}_\leq(A)$ be two subnormalised states. Define the \emph{fidelity} by
    \begin{align*}
        F(\rho_A, \sigma_A) = \left( \tr | \sqrt{\rho_A}\sqrt{\sigma_A} | + \sqrt{(1-\tr \rho_A)(1 - \tr \sigma_A)} \right)^2.
    \end{align*}
\end{defn}
\begin{defn} \label{def:purified_distance} (Purified distance)
    Let $\rho_A, \sigma_A \in \mathcal{S}_\leq(A)$, define the \emph{purified distance} as
    \begin{align*}
        P(\rho_A, \sigma_A) = \sqrt{1 - F(\rho_A, \sigma_A)}.
    \end{align*}
\end{defn}

\begin{defn} ($\varepsilon$-ball)
    Let $\varepsilon > 0$ and $\rho_A \in \mathcal{S}_\leq(A)$, we define the \emph{$\varepsilon$-ball around $\rho_A$} as
    \begin{equation}
        \mathcal{B}^\varepsilon(\rho_A) = \{ \sigma_A \in \mathcal{S}_\leq(A) \; | \; P(\rho_A, \sigma_A) < \varepsilon \},
    \end{equation}
    where $P(\rho_A, \sigma_A)$ denotes the purified distance (\cref{def:purified_distance}).
\end{defn}

\begin{defn} (Smooth min and max-entropies)
    Let $\varepsilon > 0$ and $\rho_{AB} \in \mathcal{S}_\leq(AB)$ be a quantum state, then the \emph{smooth min-entropy} is defined as
    \begin{align}
        H_\mathrm{min}^\varepsilon (A|B)_\rho = -\log \inf_{\tilde{\rho}_{AB}} \inf_{\sigma_B} \left\| \tilde{\rho}^{1/2}_{AB} \sigma_B^{-1/2} \right\|^2_\infty,
    \end{align}
    where the optimizations are over all $\tilde{\rho}_{AB} \in B^\varepsilon(\rho_{AB})$ and $\sigma_B \in \mathcal{S}(B)$. Similarly we define the \emph{smooth max-entropy} as
    \begin{align}
        H_\mathrm{max}^\varepsilon (A|B)_\rho = \log \inf_{\tilde{\rho}_{AB}} \sup_{\sigma_B} \left\| \tilde{\rho}^{1/2}_{AB} \sigma_B^{1/2} \right\|^2_1,
    \end{align}
    where $\tilde{\rho}_{AB}$ and $\sigma_B$ are as before.
\end{defn}

\section{Proof of Lemma~\ref{lem:choi}}
\label{sec:proof_choi}

To establish the equivalence between \cref{defn:nonsignalling} of signalling and the condition on the Choi state given in \cref{eq: choi_condition}, the intermediate condition given by \cref{eq: nonsignalling2} will be useful. As this condition is shown to be equivalent to \cref{defn:nonsignalling} in \cite{Ormrod2023}, establishing the equivalence between \cref{eq: nonsignalling2} and \cref{eq: choi_condition} would conclude the proof. We repeat \cref{eq: nonsignalling2} below for convenience, it states that $S$ does not signal to $R'$ in the CPTP map $\mathcal{E}_{SR\rightarrow S'R'}$ if there exists a CPTP map  $\mathcal{E}_{R\rightarrow R'}$ such that
\begin{equation}
\label{eq: choi_proof1}
 \tr_{S'} \circ \mathcal{E}_{SR\rightarrow S'R'} = \tr_S \otimes \mathcal{E}_{R\rightarrow R'}.
\end{equation}

 We use the above equation to establish the necessary direction, a diagrammatic version of the proof of this part is given in \cref{fig:choi_proof_necc}. 
 \begin{align}
 \label{eq:choi_proof_nec}
     \begin{split}
       \tr_{S'} [ \mathcal{C}(\mathcal{E}_{SR\rightarrow S'R'})] &= \tr_{S'}[(\mathcal{I}_{\bar{S}\bar{R}}\otimes \mathcal{E}_{SR\rightarrow S'R'}) \ket{\Phi}\bra{\Phi}_{\bar{S}\bar{R}SR}]\\
       &= (\mathcal{I}_{\bar{S}\bar{R}}\otimes \tr_{S'} \circ \mathcal{E}_{SR\rightarrow S'R'}) \ket{\Phi}\bra{\Phi}_{\bar{S}\bar{R}SR}\\
       &= (\mathcal{I}_{\bar{S}\bar{R}}\otimes \tr_S \otimes \mathcal{E}_{R\rightarrow R'}) \ket{\Phi}\bra{\Phi}_{\bar{S}\bar{R}SR}\\
         &= \tr_S[\ket{\Phi}\bra{\Phi}_{\bar{S}S}] \otimes (\mathcal{I}_{\bar{R}}\otimes \mathcal{E}_{R\rightarrow R'})\ket{\Phi}\bra{\Phi}_{\bar{R}R}\\
         &=\frac{\mathds{1}_{\bar{S}}}{d_{\bar{S}}}\otimes (\mathcal{I}_{\bar{R}}\otimes \mathcal{E}_{R\rightarrow R'})\ket{\Phi}\bra{\Phi}_{\bar{R}R}\\
       &= \frac{\mathds{1}_{\bar{S}}}{d_{\bar{S}}}\otimes \tr_{\bar{S}S'} [ \mathcal{C}(\mathcal{E}_{SR\rightarrow S'R'})].
     %  &= \tr_{\bar{A}}[\ket{\Phi}\bra{\Phi}_{A\bar{A}}] \otimes \\
     %  &=\frac{\mathds{1}_{\bar{A}}}{d_{\bar{A}}}\otimes  \tr_{\bar{A}C} [ \mathcal{C}(\mathcal{E}_{AB\rightarrow CD})].
     \end{split}
 \end{align}

In going from the third to the fourth step in the above, we have used the fact that $\ket{\Phi}_{\bar{S}\bar{R}SR}=\sum_{i,j}\ket{ijij}_{\bar{S}\bar{R}S'R'}\equiv \sum_i\ket{ii}_{\bar{S}S}\otimes \sum_j\ket{jj}_{\bar{R}R}=\ket{\Phi}_{\bar{S}S}\otimes \ket{\Phi}_{\bar{R}R}$.

\begin{figure}[h]
    \centering
    \begin{tikzpicture}[trace/.pic={\draw [thick](-0.4,0)--(0.4,0);\draw [thick](-0.3,0.1)--(0.3,0.1);\draw [thick](-0.2,0.2)--(0.2,0.2);\draw [thick](-0.1,0.3)--(0.1,0.3);},phiprep/.pic={\node[myblue!80!black] at (1.3,-0.65) {\LARGE{$\Phi$}};\draw[draw=myblue!80!black,fill=myblue!40!white, fill opacity=0.4] (0,0) arc (180:360:1.3);
\draw[myblue!80!black] (0,0)--(2.6,0); },scale=0.5, transform shape]

\draw[thick, fill=LightGray, opacity=0.75 ] (0,0) rectangle node[align=center]{\LARGE{$\mathcal{E}_{SR\rightarrow S'R'}$}} (3,2);
  
  \draw[thick,black,->] (0.5,-1)--node[anchor=east]{\Large$S$}(0.5,0);   \draw[thick,black,->] (2.5,-1)--node[anchor=east]{\Large$R$}(2.5,0); 
  
   \draw[thick,black,->] (0.5,2)--node[anchor=east]{\Large$S'$}(0.5,4); \draw[thick,black,->] (2.5,2)--node[anchor=east]{\Large$R'$}(2.5,4); 
     \draw[thick] (-1.5,-2.5) to[out=90,in=270] (0.5,-1);
     \draw[thick] (0.5,-2.5) to[out=90,in=270] (-1.5,-1);
\draw[thick] (-1.5,-1)--(-1.5,2);\draw[thick,->] (-1.5,2) --node[anchor=east]{\Large$\bar{R}$}(-1.5,4);
 \draw[thick] (-3.5,-2.5) --(-3.5,2); \draw[thick,->] (-3.5,2) --node[anchor=east]{\Large$\bar{S}$}(-3.5,4);
     \draw[thick] (2.5,-2.5) to[out=90,in=270] (2.5,-1);

\draw (0.2,-2.5) pic {phiprep}; \draw (-3.7,-2.5) pic {phiprep};
\draw (0.5,4) pic {trace};

\node[myblue!80!black] at (-8.8,1.3) {\LARGE{$\mathcal{C}(\mathcal{E}_{SR\rightarrow S'R'})$}};
\draw[draw=myblue!80!black,fill=myblue!40!white, fill opacity=0.4] (-10.6,2) arc (180:360:1.8);
\draw[myblue!80!black] (-10.6,2)--(-7,2);

\draw[thick,->] (-7.3,2)--node[anchor=east]{\Large$R'$}(-7.3,4); \draw[thick,->] (-8.3,2)--node[anchor=east]{\Large$S'$}(-8.3,4); \draw[thick,->] (-9.3,2)--node[anchor=east]{\Large$\bar{R}$}(-9.3,4); \draw[thick,->] (-10.3,2)--node[anchor=east]{\Large$\bar{S}$}(-10.3,4); \draw (-8.3,4) pic {trace};

\node at (-5.25,1) {\huge{$=$}};

\begin{scope}[shift={(10.5,0)}]
\node at (-5.5,1) {\huge{$=$}};

    \draw[thick, fill=LightGray, opacity=0.75 ] (1.5,0) rectangle node[align=center]{\LARGE{$\mathcal{E}_{R\rightarrow R'}$}} (3.5,2);
  
  \draw[thick,black,->] (0.5,-1)--node[anchor=east]{\Large$S$}(0.5,0);   \draw[thick,black,->] (2.5,-1)--node[anchor=east]{\Large$R$}(2.5,0); 
  
   \draw[thick,black,->] (2.5,2)--node[anchor=east]{\Large$R'$}(2.5,4); 
     \draw[thick] (-1.5,-2.5) to[out=90,in=270] (0.5,-1);
     \draw[thick] (0.5,-2.5) to[out=90,in=270] (-1.5,-1);
\draw[thick] (-1.5,-1)--(-1.5,2);\draw[thick,->] (-1.5,2) --node[anchor=east]{\Large$\bar{R}$}(-1.5,4);
 \draw[thick] (-3.5,-2.5) --(-3.5,2); \draw[thick,->] (-3.5,2) --node[anchor=east]{\Large$\bar{S}$}(-3.5,4);
     \draw[thick] (2.5,-2.5) to[out=90,in=270] (2.5,-1);

\draw (0.2,-2.5) pic {phiprep}; \draw (-3.7,-2.5) pic {phiprep};
\draw (0.5,0) pic {trace}; 

\end{scope}

\begin{scope}[shift={(0,-9.5)}]
\node at (-5.5,1) {\huge{$=$}};

    \draw[thick, fill=LightGray, opacity=0.75 ] (1.5,0) rectangle node[align=center]{\LARGE{$\mathcal{E}_{R\rightarrow R'}$}} (3.5,2);
  
    \draw[thick,black,->] (2.5,-1)--node[anchor=east]{\Large$R$}(2.5,0); 
  
   \draw[thick,black,->] (2.5,2)--node[anchor=east]{\Large$R'$}(2.5,4); 
   
     %\draw[thick] (0.5,-2.5) to[out=90,in=270] (-1.5,-1);
\draw[thick] (0.5,-2.5)--(0.5,2);\draw[thick,->] (0.5,2) --node[anchor=east]{\Large$\bar{R}$}(0.5,4);
 \draw[thick] (-3.5,-2.5) --(-3.5,2); \draw[thick,->] (-3.5,2) --node[anchor=east]{\Large$\bar{S}$}(-3.5,4);
     \draw[thick] (2.5,-2.5) to[out=90,in=270] (2.5,-1);

\draw (0.2,-2.5) pic {phiprep}; 
 
\node[myblue!80!black] at (-3.5,-3.2) {\LARGE{$\mathds{1}_{\bar{S}}/d_{\bar{S}}$}};
\draw[draw=myblue!80!black,fill=myblue!40!white, fill opacity=0.4] (-4.9,-2.5) arc (180:360:1.4);
\draw[myblue!80!black] (-4.9,-2.5)--(-2.1,-2.5);

\end{scope}

\begin{scope}[shift={(10.5,-9.5)}]
\node at (-5.5,1) {\huge{$=$}};

 \draw[thick] (-3.5,-2.5) --(-3.5,2); \draw[thick,->] (-3.5,2) --node[anchor=east]{\Large$\bar{S}$}(-3.5,4);

\node[myblue!80!black] at (-3.5,-3.2) {\LARGE{$\mathds{1}_{\bar{S}}/d_{\bar{S}}$}};
\draw[draw=myblue!80!black,fill=myblue!40!white, fill opacity=0.4] (-4.9,-2.5) arc (180:360:1.4);
\draw[myblue!80!black] (-4.9,-2.5)--(-2.1,-2.5);

\node[myblue!80!black] at (0.8,1.3) {\LARGE{$\mathcal{C}(\mathcal{E}_{SR\rightarrow S'R'})$}};
\draw[draw=myblue!80!black,fill=myblue!40!white, fill opacity=0.4] (-1,2) arc (180:360:1.8);
\draw[myblue!80!black] (-1,2)--(2.6,2);

\draw[thick,->] (2.3,2)--node[anchor=east]{\Large$R'$}(2.3,4); \draw[thick,->] (1.3,2)--node[anchor=east]{\Large$S'$}(1.3,4); \draw[thick,->] (0.3,2)--node[anchor=east]{\Large$\bar{R}$}(0.3,4); \draw[thick,->] (-0.7,2)--node[anchor=east]{\Large$\bar{S}$}(-0.7,4); \draw (1.3,4) pic {trace}; \draw (-0.7,4) pic {trace};

\end{scope}
\end{tikzpicture}
    \caption{Diagrammatic proof of the necessary part of \cref{lem:choi} (cf.~\cref{eq:choi_proof_nec}).}
    \label{fig:choi_proof_necc}
\end{figure}
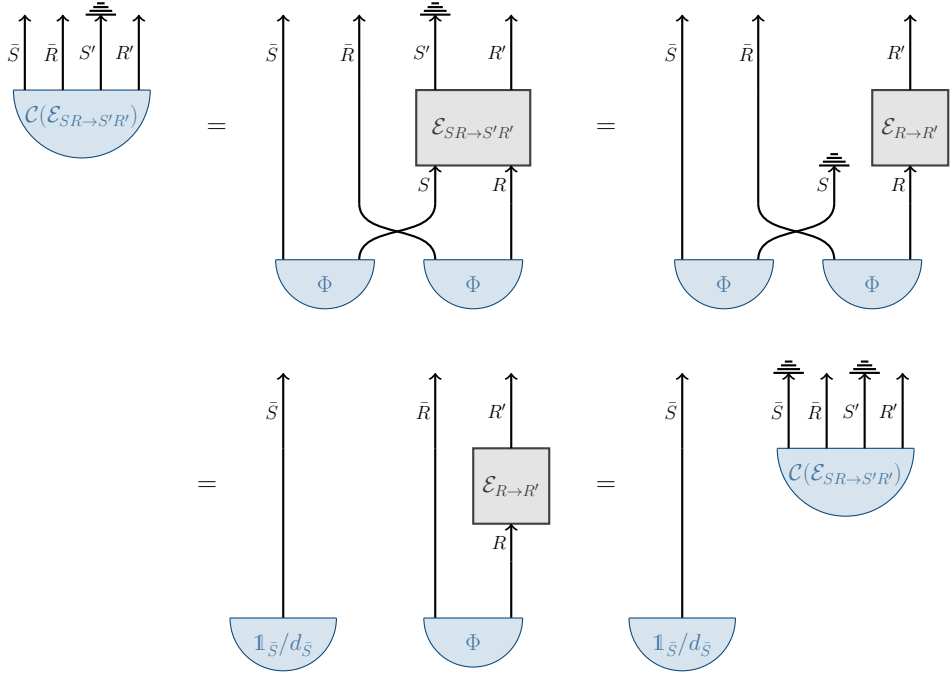

%From the above, we immediately have $ \tr_{\bar{A}C} [ \mathcal{C}(\mathcal{E}_{AB\rightarrow CD})] =  (\mathds{1}_B\otimes\mathcal{E}_{B\rightarrow D})\ket{\Phi}\bra{\Phi}_{\bar{B}B}$. Using this, we get $ \tr_C [ \mathcal{C}(\mathcal{E}_{AB\rightarrow CD})]= \frac{\mathds{1}_{\bar{A}}}{d_{\bar{A}}}\otimes \tr_{\bar{A}C} [ \mathcal{C}(\mathcal{E}_{AB\rightarrow CD})]$, as desired.

For the sufficiency part, we use the gate teleportation version of the inverse Choi isomophism, given as follows, keeping in mind that $ \mathcal{C}(\mathcal{E}_{SR\rightarrow S'R'})\in \mathcal{S}(\bar{S}\bar{R}S'R')$. 
\begin{equation}
    \mathcal{E}_{SR\rightarrow S'R'}(\rho_{SR})= \bra{\Phi}_{\bar{S}\bar{R}SR} \big(\rho_{SR}\otimes \mathcal{C}(\mathcal{E}_{SR\rightarrow S'R'})\big)\ket{\Phi}_{\bar{S}\bar{R}SR}.
\end{equation}

Then, employing \cref{eq: choi_condition}, we have the following. A diagrammatic version of the proof of the sufficiency part is given in \cref{fig:choi_proof_suff}. 
\begin{align}
 \label{eq:choi_proof_suff}
    \begin{split}
        \tr_{S'}[\mathcal{E}_{SR\rightarrow S'R'}(\rho_{SR})]&=  \bra{\Phi}_{\bar{S}\bar{R}SR} \big(\rho_{SR}\otimes \tr_{S'}[ \mathcal{C}(\mathcal{E}_{SR\rightarrow S'R'})]\big)\ket{\Phi}_{\bar{S}\bar{R}SR}\\
        &= \bra{\Phi}_{\bar{S}\bar{R}SR} \big(\rho_{SR}\otimes \frac{\mathds{1}_{\bar{S}}}{d_{\bar{S}}}\otimes \tr_{\bar{S}S'}[ \mathcal{C}(\mathcal{E}_{SR\rightarrow S'R'})]\big)\ket{\Phi}_{\bar{S}\bar{R}SR}.
    \end{split}
\end{align}

Now, notice that $\tr_{\bar{S}S'}[ \mathcal{C}(\mathcal{E}_{SR\rightarrow S'R'})]\in \mathcal{S}(\bar{R}R')$ can be regarded as the Choi state of some quantum CPTP map $\mathcal{E}_{R\rightarrow R'}: \mathcal{S}(R)\rightarrow \mathcal{S}(R')$, i.e., $\tr_{\bar{S}S'}[ \mathcal{C}(\mathcal{E}_{SR\rightarrow S'R'})]= \mathcal{C}(\mathcal{E}_{R\rightarrow R'})$. This is because of the fact that complete positivity of the map is equivalent to positivity of the Choi state and the trace preserving property of the map is equivalent to the normalisation of the Choi state. $ \mathcal{C}(\mathcal{E}_{SR\rightarrow S'R'})$ being the Choi state of a CPTP map ensures that $\tr_{\bar{S}S'}[ \mathcal{C}(\mathcal{E}_{SR\rightarrow S'R'})]$ too would be the Choi state of a CPTP map between the appropriately reduced spaces. Noticing that the maximally mixed state is the Choi state of the trace map, i.e., $\mathcal{C}(\tr_S)=\frac{\mathds{1}_S}{d_S}$, we have

\begin{align}
    \begin{split}
        \tr_{S'}[\mathcal{E}_{SR\rightarrow S'R'}(\rho_{SR})]&=  
     \bra{\Phi}_{\bar{S}\bar{R}SR} \big(\rho_{SR}\otimes \frac{\mathds{1}_{\bar{S}}}{d_{\bar{S}}}\otimes  \mathcal{C}(\mathcal{E}_{R\rightarrow R'})\big)\ket{\Phi}_{\bar{S}\bar{R}SR}\\
        &= \bra{\Phi}_{\bar{S}\bar{R}SR} \big(\rho_{SR}\otimes \mathcal{C}(\tr_S\otimes \mathcal{E}_{R\rightarrow R'})\big)\ket{\Phi}_{\bar{S}\bar{R}SR}\\
        &=\tr_S\otimes \mathcal{E}_{R\rightarrow R'}(\rho_{SR}).
    \end{split}
\end{align}

This establishes the result as it holds for all input states $\rho_{SR}$.

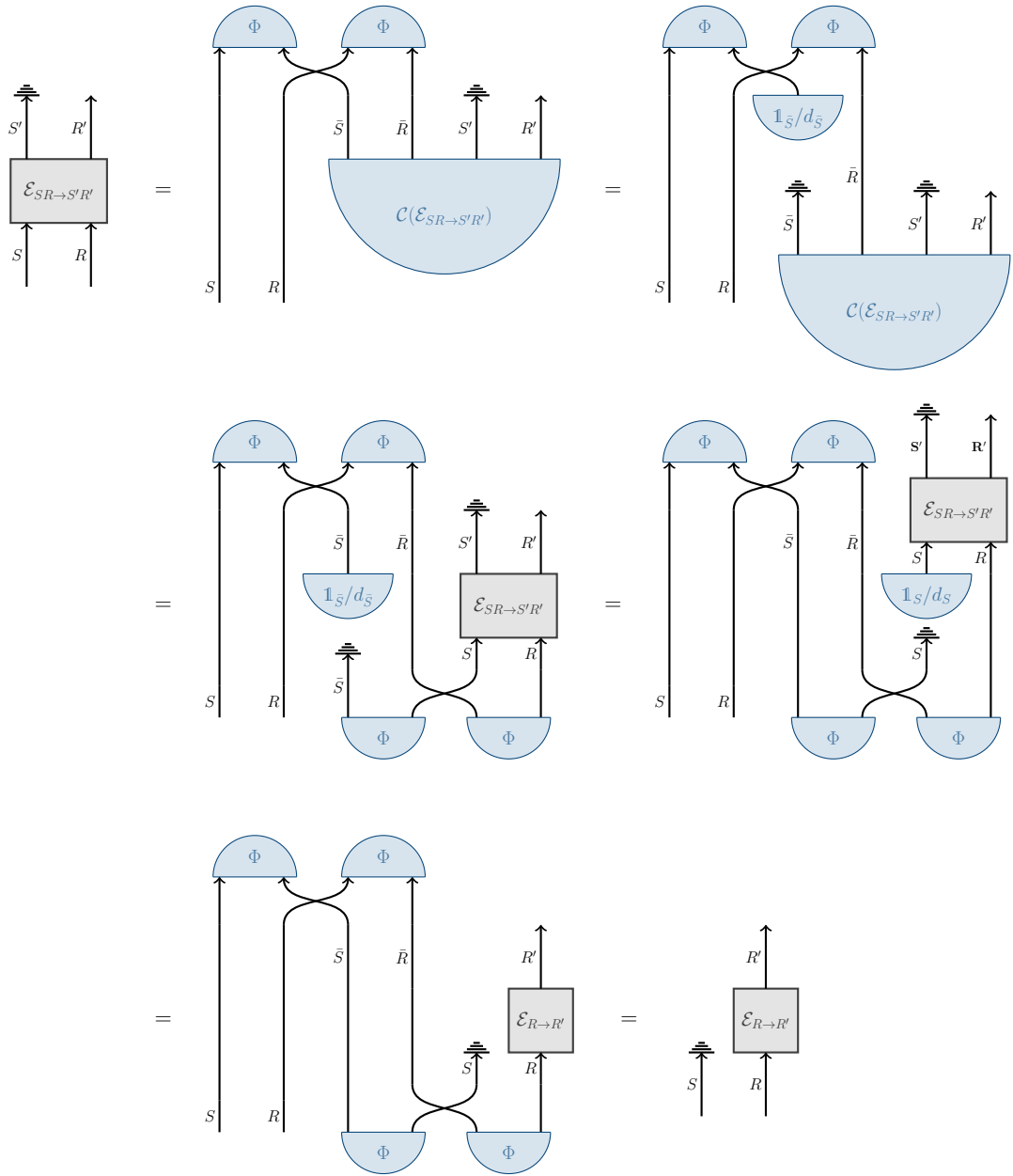
\begin{figure}
    \centering
    \begin{tikzpicture}[trace/.pic={\draw [thick](-0.4,0)--(0.4,0);\draw [thick](-0.3,0.1)--(0.3,0.1);\draw [thick](-0.2,0.2)--(0.2,0.2);\draw [thick](-0.1,0.3)--(0.1,0.3);},phiprep/.pic={\node[myblue!80!black] at (1.3,-0.65) {\LARGE{$\Phi$}};\draw[draw=myblue!80!black,fill=myblue!40!white, fill opacity=0.4] (0,0) arc (180:360:1.3);
\draw[myblue!80!black] (0,0)--(2.6,0); },phipost/.pic={\node[myblue!80!black] at (1.3,0.65) {\LARGE{$\Phi$}};\draw[draw=myblue!80!black,fill=myblue!40!white, fill opacity=0.4] (0,0) arc (180:0:1.3);
\draw[myblue!80!black] (0,0)--(2.6,0);},scale=0.45, transform shape]

   \draw[thick,black,->] (0.5,2)--node[anchor=east]{\Large$S'$}(0.5,4); \draw[thick,black,->] (2.5,2)--node[anchor=east]{\Large$R'$}(2.5,4); 
   \draw[thick] (-1.5,2) --node[anchor=east]{\Large$\bar{R}$}(-1.5,4); \draw[thick, ->] (-1.5,4)--(-1.5,5.5);
  \draw[thick] (-3.5,2) --node[anchor=east]{\Large$\bar{S}$}(-3.5,4);
 \draw[thick,->] (-3.5,4) to[out=90,in=270] (-5.5,5.5); \draw (0.5,4) pic {trace}; 

     \draw[thick,->] (-5.5,4) to[out=90,in=270] (-3.5,5.5);
     
\node[myblue!80!black] at (-0.5,0.2) {\LARGE{$\mathcal{C}(\mathcal{E}_{SR\rightarrow S'R'})$}};     
\draw[draw=myblue!80!black,fill=myblue!40!white, fill opacity=0.4] (-4.1,2) arc (180:360:3.6); \draw[myblue!80!black] (-4.1,2)--(3.1,2);

\draw (-3.7,5.5) pic {phipost}; \draw (-7.7,5.5) pic {phipost};

\draw[thick] (-5.5,-2.5)--node[anchor=east]{\Large$R$}(-5.5,-1.5); \draw[thick] (-7.5,-2.5)--node[anchor=east]{\Large$S$}(-7.5,-1.5); \draw[thick,->] (-7.5,4)--(-7.5,5.5); \draw[thick] (-7.5,-1.5)--(-7.5,4); \draw[thick] (-5.5,-1.5)--(-5.5,4);

\node at (-9.25,1) {\huge{$=$}};

\begin{scope}[shift={(-14,0)}]
    \draw[thick, fill=LightGray, opacity=0.75 ] (0,0) rectangle node[align=center]{\LARGE{$\mathcal{E}_{SR\rightarrow S'R'}$}} (3,2);
  
  \draw[thick,black,->] (0.5,-2)--node[anchor=east]{\Large$S$}(0.5,0);   \draw[thick,black,->] (2.5,-2)--node[anchor=east]{\Large$R$}(2.5,0); 
  
   \draw[thick,black,->] (0.5,2)--node[anchor=east]{\Large$S'$}(0.5,4); \draw[thick,black,->] (2.5,2)--node[anchor=east]{\Large$R'$}(2.5,4); \draw (0.5,4) pic {trace}; 
\end{scope}

\begin{scope}[shift={(14,0)}]
\node at (-9.25,1) {\huge{$=$}};

\begin{scope}[shift={(0,-3)}]
          \draw[thick,black,->] (0.5,2)--node[anchor=east]{\Large$S'$}(0.5,4); \draw[thick,black,->] (2.5,2)--node[anchor=east]{\Large$R'$}(2.5,4); 
   \draw[thick] (-1.5,2) --node[anchor=east]{\Large$\bar{R}$}(-1.5,7);
  \draw[thick,->] (-3.5,2) --node[anchor=east]{\Large$\bar{S}$}(-3.5,4);
 \draw (0.5,4) pic {trace};  \draw (-3.5,4) pic {trace};

\node[myblue!80!black] at (-0.5,0.2) {\LARGE{$\mathcal{C}(\mathcal{E}_{SR\rightarrow S'R'})$}};     
\draw[draw=myblue!80!black,fill=myblue!40!white, fill opacity=0.4] (-4.1,2) arc (180:360:3.6); \draw[myblue!80!black] (-4.1,2)--(3.1,2);
\end{scope}

\node[myblue!80!black] at (-3.5,3.3) {\LARGE{$\mathds{1}_{\bar{S}}/d_{\bar{S}}$}};
\draw[draw=myblue!80!black,fill=myblue!40!white, fill opacity=0.4] (-4.9,4) arc (180:360:1.4);
\draw[myblue!80!black] (-4.9,4)--(-2.1,4);

 \draw[thick,->] (-3.5,4) to[out=90,in=270] (-5.5,5.5);  \draw[thick, ->] (-1.5,4)--(-1.5,5.5);
     \draw[thick,->] (-5.5,4) to[out=90,in=270] (-3.5,5.5);
\draw (-3.7,5.5) pic {phipost}; \draw (-7.7,5.5) pic {phipost};

\draw[thick] (-5.5,-2.5)--node[anchor=east]{\Large$R$}(-5.5,-1.5); \draw[thick] (-7.5,-2.5)--node[anchor=east]{\Large$S$}(-7.5,-1.5); \draw[thick,->] (-7.5,4)--(-7.5,5.5); \draw[thick] (-7.5,-1.5)--(-7.5,4); \draw[thick] (-5.5,-1.5)--(-5.5,4);
\end{scope}

\begin{scope}[shift={(0,-13)}]
    \draw[thick, fill=LightGray, opacity=0.75 ] (0,0) rectangle node[align=center]{\LARGE{$\mathcal{E}_{SR\rightarrow S'R'}$}} (3,2);
  
  \draw[thick,black,->] (0.5,-1)--node[anchor=east]{\Large$S$}(0.5,0);   \draw[thick,black,->] (2.5,-1)--node[anchor=east]{\Large$R$}(2.5,0); 
    \draw[thick] (2.5,-2.5)--(2.5,-1);
   \draw[thick,black,->] (0.5,2)--node[anchor=east]{\Large$S'$}(0.5,4); \draw[thick,black,->] (2.5,2)--node[anchor=east]{\Large$R'$}(2.5,4); 
     \draw[thick] (-1.5,-2.5) to[out=90,in=270] (0.5,-1);
     \draw[thick] (0.5,-2.5) to[out=90,in=270] (-1.5,-1);
\draw[thick] (-1.5,-1)--(-1.5,2);\draw[thick] (-1.5,2) --node[anchor=east]{\Large$\bar{R}$}(-1.5,4); \draw[thick, ->] (-1.5,4)--(-1.5,5.5);
 
 \draw[thick] (-3.5,2) --node[anchor=east]{\Large$\bar{S}$}(-3.5,4);
\node[myblue!80!black] at (-3.5,1.3) {\LARGE{$\mathds{1}_{\bar{S}}/d_{\bar{S}}$}};
\draw[draw=myblue!80!black,fill=myblue!40!white, fill opacity=0.4] (-4.9,2) arc (180:360:1.4);
\draw[myblue!80!black] (-4.9,2)--(-2.1,2);
\draw[thick,->] (-3.5,-2.5) --node[anchor=east]{\Large$\bar{S}$}(-3.5,-0.5); \draw (-3.5,-0.5) pic {trace}; 
 
 \draw[thick,->] (-3.5,4) to[out=90,in=270] (-5.5,5.5);

     \draw[thick,->] (-5.5,4) to[out=90,in=270] (-3.5,5.5);

\draw (0.2,-2.5) pic {phiprep}; \draw (-3.7,-2.5) pic {phiprep};
\draw (0.5,4) pic {trace}; 

\draw (-3.7,5.5) pic {phipost}; \draw (-7.7,5.5) pic {phipost};

\draw[thick] (-5.5,-2.5)--node[anchor=east]{\Large$R$}(-5.5,-1.5); \draw[thick] (-7.5,-2.5)--node[anchor=east]{\Large$S$}(-7.5,-1.5); \draw[thick,->] (-7.5,4)--(-7.5,5.5); \draw[thick] (-7.5,-1.5)--(-7.5,4); \draw[thick] (-5.5,-1.5)--(-5.5,4);
\node at (-9.25,1) {\huge{$=$}};
\end{scope}

\begin{scope}[shift={(14,-13)}]
    \draw[thick, fill=LightGray, opacity=0.75 ] (0,3) rectangle node[align=center]{\LARGE{$\mathcal{E}_{SR\rightarrow S'R'}$}} (3,5);
  
  \draw[thick,black,->] (0.5,-1)--node[anchor=east]{\Large$S$}(0.5,0);   \draw[thick,black,->] (2.5,2)--node[anchor=east]{\Large$R$}(2.5,3); \draw[thick,black,->] (0.5,2)--node[anchor=east]{\Large$S$}(0.5,3);
    \draw[thick] (2.5,-2.5)--(2.5,2);
  \draw[thick,black,->] (0.5,5)--node[anchor=east]{\large$\mathbf{S'}$}(0.5,7);  \draw[thick,black,->] (2.5,5)--node[anchor=east]{\large$\mathbf{R'}$}(2.5,7); 
     \draw[thick] (-1.5,-2.5) to[out=90,in=270] (0.5,-1);
     \draw[thick] (0.5,-2.5) to[out=90,in=270] (-1.5,-1);
\draw[thick] (-1.5,-1)--(-1.5,2);\draw[thick] (-1.5,2) --node[anchor=east]{\Large$\bar{R}$}(-1.5,4); \draw[thick, ->] (-1.5,4)--(-1.5,5.5);
 
\draw[thick] (-3.5,-2.5)--node[anchor=east,pos=0.85]{\Large$\bar{S}$}(-3.5,4); \draw (0.5,0) pic {trace}; \draw (0.5,7) pic {trace};

\node[myblue!80!black] at (0.5,1.3) {\LARGE{$\mathds{1}_{S}/d_{S}$}};
\draw[draw=myblue!80!black,fill=myblue!40!white, fill opacity=0.4] (-0.9,2) arc (180:360:1.4);
\draw[myblue!80!black] (-0.9,2)--(1.9,2);
 
 \draw[thick,->] (-3.5,4) to[out=90,in=270] (-5.5,5.5);

     \draw[thick,->] (-5.5,4) to[out=90,in=270] (-3.5,5.5);

\draw (0.2,-2.5) pic {phiprep}; \draw (-3.7,-2.5) pic {phiprep};

\draw (-3.7,5.5) pic {phipost}; \draw (-7.7,5.5) pic {phipost};

\draw[thick] (-5.5,-2.5)--node[anchor=east]{\Large$R$}(-5.5,-1.5); \draw[thick] (-7.5,-2.5)--node[anchor=east]{\Large$S$}(-7.5,-1.5); \draw[thick,->] (-7.5,4)--(-7.5,5.5); \draw[thick] (-7.5,-1.5)--(-7.5,4); \draw[thick] (-5.5,-1.5)--(-5.5,4);
\node at (-9.25,1) {\huge{$=$}};
%\node at (5.25,1) {\huge{$=$}};
\end{scope}

\begin{scope}[shift={(0,-26)}]
    \draw[thick, fill=LightGray, opacity=0.75 ] (1.5,0) rectangle node[align=center]{\LARGE{$\mathcal{E}_{R\rightarrow R'}$}} (3.5,2);
  
  \draw[thick,black,->] (0.5,-1)--node[anchor=east]{\Large$S$}(0.5,0);   \draw[thick,black,->] (2.5,-1)--node[anchor=east]{\Large$R$}(2.5,0); 
    \draw[thick] (2.5,-2.5)--(2.5,-1);
  \draw[thick,black,->] (2.5,2)--node[anchor=east]{\Large$R'$}(2.5,4); 
     \draw[thick] (-1.5,-2.5) to[out=90,in=270] (0.5,-1);
     \draw[thick] (0.5,-2.5) to[out=90,in=270] (-1.5,-1);
\draw[thick] (-1.5,-1)--(-1.5,2);\draw[thick] (-1.5,2) --node[anchor=east]{\Large$\bar{R}$}(-1.5,4); \draw[thick, ->] (-1.5,4)--(-1.5,5.5);
 
\draw[thick] (-3.5,-2.5)--node[anchor=east,pos=0.85]{\Large$\bar{S}$}(-3.5,4); \draw (0.5,0) pic {trace}; 
 
 \draw[thick,->] (-3.5,4) to[out=90,in=270] (-5.5,5.5);

     \draw[thick,->] (-5.5,4) to[out=90,in=270] (-3.5,5.5);

\draw (0.2,-2.5) pic {phiprep}; \draw (-3.7,-2.5) pic {phiprep};

\draw (-3.7,5.5) pic {phipost}; \draw (-7.7,5.5) pic {phipost};

\draw[thick] (-5.5,-2.5)--node[anchor=east]{\Large$R$}(-5.5,-1.5); \draw[thick] (-7.5,-2.5)--node[anchor=east]{\Large$S$}(-7.5,-1.5); \draw[thick,->] (-7.5,4)--(-7.5,5.5); \draw[thick] (-7.5,-1.5)--(-7.5,4); \draw[thick] (-5.5,-1.5)--(-5.5,4);
\node at (-9.25,1) {\huge{$=$}};
\node at (5.25,1) {\huge{$=$}};
\end{scope}

\begin{scope}[shift={(7,-26)}]
    \draw[thick, fill=LightGray, opacity=0.75 ] (1.5,0) rectangle node[align=center]{\LARGE{$\mathcal{E}_{R\rightarrow R'}$}} (3.5,2);
  
  \draw[thick,black,->] (0.5,-2)--node[anchor=east]{\Large$S$}(0.5,0);   \draw[thick,black,->] (2.5,-2)--node[anchor=east]{\Large$R$}(2.5,0); 
  
\draw[thick,black,->] (2.5,2)--node[anchor=east]{\Large$R'$}(2.5,4); \draw (0.5,0) pic {trace}; 
\end{scope}
\end{tikzpicture}
    \caption{Diagrammatic proof of the sufficient part of \cref{lem:choi} (cf.~\cref{eq:choi_proof_suff}). Here, the upright semi-circles labelled $\Phi$ physically represent a post-selection on the outcome corresponding to the Bell state $\Phi$, following a joint measurement in the Bell basis on the associated systems. Mathematically, they correspond to projectors on to the Bell state $\Phi$.}
    \label{fig:choi_proof_suff}
\end{figure}

\section{Proof of Theorem~\ref{thm:causality_sequence}}
\label{sec:proof_causality_sequence}
We start by proving a simple version of the theorem (which follows quite easily from previous works such as \cite{Portmann_2017,Ormrod2023}), and then generalise it. Suppose that $B$ does not signal to $C$ in a CPTP map $\mathcal{E}_{AB\rightarrow CD}:\mathcal{S}(AB)\rightarrow \mathcal{S}(CD)$. Then, we can show that there must exist CPTP maps $\mathcal{E}_1: \mathcal{S}(A)\rightarrow \mathcal{S}(CQ) $ and $\mathcal{E}_2:\mathcal{S}(QB)\rightarrow \mathcal{S}(D)$ such that (see also \cref{fig: sequence_proof1})
\begin{equation}
\label{eq:proof_causality_sequence1}
    \mathcal{E}_{AB\rightarrow CD}= \big(\mathcal{I}_{A}\otimes\mathcal{E}_2\big)\circ\big(\mathcal{E}_1\otimes \mathcal{I}_B\big).
\end{equation}

To establish this, we first use the fact that $B$ does not signal to $C$ in $\mathcal{E}_{AB\rightarrow CD}$ is equivalent to the condition that there exists a CPTP map $\mathcal{E}_{A\rightarrow C}: \mathcal{S}(A)\rightarrow \mathcal{S}(C)$ such that
\begin{equation}
    \tr_{D}\circ \mathcal{E}_{AB\rightarrow CD}=\mathcal{E}_{A\rightarrow C}\circ \tr_B.
\end{equation}
This follows from \cref{eq: nonsignalling2} (which is implied by the results of \cite{Ormrod2023}).

Then using a result shown in \cite{Portmann_2017} (Lemma 5.1), it follows that there are Stinespring representations $U_{AB\rightarrow CDP}: \mathcal{S}(AB)\rightarrow \mathcal{S}(CDP)$ and $U_{A\rightarrow CQ}: \mathcal{S}(A)\rightarrow \mathcal{S}(CQ)$ (which are isometries) of $\mathcal{E}_{AB\rightarrow CD}$ and $\mathcal{E}_{A\rightarrow C}$ respectively, and an isometry $V_{QB\rightarrow DP}: \mathcal{S}(QB)\rightarrow \mathcal{S}(DP)$ such that 
\begin{equation}
\label{eq:proof_causality_sequence2}
  U_{AB\rightarrow CDP}= \big(V_{QB\rightarrow DP}\otimes \mathcal{I}_{A}\big)\circ\big(\mathcal{I}_B\otimes U_{A\rightarrow CQ}\big).
\end{equation}

Now applying $\tr_P$ to both sides of the above equation, and using the fact that $\mathcal{E}_{AB\rightarrow CD}=\tr_P\circ U_{AB\rightarrow CDP}$ (as $U_{AB\rightarrow CDP}$ is a Stinespring representation of $\mathcal{E}_{AB\rightarrow CD}$), we have
\begin{equation}
\label{eq:proof_causality_sequence3}
  \mathcal{E}_{AB\rightarrow CD} =\tr_P\circ  \big(V_{QB\rightarrow DP}\otimes \mathcal{I}_{A}\big)\circ\big(\mathcal{I}_B\otimes U_{A\rightarrow QC}\big).
\end{equation}

Defining $\mathcal{E}_1\coloneqq U_{A\rightarrow CQ}$ and $\mathcal{E}_2\coloneqq\tr_P\circ V_{QB\rightarrow DP}$, we immediately obtain the desired equation \cref{eq:proof_causality_sequence1}.

\begin{figure}
    \centering
    \begin{tikzpicture}[trace/.pic={\draw [thick](-0.4,0)--(0.4,0);\draw [thick](-0.3,0.1)--(0.3,0.1);\draw [thick](-0.2,0.2)--(0.2,0.2);\draw [thick](-0.1,0.3)--(0.1,0.3);}, scale=0.6, transform shape]

\draw[thick, fill=LightGray, opacity=0.75 ] (0,0) rectangle node[align=center]{\Large{$\mathcal{E}_{AB\rightarrow CD}$}} (3,2);

 \draw[thick,black,->] (0.5,-2)--node[anchor=east]{\large$A$}(0.5,0);  \draw[thick,black,->] (2.5,-2)--node[anchor=east]{\large$B$}(2.5,0); \draw[thick,black,->] (0.5,2)--node[anchor=east]{\large$C$}(0.5,4); \draw[thick,black,->] (2.5,2)--node[anchor=east]{\large$D$}(2.5,4);

   \node[align=center, black] at (4.5,1) {\Huge{$\mathbf{=}$}};

\draw[thick, fill=LightGray, opacity=0.75 ] (6.5,0) rectangle node[align=center]{\Large{$\mathcal{E}_1$}} (8.5,2);
\draw[thick,black,->] (7.5,-2)--node[anchor=east]{\large$A$}(7.5,0);  
\draw[thick,black,->] (7.5,2)--node[anchor=east]{\large$C$}(7.5,4);

\draw[thick, fill=LightGray, opacity=0.75 ] (10.5,0) rectangle node[align=center]{\Large{$\mathcal{E}_2$}} (12.5,2);

\draw[thick,black,->] (8.5,1)--node[anchor=south]{\large$Q$}(10.5,1);

\draw[thick,black,->] (11.5,-2)--node[anchor=east]{\large$B$}(11.5,0); 
\draw[thick,black,->] (11.5,2)--node[anchor=east]{\large$D$}(11.5,4);

 \node[align=center, black] at (14.5,1) {\Huge{$\mathbf{=}$}};

 \begin{scope}[shift={(10,0)}]
 \draw[thick, fill=LightGray, opacity=0.75 ] (6.5,0) rectangle node[align=center]{\Large{$U_{A\rightarrow CQ}$}} (8.5,2);
\draw[thick,black,->] (7.5,-2)--node[anchor=east]{\large$A$}(7.5,0);  
\draw[thick,black,->] (7.5,2)--node[anchor=east]{\large$C$}(7.5,4);

\draw[thick, fill=LightGray, opacity=0.75 ] (10.5,0) rectangle node[align=center]{\Large{$V_{QB\rightarrow DP}$}} (14.5,2);

\draw[thick,black,->] (8.5,1)--node[anchor=south]{\large$Q$}(10.5,1);

\draw[thick,black,->] (12.5,-2)--node[anchor=east]{\large$B$}(12.5,0); 
\draw[thick,black,->] (11.5,2)--node[anchor=east]{\large$D$}(11.5,4); 

\draw[thick,black,->] (13.5,2)--node[anchor=east]{\large$P$}(13.5,4); 
\draw (13.5,4) pic {trace};

 \end{scope}
\end{tikzpicture}
    \caption{Diagrammatic representation of \cref{eq:proof_causality_sequence1} and~\cref{eq:proof_causality_sequence3}. If $B$ does not signal to $C$ in a CPTP map $\mathcal{E}_{AB\rightarrow CD}$ (left most), then it admits a decomposition as shown in the middle where $\mathcal{E}_1$ and $\mathcal{E}_2$ are CPTP maps, or in terms of isometries $U_{A\rightarrow CQ}$ and $V_{QB\rightarrow DP}$ as shown in the right most diagram. }
    \label{fig: sequence_proof1}
\end{figure}
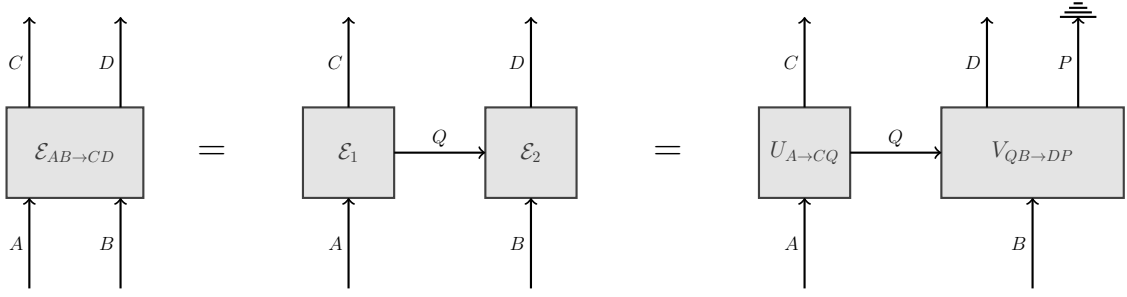

We now apply \cref{eq:proof_causality_sequence1} recursively to prove the general statement of the theorem. Consider the map $\mathcal{E}:\mathcal{S}(E_0S_1...S_n)\rightarrow \mathcal{S}(S_1'...S_n'E_n)$, and recall that we are given that this map is such that $S_i$ does not signal to $S_1',...,S_{i-1}'$ for every $i\in\{2,...,n\}$. Using the non-signalling relation $S_n$ does not signal to $S_1',...,S_{n-1}'$ along with \cref{eq:proof_causality_sequence1} (with $E_0S_1...S_{n-1}$ playing the role of $A$, $S_n$ playing the role of $B$,  $S_1',...,S_{n-1}'$ playing the role of $C$ and $S_n'E_n$ playing the role of $D$), we immediately have that there must exist some CPTP maps $\tilde{\mathcal{E}}_{n-1}:\mathcal{S}(E_0S_1...S_{n-1})\rightarrow \mathcal{S}(S_1'...S_{n-1}'E_{n-1})$ and $\mathcal{E}_n:\mathcal{S}(E_{n-1}S_n)\rightarrow \mathcal{S}(S_n'E_n)$ (where $E_{n-1}$ plays the role of $Q$), such that the following holds (illustrated in \cref{fig:sequence_proof2}):

\begin{equation}
\label{eq:proof_causality_sequence4}
    \mathcal{E}=\big(\mathcal{I}_{E_0S_1...S_{n-1}}\otimes \mathcal{E}_n\big)\circ\big(\tilde{\mathcal{E}}_{n-1}\otimes \mathcal{I}_{S_n}\big).
\end{equation}

\begin{figure}[h]
    \centering
    \begin{tikzpicture}[trace/.pic={\draw [thick](-0.4,0)--(0.4,0);\draw [thick](-0.3,0.1)--(0.3,0.1);\draw [thick](-0.2,0.2)--(0.2,0.2);\draw [thick](-0.1,0.3)--(0.1,0.3);}, scale=0.48, transform shape]
 
\draw[thick, fill=LightGray, opacity=0.75 ] (0,0) rectangle node[align=center]{\LARGE{$\mathcal{E}$}} (5.5,2);

 \draw[thick,black,->] (0.5,-2)--node[anchor=east]{\Large$S_1$}(0.5,0); 

  \draw[thick,black,->] (2,-2)--node[anchor=east]{\Large$S_2$}(2,0); 
  
 \draw[thick,black,->] (5,-2)--node[anchor=east]{\Large$S_n$}(5,0); 

\node at (3.5,-1) {$\dots$};

  \draw[thick,black,->] (0.5,2)--node[anchor=east]{\Large$S_1'$}(0.5,4); 

  \draw[thick,black,->] (2,2)--node[anchor=east]{\Large$S_2'$}(2,4); 
  
 \draw[thick,black,->] (5,2)--node[anchor=east]{\Large$S_n'$}(5,4); 

 \node at (3.5,3) {$\dots$};

  \draw[thick,black,->] (-2,1)--node[anchor=south]{\Large$E_0$}(0,1); 
    \draw[thick,black,->] (5.5,1)--node[anchor=south]{\Large$E_n$}(7.5,1); 

    \node[align=center, black] at (9.5,1) {\Huge{$\mathbf{=}$}};

  \draw[thick, fill=LightGray, opacity=0.75 ] (13.5,0) rectangle node[align=center]{\LARGE{$\tilde{\mathcal{E}}_{n-1}$}} (19.5,2);  
  \draw[thick,black,->] (11.5,1)--node[anchor=south]{\Large$E_0$}(13.5,1); 
    \draw[thick,black,->] (14.5,-2)--node[anchor=east]{\Large$S_1$}(14.5,0); 
 \draw[thick,black,->] (14.5,2)--node[anchor=east]{\Large$S_1'$}(14.5,4);

 \draw[thick,black,->] (18.5,-2)--node[anchor=east]{\Large$S_{n-1}$}(18.5,0); 
 \draw[thick,black,->] (18.5,2)--node[anchor=east]{\Large$S_{n-1}'$}(18.5,4); 

\node at (16.5,-1) {$\dots$};\node at (16.5,3) {$\dots$};

\draw[thick,black,->] (19.5,1)--node[anchor=south]{\Large$E_{n-1}$}(21.5,1);

  \draw[thick, fill=LightGray, opacity=0.75 ] (21.5,0) rectangle node[align=center]{\LARGE{$\mathcal{E}_n$}} (23.5,2);  
    \draw[thick,black,->] (22.5,-2)--node[anchor=east]{\Large$S_n$}(22.5,0); 
 \draw[thick,black,->] (22.5,2)--node[anchor=east]{\Large$S_n'$}(22.5,4); 
 \draw[thick,black,->] (23.5,1)--node[anchor=south]{\Large$E_{n}$}(25.5,1);

\end{tikzpicture}
    \caption{Diagrammatic representation of \cref{eq:proof_causality_sequence4}.}
    \label{fig:sequence_proof2}
\end{figure}

Now consider the non-signalling relation $S_{n-1}$ does not signal to $S_1'...S_{n-2}'$ in $\mathcal{E}$. It easy to see that this implies the same non-signalling relation, $S_{n-1}$ does not signal to $S_1'...S_{n-2}'$ in $\tilde{\mathcal{E}}_{n-1}$. This is because, writing out explicitly the non-signalling condition from $S_{n-1}$ to $S_1'...S_{n-2}'$ in $\mathcal{E}$ we have the following for all CPTP maps $\mathcal{M}_{n-1}:\mathcal{S}(S_{n-1})\rightarrow\mathcal{S}(S_{n-1})$
\begin{equation}
    \tr_{S'_{n-1}S'_nE_n}\circ \mathcal{E}=\tr_{S'_{n-1}S'_nE_n}\circ \mathcal{E}\circ \mathcal{M}_{n-1}.
\end{equation}

Using \cref{eq:proof_causality_sequence4} (illustrated in \cref{fig:sequence_proof2}), it is immediate that $\tr_{S'_nE_n}\circ \mathcal{E}=\tr_{E_{n-1}}\circ \tilde{\mathcal{E}}_{n-1}\circ\tr_{S_n}$ (illustrated in \cref{fig:sequence_proof3}). Therefore, the above is equivalent to 
\begin{equation}
\label{eq:proof_causality_sequence5}
\tr_{S'_{n-1}E_{n-1}}\circ \tilde{\mathcal{E}}_{n-1}\circ \tr_{S_n}=\tr_{S'_{n-1}E_{n-1}}\circ \tilde{\mathcal{E}}_{n-1}\circ \tr_{S_n}\circ \mathcal{M}_{n-1}, \quad \forall \mathcal{M}_{n-1}.
\end{equation}

\begin{figure}[h]
    \centering
    \begin{tikzpicture}[trace/.pic={\draw [thick](-0.4,0)--(0.4,0);\draw [thick](-0.3,0.1)--(0.3,0.1);\draw [thick](-0.2,0.2)--(0.2,0.2);\draw [thick](-0.1,0.3)--(0.1,0.3);}, scale=0.48, transform shape]
 
\draw[thick, fill=LightGray, opacity=0.75 ] (0,0) rectangle node[align=center]{\LARGE{$\mathcal{E}$}} (5.5,2);

 \draw[thick,black,->] (0.5,-2)--node[anchor=east]{\Large$S_1$}(0.5,0); 

  \draw[thick,black,->] (2,-2)--node[anchor=east]{\Large$S_2$}(2,0); 
  
 \draw[thick,black,->] (5,-2)--node[anchor=east]{\Large$S_n$}(5,0); 

\node at (3.5,-1) {$\dots$};

  \draw[thick,black,->] (0.5,2)--node[anchor=east]{\Large$S_1'$}(0.5,4); 

  \draw[thick,black,->] (2,2)--node[anchor=east]{\Large$S_2'$}(2,4); 
  
 \draw[thick,black,->] (5,2)--node[anchor=east]{\Large$S_n'$}(5,4); 

 \node at (3.5,3) {$\dots$};

  \draw[thick,black,->] (-2,1)--node[anchor=south]{\Large$E_0$}(0,1); 
    \draw[thick,black,->] (5.5,1)--node[anchor=south]{\Large$E_n$}(7.5,1); 

    \draw (5,4) pic {trace}; \draw[rotate around={270:(7.5,1)}] (7.5,1) pic {trace};

    \node[align=center, black] at (9.5,1) {\Huge{$\mathbf{=}$}};

  \draw[thick, fill=LightGray, opacity=0.75 ] (13.5,0) rectangle node[align=center]{\LARGE{$\tilde{\mathcal{E}}_{n-1}$}} (19.5,2);  
  \draw[thick,black,->] (11.5,1)--node[anchor=south]{\Large$E_0$}(13.5,1); 
    \draw[thick,black,->] (14.5,-2)--node[anchor=east]{\Large$S_1$}(14.5,0); 
 \draw[thick,black,->] (14.5,2)--node[anchor=east]{\Large$S_1'$}(14.5,4);

 \draw[thick,black,->] (18.5,-2)--node[anchor=east]{\Large$S_{n-1}$}(18.5,0); 
 \draw[thick,black,->] (18.5,2)--node[anchor=east]{\Large$S_{n-1}'$}(18.5,4); 

\node at (16.5,-1) {$\dots$};\node at (16.5,3) {$\dots$};

\draw[thick,black,->] (19.5,1)--node[anchor=south]{\Large$E_{n-1}$}(21.5,1);

    \draw[thick,black,->] (22.5,-2)--node[anchor=east]{\Large$S_n$}(22.5,0); 
    \draw (22.5,0) pic {trace}; \draw[rotate around={270:(21.5,1)}] (21.5,1) pic {trace};

\end{tikzpicture}
    \caption{Diagrammatic representation of the fact that $\tr_{S'_nE_n}\circ \mathcal{E}=\tr_{E_{n-1}}\circ \tilde{\mathcal{E}}_{n-1}\circ\tr_{S_n}$. This readily follows from \cref{fig:sequence_proof2} (or equivalently, \cref{eq:proof_causality_sequence4}) and is used to derive \cref{eq:proof_causality_sequence5}.}
    \label{fig:sequence_proof3}
\end{figure}

From this we obtain the following.

\begin{equation}
\label{eq:proof_causality_sequence6}
\tr_{S'_{n-1}E_{n-1}}\circ \tilde{\mathcal{E}}_{n-1}=\tr_{S'_{n-1}E_{n-1}}\circ \tilde{\mathcal{E}}_{n-1}\circ \mathcal{M}_{n-1}, \quad \forall \mathcal{M}_{n-1}.
\end{equation}
That \cref{eq:proof_causality_sequence5} implies \cref{eq:proof_causality_sequence6} can be established by contradiction. Suppose that \cref{eq:proof_causality_sequence5} holds but \cref{eq:proof_causality_sequence6} fails to hold, i.e., $\exists$ $\mathcal{M}_{n-1}$ and $\rho_{E_0S_1...S_{n-1}}\in \mathcal{S}(E_0S_1...S_{n-1})$ such that
\begin{equation}
 \label{eq:proof_causality_sequence7}
 \tr_{S'_{n-1}E_{n-1}}\circ \tilde{\mathcal{E}}_{n-1}(\rho_{E_0S_1...S_{n-1}})\neq \tr_{S'_{n-1}E_{n-1}}\circ \tilde{\mathcal{E}}_{n-1}\circ \mathcal{M}_{n-1} (\rho_{E_0S_1...S_{n-1}}).
\end{equation} Then considering any normalised state $\sigma_{S_n}\in \mathcal{S}(S_n)$, we have the following which contradicts the assumption that \cref{eq:proof_causality_sequence5} holds.

\begin{align}
    \begin{split}
        &\tr_{S'_{n-1}E_{n-1}}\circ \tilde{\mathcal{E}}_{n-1}\circ \tr_{S_n}(\rho_{E_0S_1...S_{n-1}}\otimes \sigma_{S_n})\\
        =& \tr_{S'_{n-1}E_{n-1}}\circ \tilde{\mathcal{E}}_{n-1}(\rho_{E_0S_1...S_{n-1}}).\tr_{S_n}(\sigma_{S_n})\\
        \neq &\tr_{S'_{n-1}E_{n-1}}\circ \tilde{\mathcal{E}}_{n-1}\circ \mathcal{M}_{n-1} (\rho_{E_0S_1...S_{n-1}}).\tr_{S_n}(\sigma_{S_n})\\
        =&\tr_{S'_{n-1}E_{n-1}}\circ \tilde{\mathcal{E}}_{n-1}\circ \tr_{S_n}\circ \mathcal{M}_{n-1}(\rho_{E_0S_1...S_{n-1}}\otimes \sigma_{S_n}),
    \end{split}
\end{align}

where we have used \cref{eq:proof_causality_sequence7} and the fact that $\sigma_{S_n}$ is a normalised state. Therefore, we have established \cref{eq:proof_causality_sequence6} which is indeed the non-signalling from  $S_{n-1}$ to $S_1'...S_{n-2}'$ in $\tilde{\mathcal{E}}_{n-1}$. This allows us to apply \cref{eq:proof_causality_sequence1} to decompose $\tilde{\mathcal{E}}_{n-1}$ in terms of some maps $\tilde{\mathcal{E}}_{n-2}: \mathcal{S}(E_0S_1...S_{n-2})\rightarrow \mathcal{S}(S_1'...S_{n-2}'E_{n-2})$ and $\mathcal{E}_{n-1}:\mathcal{S}(E_{n-2}S_{n-1})\rightarrow \mathcal{S}(S_{n-1}'E_{n-1})$, as

\begin{equation}
   \tilde{\mathcal{E}}_{n-1}=\big(\mathcal{I}_{E_0S_1...S_{n-2}}\otimes\mathcal{E}_{n-1}\big)\circ\big(\tilde{\mathcal{E}}_{n-2}\otimes \mathcal{I}_{S_{n-1}}\big).
\end{equation}

Plugging this back into \cref{eq:proof_causality_sequence4}, we obtain the decomposition of $\mathcal{E}$ (illustrated in \cref{fig:sequence_proof4}):

\begin{equation}
 \label{eq:proof_causality_sequence8}
  \mathcal{E}=\big(\mathcal{I}_{E_0S_1...S_{n-1}}\otimes \mathcal{E}_n\big)\circ\big(\mathcal{I}_{E_0S_1...S_{n-2}}\otimes \mathcal{E}_{n-1}\otimes\mathcal{I}_{S_n}\big)\circ\big(\tilde{\mathcal{E}}_{n-2}\otimes \mathcal{I}_{S_{n-1}S_n}\big).   
\end{equation}

We can then consider the non-signalling relation $S_{n-2}$ does not signal to $S_1'...S_{n-3}'$ in $\mathcal{E}$, and by the same argument as above, arrive at the fact that this implies the same non-signalling relation in $\tilde{\mathcal{E}}_{n-2}$, which will allow us to decompose $\tilde{\mathcal{E}}_{n-2}$ further using \cref{eq:proof_causality_sequence1}. Repeating the argument recursively, and substituting all these decompositions back into \cref{eq:proof_causality_sequence4}, we get the desired decomposition of $\mathcal{E}$ into a sequence of maps $\{\mathcal{E}_j:\mathcal{S}(E_{j-1}S_j)\rightarrow\mathcal{S}(S_j'E_j)\}_{j=1}^n$ given in \cref{eq: causality_sequence}, which is repeated below for convenience. This establishes the theorem.
\begin{equation}
\mathcal{E}=\big(\mathcal{I}_{E_0S_1...S_{n-1}}\otimes\mathcal{E}_n\big)\circ\dots\circ \big(\mathcal{I}_{E_0S_1}\otimes \mathcal{E}_2\otimes \mathcal{I}_{S_3...S_n}\big)\circ\big(\mathcal{E}_1\otimes \mathcal{I}_{S_2...S_n}\big).
\end{equation}

\begin{figure}[h]
    \centering
    \begin{tikzpicture}[trace/.pic={\draw [thick](-0.4,0)--(0.4,0);\draw [thick](-0.3,0.1)--(0.3,0.1);\draw [thick](-0.2,0.2)--(0.2,0.2);\draw [thick](-0.1,0.3)--(0.1,0.3);}, scale=0.48, transform shape]
 
\draw[thick, fill=LightGray, opacity=0.75 ] (0,0) rectangle node[align=center]{\LARGE{$\mathcal{E}$}} (5.5,2);

 \draw[thick,black,->] (0.5,-2)--node[anchor=east]{\Large$S_1$}(0.5,0); 

  \draw[thick,black,->] (2,-2)--node[anchor=east]{\Large$S_2$}(2,0); 
  
 \draw[thick,black,->] (5,-2)--node[anchor=east]{\Large$S_n$}(5,0); 

\node at (3.5,-1) {$\dots$};

  \draw[thick,black,->] (0.5,2)--node[anchor=east]{\Large$S_1'$}(0.5,4); 

  \draw[thick,black,->] (2,2)--node[anchor=east]{\Large$S_2'$}(2,4); 
  
 \draw[thick,black,->] (5,2)--node[anchor=east]{\Large$S_n'$}(5,4); 

 \node at (3.5,3) {$\dots$};

  \draw[thick,black,->] (-2,1)--node[anchor=south]{\Large$E_0$}(0,1); 
    \draw[thick,black,->] (5.5,1)--node[anchor=south]{\Large$E_n$}(7.5,1); 

    \node[align=center, black] at (9.5,1) {\Huge{$\mathbf{=}$}};

  \draw[thick, fill=LightGray, opacity=0.75 ] (13.5,0) rectangle node[align=center]{\LARGE{$\tilde{\mathcal{E}}_{n-2}$}} (19.5,2);  
  \draw[thick,black,->] (11.5,1)--node[anchor=south]{\Large$E_0$}(13.5,1); 
    \draw[thick,black,->] (14.5,-2)--node[anchor=east]{\large$\mathbf{S}_1$}(14.5,0); 
 \draw[thick,black,->] (14.5,2)--node[anchor=east]{\Large$S_1'$}(14.5,4);

 \draw[thick,black,->] (18.5,-2)--node[anchor=east]{\Large$S_{n-2}$}(18.5,0); 
 \draw[thick,black,->] (18.5,2)--node[anchor=east]{\Large$S_{n-2}'$}(18.5,4); 

\node at (16.5,-1) {$\dots$};\node at (16.5,3) {$\dots$};

\draw[thick,black,->] (19.5,1)--node[anchor=south]{\Large$E_{n-2}$}(21.5,1);

  \draw[thick, fill=LightGray, opacity=0.75 ] (21.5,0) rectangle node[align=center]{\LARGE{$\mathcal{E}_{n-1}$}} (23.5,2);  
    \draw[thick,black,->] (22.5,-2)--node[anchor=east]{\Large$S_{n-1}$}(22.5,0); 
 \draw[thick,black,->] (22.5,2)--node[anchor=east]{\Large$S_{n-1}'$}(22.5,4); 
 \draw[thick,black,->] (23.5,1)--node[anchor=south]{\Large$E_{n-1}$}(25.5,1); 

 \begin{scope}[shift={(4,0)}]

  \draw[thick, fill=LightGray, opacity=0.75 ] (21.5,0) rectangle node[align=center]{\LARGE{$\mathcal{E}_n$}} (23.5,2);  
    \draw[thick,black,->] (22.5,-2)--node[anchor=east]{\Large$S_n$}(22.5,0); 
 \draw[thick,black,->] (22.5,2)--node[anchor=east]{\Large$S_n'$}(22.5,4); 
 \draw[thick,black,->] (23.5,1)--node[anchor=south]{\Large$E_{n}$}(25.5,1); 
 \end{scope}

%\draw[gray] (13.2,-0.3) rectangle (27.8,2.3);
\end{tikzpicture}
    \caption{Diagrammatic representation of \cref{eq:proof_causality_sequence8}.}
    \label{fig:sequence_proof4}
\end{figure}
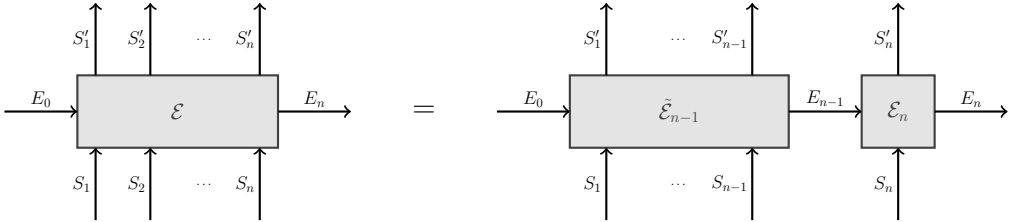

\section{Proof of Theorem~\ref{thm:squashing}}
\label{sec:proof_squashing}
In this proof we assume that, unless explicitly stated otherwise, sums over $N$ only include terms $N \geq 1$. Similarly we assume that sums over $k$ ($l$) only include odd (even) terms $\leq N$.
We start by rewriting the expressions for $M^{(0)}$ and $M^{(1)}$ in \cref{eq:mmt_ops_Bob} as
\begin{align}
    \label{eq:mmt_ops_Bob2}
    M^{(0,1)} &= \frac{1}{2}\mathds{1} - \frac{1}{2}\ketbra{0,0}{0,0} \pm \frac{1}{2} \sum_{N=1}^{\infty} \left( \ketbra{N, 0}{N, 0}_{AB} - \ketbra{0, N}{0, N}_{AB} \right).
\end{align}
    Using \cref{eq:BS_state_trsf} we can now rewrite the states $\ket{N,0}_{AB}$ and $\ket{0, N}_{AB}$ after the BS in terms of the states on the systems $SR$ before the BS to obtain:
\begin{equation}
\begin{aligned}
    \ket{N,0}_{AB} &= \frac{1}{\sqrt{N!}}\left(a_A^\dagger\right)^N \ket{0,0} = \frac{1}{\sqrt{2^N}}\sum_{m=0}^{N} \sqrt{\binom{N}{m}} \ket{N - m, m}_{SR}, \\
    \ket{0,N}_{AB} &= \frac{1}{\sqrt{N!}}\left(a_B^\dagger\right)^N \ket{0,0} = \frac{1}{\sqrt{2^N}}\sum_{m=0}^{N} \sqrt{\binom{N}{m}} (-1)^m\ket{N - m, m}_{SR}.
\end{aligned}
\end{equation}
Inserting this into the expressions of \cref{eq:mmt_ops_Bob2} yields:
\begin{equation}
\begin{aligned}
    \label{eq:mmt_simplified}
    M^{(0,1)} =& \frac{1}{2}\mathds{1} - \frac{1}{2}\ketbra{0,0}{0,0}_{SR} \\
    &\pm \frac{1}{2} \sum_{N}^{\infty} \sum_{m,n=0}^{N} \frac{1}{2^N}\sqrt{\binom{N}{m}}\sqrt{\binom{N}{n}} (1 - (-1)^{m+n}) \ketbra{N-m, m}{N-n, n}_{SR}.
\end{aligned}
\end{equation}
In a similar fashion we can write:
\begin{align}
    \label{eq:target_mmt_simplified}
    N^{(0, 1)} = \frac{1}{2}\mathds{1} - \frac{1}{2} \ketbra{00}{00} \pm \frac{1}{2} \left( \ketbra{\phi^+}{\phi^+} - \ketbra{\phi^-}{\phi^-}\right).
\end{align}
We are now ready to prove the theorem. We need to show three statements:
\begin{enumerate}
    \item $\Lambda$ is trace-preserving, i.e. $\left(K^{(0)}\right)^* K^{(0)} + \sum_{N, k, l} (K^{(N)}_{k,l})^* K^{(N)}_{k,l} = \mathds{1}$.
    \item $\Lambda$ preserves the statistics, i.e. $\tr\left[M^{(v)}_{SR}\rho_{SR}\right] = \tr\left[N^{(v)}_{S'R'} \Lambda[\rho_{SR}]\right]$.
    \item $\Lambda$ is non-signalling from $S$ to $R'$, i.e. $\tr_{S'}\Lambda[\rho_{SR}]$ only depends on $\rho_{R}$.
\end{enumerate}
We proceed in the order outlined above:
\begin{enumerate}
    \item We have
        \begin{equation}
        \begin{aligned}
            &\left(K^{(0)}\right)^* K^{(0)} + \sum_{N, k, l} \left(K^{(N)}_{k,l}\right)^* K^{(N)}_{k,l} \\
            = &\ketbra{0,0}{0,0} + \sum_{N, k, l} \frac{2}{2^N} \left( \binom{N}{l} \ketbra{N-k, k}{N-k, k} + \binom{N}{k} \ketbra{N-l, l}{N-l, l} \right).
        \end{aligned}
        \end{equation}
        We can now use the formula $\sum_l \binom{N}{l} = \sum_k \binom{N}{k} = 2^{N-1}$ to simplify the expression above:
        \begin{equation}
        \begin{aligned}
            & \ketbra{0,0}{0,0} + \sum_{N, k} \ketbra{N-k, k}{N-k, k} + \sum_{N,l} \ketbra{N-l, l}{N-l, l} \\
            =& \sum_{N = 0}^\infty \sum_{m = 0}^N \ketbra{N-m, m}{N-m, m} = \sum_{m,n} \ketbra{n, m}{n, m} = \mathds{1}.
        \end{aligned}
        \end{equation}
    \item
        We see directly that $\tr\left[ M^{(\bot)}_{SR} \rho_{SR}\right] = \tr\left[ N^{(\bot)}_{S'R'} \Lambda [\rho_{SR}] \right]$. Comparing the expressions in \cref{eq:mmt_simplified} with those in \cref{eq:target_mmt_simplified} we see that for the two remaining measurement outcomes it is sufficient to show that the third term (after the $\pm$) is preserved between \cref{eq:mmt_simplified} and \cref{eq:target_mmt_simplified}. Hence we compute:
        \begin{equation}
        \begin{aligned}
            & \bra{\phi^+}\Lambda[\rho_{SR}]\ket{\phi^+} - \bra{\phi^-}\Lambda[\rho_{SR}]\ket{\phi^-} = \bra{01}\Lambda[\rho_{SR}]\ket{10} + \bra{10}\Lambda[\rho_{SR}]\ket{01} \\
            =& \sum_{N, k, l} \frac{2}{2^N} \sqrt{\binom{N}{l}}\sqrt{\binom{N}{k}} \left(\bra{N-k, k}\rho_{SR}\ket{N-l,l} + \bra{N-l,l}\rho_{SR}\ket{N-k,k} \right) \\
            =& \sum_{N}^{\infty} \sum_{m,n=0}^{N} \frac{1}{2^N}\sqrt{\binom{N}{m}}\sqrt{\binom{N}{n}} (1 - (-1)^{m+n}) \bra{N-n, n} \rho_{SR} \ket{N-m, m},
        \end{aligned}
        \end{equation}
        as desired.
    \item
        We compute
        \begin{equation}
        \begin{aligned}
            \tr_{S'} \Lambda[\rho_{SR}] =& \ketbra{0}{0}_{R'} \bra{0,0}\rho_{SR}\ket{0,0} + \ketbra{1}{1}_{R'} \sum_{N,k,l}\frac{2}{2^N} \binom{N}{l} \bra{N-k,k} \rho_{SR} \ket{N-k,k} \\
            &+ \ketbra{0}{0}_{R'} \sum_{N,k,l} \frac{2}{2^N} \binom{N}{k} \bra{N-l,l} \rho_{SR} \ket{N-l,l} \\
            =& \ketbra{1}{1}_{R'} \sum_{N,k} \bra{N-k,k} \rho_{SR} \ket{N-k,k} \\
            &+ \ketbra{0}{0}_{R'} \left( \bra{0,0}\rho_{SR}\ket{0,0} + \sum_{N,l} \bra{N-l,l} \rho_{SR} \ket{N-l,l} \right) \\
            =& \ketbra{1}{1}_{R'} \sum_{k} \bra{k} \tr_{S} \rho_{SR} \ket{k} + \ketbra{0}{0}_{R'} \sum_{l} \bra{l} \tr_{S} \rho_{SR} \ket{l},
        \end{aligned}
        \end{equation}
        where we again used the expression from before to get rid of the binomials. We now see that the final expression only depends on the state of $\rho_{SR}$ on the system $R$ and hence $\Lambda$ is non-signalling from $S$ to $R'$.
\end{enumerate}

\section{General considerations for the security proofs}
\label{sec:security_general}
In this section we present the steps that are shared between the two security proofs. The remainder of the steps is then presented in the respective chapters. Throughout this section we will make use of the following events (formally defined as subsets of $A^n\tilde{A}^nC^n$):
\begin{center}
\begin{tabular}{c|p{13cm}}
    $\Omega_\mathrm{EC}$ & The protocol did not abort in the EC step, i.e. $h(A^n) = h(\tilde{A}^n)$. \\
    $\Omega_\mathrm{EC}^\mathrm{fail}$ & The protocol did not abort in the EC step and $\tilde{A}^n \neq A^n$. \\
    $\Omega_\mathrm{EC}^\mathrm{cor}$ & The protocol did not abort in the EC step and $\tilde{A}^n = A^n$. This event can also be written as $\Omega_\mathrm{EC}^\mathrm{cor} = \Omega_\mathrm{EC} \cap (\Omega_\mathrm{EC}^\mathrm{fail})^c$. \\
    $\Omega_\mathrm{PE}$ & The protocol did not abort in the parameter estimation step, i.e. $f(\mathrm{freq}(C^n)) \geq H_\mathrm{exp}$. \\
\end{tabular}
\end{center}
We also define the event of not aborting the protocol which is given by $\Omega = \Omega_\mathrm{EC} \cap \Omega_\mathrm{PE}$. Security consists of two components: Completeness and Soundness. Proving those two properties is the content of the following two subsections. The completeness condition will put a lower bound on the admissible value of $\mathrm{leak_{EC}}$ while the soundness condition will put an upper bound on the admissible key length $l$.

\subsection{Completeness}
\label{app:completeness}
In this section we show completeness, i.e., we show that there exists an (honest) implementation that aborts with low probability. The main result is summarised in the following theorem:
\begin{thm}
    Let $\delta, \bar{\varepsilon}_s, \varepsilon^\mathrm{com}_\mathrm{EC} > 0$ and $p \in \mathbb{P}_\mathcal{C}$ be the expected statistics of the honest implementation. If
    \begin{align}
        \label{eq:bound_leak_EC}
        \mathrm{leak_{EC}} \geq n H(A|JB)_\mathrm{hon} + 2 \sqrt{n} \log 7 \sqrt{\log \frac{2}{\bar{\varepsilon}_s^2}} + 2\log \frac{1}{\varepsilon^\mathrm{com}_\mathrm{EC} - \bar{\varepsilon}_s} + 4,
    \end{align}
    and $H_\mathrm{exp} \leq f(p) - \delta$ then there exists an (honest) implementation of the protocol with abort probability
    \begin{align}
        % \mathrm{Pr[abort]} \leq \varepsilon^\mathrm{com}_\mathrm{EC} + \varepsilon_\mathrm{EC} + \sum_{c \in \mathcal{C}} 2 e^{-n \delta^2 / (3 \hat{p}(c))}.
        \mathrm{Pr[abort]} \leq \varepsilon^\mathrm{com}_\mathrm{EC} + \exp \left( -n \frac{\delta^2/2}{\mathrm{Var}(f) + (\mathrm{Max}(f) - \mathrm{Min}(f)) \delta / 3} \right).
    \end{align}
    In the expression above $H(A|JB)_\mathrm{hon} = H(A_i|J_iB_i)_\mathrm{hon}$ refers to the entropy of Alice's raw key conditioned on Bob's information for the honest implementation.
\end{thm}
\begin{proof}
Let $\rho_\mathrm{hon}$ be the state at the end of the protocol when Eve is passive. There are two places where the protocol could abort. The first is during error correction, the second is during parameter estimation. Formally, this can be expressed as $\Omega_\mathrm{abort} = \Omega^c = \Omega_\mathrm{EC}^c \cup \Omega_\mathrm{PE}^c$. Using the union bound, we find that
\begin{align}
    \rho_\mathrm{hon}[\Omega_\mathrm{abort}] = \rho_\mathrm{hon}[\Omega_\mathrm{EC}^c \cup \Omega_\mathrm{PE}^c] \leq \rho_\mathrm{hon}[\Omega_\mathrm{EC}^c] + \rho_\mathrm{hon}[\Omega_\mathrm{PE}^c].
    \label{eq:completeness_union_bound}
\end{align}
We will now bound the two terms in this expression separately.

To bound the first term we note that according to \cite{Renes2012}, there exists an EC protocol with failure probability at most $\varepsilon_\mathrm{EC}^\mathrm{com}$ as long as:
\begin{align}
    \mathrm{leak_{EC}} \geq H_\mathrm{max}^{\bar{\varepsilon}_s}(A^n|J^nB^n) + 2\log \frac{1}{\varepsilon^\mathrm{com}_\mathrm{EC} - \bar{\varepsilon}_s} + 4.
\end{align}
    Next, we apply \cite[Corollary 4.10]{Dupuis2020} to bound the smooth max-entropy:
\begin{align}
    H_\mathrm{max}^{\bar{\varepsilon}_s}(A^n|J^nB^n) \leq n H(A|JB)_\mathrm{hon} + 2\sqrt{n} \log (1+2|\mathcal{A}|) \sqrt{\log \frac{2}{\bar{\varepsilon}_s^2}}.
\end{align}
    Therefore we can conclude that since inequality \cref{eq:bound_leak_EC} is satisfied by assumption (we have $|\mathcal{A}| = 3$), there exists an EC protocol such that $\rho_\mathrm{hon}[\Omega_\mathrm{EC}^c] \leq \varepsilon^\mathrm{com}_\mathrm{EC}$.

To bound the second term in \cref{eq:completeness_union_bound} we note that the honest implementation is IID. We follow the steps in \cite{Metger2023} to see that
\begin{align}
    \rho_\mathrm{hon}[\Omega_\mathrm{PE}^c] \leq \exp \left( -n \frac{\delta^2/2}{\mathrm{Var}(f) + (\mathrm{Max}(f) - \mathrm{Min}(f)) \delta / 3} \right).
\end{align}
Combining the two bounds then yields the result.
\end{proof}

\subsection{Soundness}
\label{sec:soundness}
Soundness is the statement that for any attack either the protocol aborts with high probability or Eve's knowledge about the key is small. Our soundness statement is summarized in the following theorem:
\begin{thm}
    Let $\alpha' \in (1, 3/2)$ and $\varepsilon_\mathrm{PA}, \varepsilon_s \in (0,1)$. Let $V$ and $K(\alpha')$ be as in \cref{thm:gen_EAT} for the min-tradeoff function $f$ of the protocol. If
    \begin{equation}
    \begin{aligned}
        l \leq & n H_\mathrm{exp} - n \frac{\alpha'-1}{2-\alpha'}\frac{\ln(2)}{2}V^2 - \frac{g(\varepsilon_s) + \alpha' \log\left(\frac{1}{2\varepsilon_s + \varepsilon_\mathrm{PA}}\right)}{\alpha'-1} - n \left(\frac{\alpha'-1}{2-\alpha'}\right)^2 K(\alpha') \\
        & - \mathrm{leak_{EC}} - \lceil \log (1/\varepsilon_\mathrm{EC}) \rceil - 2\log (1/\varepsilon_\mathrm{PA}) + 2,
    \end{aligned}
    \end{equation}
    then the protocol is $(\varepsilon_\mathrm{EC} + \varepsilon_\mathrm{PA} + 2\varepsilon_s)$-sound.
\end{thm}
\begin{proof}
    The proof more or less follows the steps from \cite{Tan2020a,Metger2023}. Let $E_n$ denote Eve's quantum system at the end of the protocol. Let us denote with $O_\mathrm{EC}$ the (classical) information exchanged during the error correction procedure of the protocol and with $F$ the information exchanged during privacy amplification. Write $\Sigma=E_n I^n J^n O_\mathrm{EC}$ for Eve's total information before privacy amplification and let $\rho_{K_A^l K_B^l A^n \tilde{A}^n C^n F \Sigma}$ be the state at the end of the protocol. The goal is to upper-bound the quantity (see \cref{def:soundness_subnorm}):
\begin{align}
    \label{eq:sec_def}
    \frac{1}{2} \left\| (\rho_{\land\Omega})_{K_A^l K_B^l F\Sigma} - \tau_{K_A^l K_B^l} \otimes (\rho_{\land\Omega})_{F\Sigma} \right\|_1.
\end{align}
    The event $\Omega$ includes the event where the protocol did not abort but error correction produced incorrect outputs, i.e. $\tilde{A}^n \neq A^n$ and hence $K_B^l \neq K_A^l$. To address this we write $\Omega = (\Omega \cap \Omega_\mathrm{EC}^\mathrm{fail}) \cup (\Omega \cap (\Omega_\mathrm{EC}^\mathrm{fail})^c) = \Omega_1 \cup \Omega_2$ with $\Omega_1 = \Omega \cap \Omega_\mathrm{EC}^\mathrm{fail}$ and $\Omega_2 = \Omega \cap (\Omega_\mathrm{EC}^\mathrm{fail})^c$. Since $\Omega_1 \cap \Omega_2 = \emptyset$, we have $\rho_{\land\Omega} = \rho_{\land\Omega_1} + \rho_{\land\Omega_2}$. Inserting this into \cref{eq:sec_def} and applying the triangle inequality we get:
\begin{equation}
\begin{aligned}
    & \frac{1}{2} \left\| (\rho_{\land\Omega})_{K_A^l K_B^l F\Sigma} - \tau_{K_A^l K_B^l} \otimes (\rho_{\land\Omega})_{F\Sigma} \right\|_1 \\
    \leq &\frac{1}{2} \left\| (\rho_{\land\Omega_1})_{K_A^l K_B^l F\Sigma} - \tau_{K_A^l K_B^l} \otimes (\rho_{\land\Omega_1})_{F\Sigma} \right\|_1 + \frac{1}{2} \left\| (\rho_{\land\Omega_2})_{K_A^l K_B^l F\Sigma} - \tau_{K_A^l K_B^l} \otimes (\rho_{\land\Omega_2})_{F\Sigma} \right\|_1 \\
    \leq& \varepsilon_\mathrm{EC} + \frac{1}{2} \left\| (\rho_{\land\Omega_2})_{K_A^l K_B^l F\Sigma} - \tau_{K_A^l K_B^l} \otimes (\rho_{\land\Omega_2})_{F\Sigma} \right\|_1,
\end{aligned}
\end{equation}
    where in the last line we noted that by the properties of universal hashing we have that $\rho[\Omega_1] \leq \rho[\Omega_\mathrm{EC}^\mathrm{fail}] \leq \varepsilon_\mathrm{EC}$. We now focus on bounding the second term. Since this term is conditioned on $\Omega_2 = \Omega \cap (\Omega_\mathrm{EC}^\mathrm{fail})^c$, we know that $\tilde{A}^n = A^n$ and hence $K_A^l = K_B^l$. Therefore we can trace out the systems $K_B^l$ in the trace distance above to recover a situation similar to the one in the leftover hashing lemma.

    We let $\varepsilon_s, \varepsilon_\mathrm{PA} > 0$ and our goal will now be to upper-bound the second term in the expression above by $2\varepsilon_s + \varepsilon_\mathrm{PA}$. We can expand $\Omega_2 = \Omega_\mathrm{PE} \cap \Omega_\mathrm{EC} \cap (\Omega_\mathrm{EC}^\mathrm{fail})^c = \Omega_\mathrm{PE} \cap \Omega_\mathrm{EC}^\mathrm{cor}$. If now either $\rho[\Omega_\mathrm{PE}] < 2\varepsilon_s + \varepsilon_\mathrm{PA}$ or $\rho_{|\Omega_\mathrm{PE}}[\Omega_\mathrm{EC}^\mathrm{cor}] < 2\varepsilon_s + \varepsilon_\mathrm{PA}$ then $\rho[\Omega_2] < 2\varepsilon_s + \varepsilon_\mathrm{PA}$ and the statement holds trivially. Hence we can from now on assume that both $\rho[\Omega_\mathrm{PE}] \geq 2\varepsilon_s + \varepsilon_\mathrm{PA}$ and $\rho_{|\Omega_\mathrm{PE}}[\Omega_\mathrm{EC}^\mathrm{cor}] \geq 2\varepsilon_s + \varepsilon_\mathrm{PA}$. 
    We can write $\rho_{\land\Omega_2} = \rho_{\land(\Omega_\mathrm{PE} \cap \Omega_\mathrm{EC}^\mathrm{cor})} = \rho[\Omega_\mathrm{PE}] (\rho_{|\Omega_\mathrm{PE}})_{\land\Omega_\mathrm{EC}^\mathrm{cor}}$ and to simplify the notation we introduce $\sigma = \rho_{|\Omega_\mathrm{PE}}$. With this we get
\begin{equation}
\begin{aligned}
    \left\| (\rho_{\land\Omega_2})_{K_A^l F\Sigma} - \tau_{K_A^l} \otimes (\rho_{\land\Omega_2})_{F\Sigma} \right\|_1 
    & = \rho[\Omega_\mathrm{PE}] \left\| (\sigma_{\land\Omega_\mathrm{EC}^\mathrm{cor}})_{K_A^lF\Sigma} - \tau_{K_A^l} \otimes (\sigma_{\land\Omega_\mathrm{EC}^\mathrm{cor}})_{F\Sigma}\right\|_1 \\
    & \leq \left\| (\sigma_{\land\Omega_\mathrm{EC}^\mathrm{cor}})_{K_A^lF\Sigma} - \tau_{K_A^l} \otimes (\sigma_{\land\Omega_\mathrm{EC}^\mathrm{cor}})_{F\Sigma} \right\|_1.
\end{aligned}
\end{equation}
We now apply \cref{lem:leftoverhashing} to upper-bound the trace distance:
\begin{align}
    \frac{1}{2} \left\| (\sigma_{\land\Omega_\mathrm{EC}^\mathrm{cor}})_{K_A^l F\Sigma} - \tau_{K_A^l} \otimes (\sigma_{\land\Omega_\mathrm{EC}^\mathrm{cor}})_{F\Sigma} \right\|_1
    \leq 2 \varepsilon_s + 2^{\frac{1}{2}\left(l - H_\mathrm{min}^{\varepsilon_s}(A^n|\Sigma)_{\sigma_{\land\Omega_\mathrm{EC}^\mathrm{cor}}} - 2 \right)} \overset{!}{\leq} 2 \varepsilon_s + \varepsilon_\mathrm{PA},
\end{align}
where the smooth min-entropy is evaluated on the state before the PA part of the protocol. To arrive at our desired result we now need to upper-bound the second term by $\varepsilon_\mathrm{PA}$:
\begin{align}
    2^{\frac{1}{2}\left(l - H_\mathrm{min}^{\varepsilon_s}(A^n|\Sigma)_{\sigma_{\land\Omega_\mathrm{EC}^\mathrm{cor}}} - 2 \right)} \leq \varepsilon_\mathrm{PA} \iff l \leq H_\mathrm{min}^{\varepsilon_s}(A^n|\Sigma)_{\sigma_{\land\Omega_\mathrm{EC}^\mathrm{cor}}} - 2\log \frac{1}{\varepsilon_\mathrm{PA}} + 2.
\end{align}
    To find an upper bound on the admissible values of $l$, we now need to lower-bound the smooth min-entropy. Since the EAT applies to the state before the error correction procedure, we first need to eliminate this side-information.
This can be achieved using the chain rule in \cite[Lemma 6.8]{Tomamichel2016}
\begin{equation}
\begin{aligned}
    H_\mathrm{min}^{\varepsilon_s}(A^n|\Sigma)_{\sigma_{\land\Omega_\mathrm{EC}^\mathrm{cor}}} \geq& H_\mathrm{min}^{\varepsilon_s}(A^n|E_n I^n J^n)_{\sigma_{\land\Omega_\mathrm{EC}^\mathrm{cor}}} - \log (\dim O_\mathrm{EC}) \\
    \geq & H_\mathrm{min}^{\varepsilon_s}(A^n|E_n I^n J^n)_{\sigma_{\land\Omega_\mathrm{EC}^\mathrm{cor}}} - \mathrm{leak_{EC}} - \lceil \log (1/\varepsilon_\mathrm{EC}) \rceil,
\end{aligned}
\end{equation}
    where we noted that $\log(\dim O_\mathrm{EC}) \leq \mathrm{leak_{EC}} + \lceil \log(1/\varepsilon_\mathrm{EC}) \rceil$.
    We now note that since $\sqrt{\sigma[\Omega_\mathrm{EC}^\mathrm{cor}]} \geq \sqrt{2\varepsilon_s + \varepsilon_\mathrm{PA}} > \varepsilon_s$ we can apply \cite[Lemma 10]{Tomamichel2017}:
\begin{equation}
    H_\mathrm{min}^{\varepsilon_s}(A^n|E_n I^n J^n)_{\sigma_{\land\Omega_\mathrm{EC}^\mathrm{cor}}} \geq H_\mathrm{min}^{\varepsilon_s}(A^n|E_n I^n J^n)_{\sigma}.
\end{equation}
    Remembering that $\sigma = \rho_{|\Omega_\mathrm{PE}}$ and applying \cref{thm:gen_EAT} with $\rho[\Omega_\mathrm{PE}] \geq 2\varepsilon_s + \varepsilon_\mathrm{PA}$ to the remaining smooth min-entropy term then yields the desired result.
% Next we apply the EAT to bound the remaining smooth min-entropy term.
% \begin{equation}
% \begin{aligned}
%     & H_\mathrm{min}^{\varepsilon_s}(A^n|E_n I^n J^n)_{\rho_{|\Omega_2}} \\
%     \geq & n t - n \frac{\alpha-1}{2-\alpha}\frac{\ln(2)}{2}V^2 - \frac{g(\varepsilon_s) + \alpha \log\left(\frac{1}{\varepsilon_s + \varepsilon_\mathrm{PA}}\right)}{\alpha-1} - n \left(\frac{\alpha-1}{2-\alpha}\right)^2 K(\alpha),
% \end{aligned}
% \end{equation}
% where
% \begin{equation}
% \begin{aligned}
%     t &= \min_{p \in \hat{\Omega_2}} f(p), \\
%     g(\varepsilon_s) &= \log \frac{1}{1 - \sqrt{1-\varepsilon_s^2}}, \\
%     V &= \log(2 d_A^2 + 1) + \sqrt{2 + \mathrm{Var}(f)}, \\
%     K(\alpha) &= \frac{(2-\alpha)^3}{6(3-2\alpha)^3 \ln 2} 2^{\frac{\alpha-1}{2-\alpha}(2\log d_A + \mathrm{Max}(f) - \mathrm{Min}_\Sigma(f))} \ln^3 \left(2^{2\log d_A + \mathrm{Max}(f) - \mathrm{Min}_\Sigma(f)} + e^2 \right)
% \end{aligned}
% \end{equation}
\end{proof}

\section{Security of relativistic QKD}
\label{sec:security_rel_qkd}

Here we show the remaining steps for proving security of the relativistic protocol. Namely this includes finding a valid min-tradeoff function (the parameter $f$ of the protocol).
A min-tradeoff function is an affine function $f: \mathbb{P}_\mathcal{C} \rightarrow \mathbb{R}$ satisfying:
\begin{equation}
    f(p) \leq \inf_{\nu \in \Sigma_i(p)} H(A_i|E_i I^i J^i\tilde{E}_{i-1})_{\nu} \qquad \forall p \in \mathbb{P}_\mathcal{C}, i \in [n],
    \label{eq:full_trdf}
\end{equation}
where $\tilde{E}_{i-1}$ is a system isomorphic to $E_{i-1}I^{i-1}J^{i-1}$ and $\Sigma_i(p)$ is the set of all states that can be produced by our channels and that are compatible with the statistics $p$. To simplify the construction of our min-tradeoff function we follow the steps outlined in \cite{Depuis2019} i.e. we split our protocol into key rounds where $T_i=0$ and test rounds where $T_i=1$. For this we separate $\mathcal{C} = \mathcal{C'} \cup \{ \varnothing \}$ with $\mathcal{C'}=\{\mathrm{corr}, \mathrm{err}, \bot \}$. We then apply \cite[Lemma V.5]{Depuis2019} which states that if
\begin{equation}
    g(p') \leq \inf H(A_i|E_iI^iJ^i\tilde{E}_{i-1}) \qquad \forall p' \in \mathbb{P}_\mathcal{C'}, i \in [n]
    \label{eq:simplified_trdf}
\end{equation}
is a valid min-tradeoff function for the protocol rounds constrained on obtaining statistics $p'$ in the test rounds, then the affine function defined by
\begin{equation}
\begin{aligned}
    f(\delta_c) &= \mathrm{Max}(g) + \frac{1}{\gamma}(g(\delta_c) - \mathrm{Max}(g)) \quad \forall c \in \mathcal{C'},\\
    f(\delta_\varnothing) &= \mathrm{Max}(g)
\end{aligned}
\end{equation}
is a valid min-tradeoff function for the full protocol (which includes both the key and the test rounds). The main difference between the min-tradeoff functions given in \cref{eq:full_trdf} and \cref{eq:simplified_trdf} is that in the former the optimization is constrained on obtaining the correct statistics in all the rounds (including key rounds), whereas in the latter the optimization is constrained only on obtaining the correct statistics in the test rounds. The properties of the two min-tradeoff functions can be related by
\begin{equation}
\begin{aligned}
    \mathrm{Max}(f) &= \mathrm{Max}(g), \\
    \mathrm{Min}_\Sigma(f) &\geq \mathrm{Min}(g), \\
    \mathrm{Var}(f) &\leq \frac{1}{\gamma} (\mathrm{Max}(g) - \mathrm{Min}(g))^2.
\end{aligned}
\end{equation}

The remaining task now is to find a min-tradeoff function as in \cref{eq:simplified_trdf}. To ease this task there are a number of simplifications that can be made to Eve's attack channel (see \cref{fig:rel_qkd_attack_channel}). By the existence of the squashing map (see \cref{thm:squashing}) we conclude that we can replace our photonic systems $S_i$ and $R_i$ with qubit systems $S'_i$ and $R'_i$. Additionally we can absorb the systems $I^{i-1}J^{i-1}$ and $\tilde{E}_{i-1}$ into Eve's attack map. Next, we note that since the state of the reference is identical for all rounds (and hence is uncorrelated with the key), we can allow Eve to prepare this state herself. Finally we note that, in principle, we would also need to consider the full Fock space on the input to Eve's channel. However because in our protocol the inputs live in $T_i = \mathrm{span}\{\ket{\alpha}, \ket{-\alpha}\}$ we can restrict Eve's input to this reduced subspace (restricting the real attack channel to this subspace yields a new channel with the same key rate).

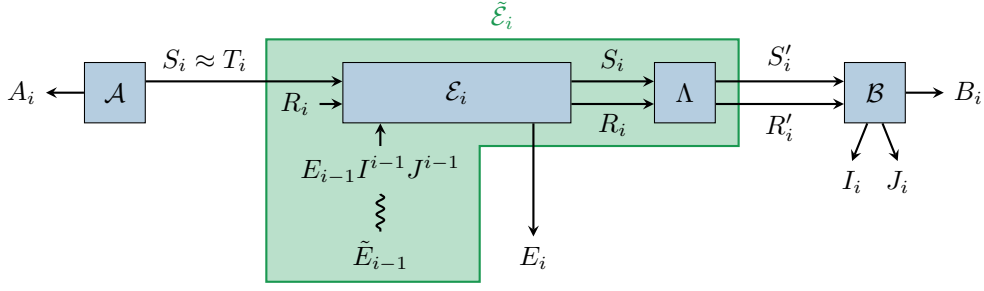
\begin{figure}[t]
    \centering
    \begin{tikzpicture}[
    boxnode/.style={rectangle, draw=black, minimum height=0.8cm, minimum width=0.8cm, fill=myblue!30},
    emptynode/.style={}
]
    % \node[emptynode] (start_state) at (-1.3, 0) {$\rho_{R_0 E}^0$};

    \node[boxnode] (labA) at (-4.5, 0) {$\mathcal{A}$};
    \node[boxnode, minimum width=3cm] (channel) at (0, 0) {$\mathcal{E}_i$};
    \node[boxnode] (squash) at (3, 0) {$\Lambda$};
    \node[boxnode] (labB) at (+5.5, 0) {$\mathcal{B}$};

    % Inputs to Alice's lab
    \draw[thick,->,>=stealth] (labA.west) -- ([xshift=-0.5cm] labA.west) node[left] {$A_i$};

    % Outputs to Bob's lab
    \draw[thick,->,>=stealth] ([xshift=-0.1cm] labB.south) -- ([xshift=-0.3cm,yshift=-0.5cm] labB.south) node[below] {$I_i$};
    \draw[thick,->,>=stealth] ([xshift=0.1cm] labB.south) -- ([xshift=+0.3cm,yshift=-0.5cm] labB.south) node[below] {$J_i$};
    \draw[thick,->,>=stealth] (labB.east) -- ([xshift=+0.5cm] labB.east) node[right] {$B_i$};

    % Optical paths
    \draw[thick,->,>=stealth] ([yshift=+0.15cm] labA.east) -- node[pos=0.3,above]{$S_i \approx T_i$} ([yshift=+0.15cm] channel.west);
    \draw[thick,->,>=stealth] ([xshift=-0.3cm,yshift=-0.15cm] channel.west) node[left]{$R_i$} -- ([yshift=-0.15cm] channel.west);
    \draw[thick,->,>=stealth] ([yshift=+0.15cm] channel.east) -- node[above]{$S_i$} ([yshift=+0.15cm] squash.west);
    \draw[thick,->,>=stealth] ([yshift=-0.15cm] channel.east) -- node[below]{$R_i$} ([yshift=-0.15cm] squash.west);

    % \draw[thick,->,>=stealth] (squash.east) -- node[above]{$S'_i$} (labB.west);
    % \draw[thick,->,>=stealth] (squash.north) --
    %     ([yshift=0.5cm] squash.north) -- node[above]{$R'_i$}
    %     ([yshift=0.5cm] labB.north) --
    %     (labB.north);
    \draw[thick,->,>=stealth] ([yshift=+0.15cm] squash.east) -- node[above]{$S'_i$} ([yshift=+0.15cm] labB.west);
    \draw[thick,->,>=stealth] ([yshift=-0.15cm] squash.east) -- node[below]{$R'_i$} ([yshift=-0.15cm] labB.west);

    % Inputs to the channel
    \draw[thick,->,>=stealth] ([xshift=+0.5cm,yshift=-0.3cm] channel.south west)  node[below] (Eprev) {$E_{i-1}I^{i-1}J^{i-1}$} -- ([xshift=+0.5cm] channel.south west);
    \draw[thick,->,>=stealth] ([xshift=-0.5cm] channel.south east) -- ([xshift=-0.5cm,yshift=-1.5cm] channel.south east)  node[below] {$E_{i}$};
    \draw[thick,decorate,decoration={snake,amplitude=0.5mm,segment length=1.5mm}] (Eprev.south) -- ([yshift=-0.5cm] Eprev.south) node[below] (purif) {$\tilde{E}_{i-1}$};

    % Box around the channels
    \begin{scope}[on background layer]
        \draw[thick,draw=mygreen,fill=mygreen!30] ([xshift=-1.0cm,yshift=+0.3cm] channel.north west) --
        ([xshift=-1.0cm,yshift=-2.1cm] channel.south west) --
        ([xshift=0.3cm,yshift=-2.1cm] channel.south) -- ([xshift=0.3cm,yshift=-0.3cm] channel.south) --
        ([xshift=0.3cm,yshift=-0.3cm] squash.south east) --
        ([xshift=0.3cm,yshift=0.3cm] squash.north east) -- node[above,text=mygreen] {$\tilde{\mathcal{E}}_i$}
        cycle;
    \end{scope}
\end{tikzpicture}
    \caption{A diagrammatic representation of the channel $\mathcal{M}_{i}$ in the $i$-th round of the protocol. Formally we need to consider attack channels with inputs $R_i$ and $E_{i-1}I^{i-1}J^{i-1}$ (which might be entangled with a system $\tilde{E}_{i-1}$) and outputs $E_i$, $R_i$ and $S_i$. However, the systems $R_iE_{i-1}I^{i-1}J^{i-1}\tilde{E}_{i-1}$ can be absorbed into the definition of $\tilde{\mathcal{E}}_i$. Similarly we can include the squashing map $\Lambda$ into Eve's attack (this can only decrease the key rate). Since both $\mathcal{E}_i$ and $\Lambda$ are non-signalling, $\tilde{\mathcal{E}_i}$ will also be non-signalling (from $T_i$ to $R'_i$).}
    \label{fig:rel_qkd_attack_channel}
\end{figure}

In conclusion: we can restrict ourselves to attack channels which take a single qubit as input (the system $T_i$) and produce qubit outputs $S'_i$ and $R'_i$ together with some side-information $E_i$. Furthermore, this simplified attack channel is non-signalling from $T_i$ to $R'_i$. By strong subadditivity, we can also assume that $E_i$ is a purifying system.

Assuming $E_i$ to be a purifying system gives Eve too much power in general (Eve cannot purify an unknown state). To resolve this, we switch to an entanglement based version of the protocol. Explicitly we note that the post-measurement state would be identical if Alice had instead prepared the entangled state (for brevity we drop the index $i$)
\begin{align}
    \ket{\psi}_{\tilde{V}T} = \frac{1}{\sqrt{2}} \ket{0}_{\tilde{V}} \otimes \ket{\alpha}_T + \frac{1}{\sqrt{2}} \ket{1}_{\tilde{V}} \otimes \ket{-\alpha}_T,
\end{align}
followed by measuring $\tilde{V}$ locally to obtain her key bit $V$. However, in contrast to the usual literature, we do not give this source to Eve. The reason for this is that it is not obvious how the non-signalling constraint would look like in that scenario. In this entanglement based picture we can then define the POVM associated to the evaluation function in \cref{eq:eval} (now restricted to only test rounds where $J\neq\varnothing$):
\begin{equation}
\begin{aligned}
    \Gamma_{\tilde{V}S'R'}^{(\mathrm{cor})} &= \ketbra{0}{0}_{\tilde{V}} \otimes N^{(0)}_{S'R'} + \ketbra{1}{1}_{\tilde{V}} \otimes N^{(1)}_{S'R'} \\
    \Gamma_{\tilde{V}S'R'}^{(\mathrm{err})} &= \ketbra{0}{0}_{\tilde{V}} \otimes N^{(1)}_{S'R'} + \ketbra{1}{1}_{\tilde{V}} \otimes N^{(0)}_{S'R'} \\
    \Gamma_{\tilde{V}S'R'}^{(\bot)} &= \mathds{1}_{\tilde{V}} \otimes N^{(\bot)}_{S'R'},
\end{aligned}
\end{equation}
where $\{ N_{S'R'}^{(b)} \}_b$ are as defined in \cref{eq:mmt_ops_Bob_qubits}. With this we can compute the expected statistics of any state $\sigma_{\tilde{V}S'R'}$ as
\begin{align}
    \sigma_C(c) &= \tr \left[ \sigma_{\tilde{V}S'R'} \Gamma^{(c)}_{\tilde{V}S'R'} \right].
\end{align}
Next, we parametrize $g: \mathbb{P}_\mathcal{C'} \rightarrow \mathbb{R}$ as $g(p') = c_\lambda + \lambda \cdot p'$ with $\lambda \in \mathbb{R}^{|\mathcal{C'}|}$ and note that if
\begin{align}
    \label{eq:opt_dps_simple}
    c_\lambda \leq \inf_{\nu} \left\{H(A|IJE)_\nu - \lambda \cdot \nu_C\right\},
\end{align}
then $g$ is a valid min-tradeoff function (the minimization is over all states compatible with our channels $\mathcal{M}_i$). The parameter $\lambda$ is optimized using standard numerical optimization techniques (choosing a non-optimal $\lambda$ decreases the key rate but does not compromise security). Conservatively, we assume that in the test rounds ($J \neq \varnothing$) no entropy is produced. This means that we can get rid of the conditioning system $J$ at the cost of reducing the entropy by a factor of $1-\gamma$. Hence the final optimization problem that we need to solve is:

\begin{equation}
\begin{aligned}
    \label{eq:opt_dps_full}
    &\inf_{\rho_{TS'R'}} ((1-\gamma) H(A|IE,J=\varnothing)_{\nu(\rho)} - \lambda \cdot \nu_C(\rho)) \\
    & \quad \begin{aligned}
        \mathrm{s.t.} \; && \rho_{TS'R'} & \geq 0, \\
        && \tr[\rho_{TS'R'}] &= 1, \\
        && \rho_{TR'} &= \frac{\mathds{1}_{T}}{d_{T}} \otimes \rho_{R'},
    \end{aligned}
\end{aligned}
\end{equation}
where $\nu(\rho)$ is the state after Alice and Bob measure a purification of $\mathcal{I}_{\tilde{V}} \otimes \mathcal{C}^{-1}(\rho)\left[\ketbra{\psi}{\psi}_{\tilde{V}T}\right]$ ($\rho_{TS'R'}$ is the Choi state of Eve's attack channel). It is also worth noting that by the Stinespring representation theorem, giving Eve a purification does not give her more power (the input states are pure). The last constraint in \cref{eq:opt_dps_full} originates from the non-signalling condition (see \cref{lem:choi}). The optimization problem in \cref{eq:opt_dps_full} can then be evaluated using the methods in \cite{Araujo2023}\footnote{Technically we need a slight generalization of the method where we absorb the conditioning on $I$ into the normalization of $\nu$.}.

\section{A note about memory in DPS QKD}
\label{sec:security_proof_dps}

\begin{figure}[t]
    \centering
    \newcommand{\plusbinomial}[3][2]{(#2 + #3)^#1}
\newcommand{\BS}[1]{
    \begin{scope}[shift={(#1)}]
        \draw[fill=gray!30] (-0.25,-0.25) rectangle (0.25,0.25);
        \draw (-0.25,-0.25) rectangle (0.25,0.25);
        \draw (-0.25,-0.25) -- (0.25,0.25);
    \end{scope}
}

\begin{tikzpicture}[
    boxnode/.style={rectangle, draw=black, minimum width=0.8cm, thick, fill=black!20},
    emptynode/.style={},
    mirror/.pic={
        \draw[decorate,decoration={markings, mark=between positions 0.015 and 0.98 step 0.1072 with {
            \draw (0,0)--(90:3pt);}}] (-0.2,-0.2) -- (0.2,0.2);
        \draw[thick] (-0.2,-0.2) -- (0.2,0.2);
    },
    detector/.pic={\draw[draw=black,fill=gray!20] (0,0.2) arc(90:-90:0.2cm and 0.2cm) -- cycle;},
    scale=0.9,
    ]
    \node[boxnode] (source) at (-4, 0) {Source};
    \node[boxnode] (PM) at (-2.5, 0) {PM};

    \node (BSBob1) at (3, 0) {};
    \BS{BSBob1.center};

    \node (BSBob2) at (6, 0) {};
    \begin{scope}[rotate=90]
        \BS{BSBob2.center};
    \end{scope}

    \node (MirrorBob1) at (3, 1.5) {};
    \draw (MirrorBob1.center) pic {mirror};

    \node (MirrorBob2) at (6, 1.5) {};
    \draw (MirrorBob2.center) pic[rotate=-90] {mirror};

    \draw[thick,draw=myred] (source.east) -- (PM.west);
    \draw[thick,draw=myred] (PM.east) -- (BSBob1.center) -- node[above]{$S$} (BSBob2.center);
    \draw[thick,draw=myred] (BSBob1.center) -- (MirrorBob1.center) -- node[above]{$R$} (MirrorBob2.center) -- (BSBob2.center);

    \draw[thick,draw=myred] (BSBob2.center) -- ([xshift=+1.0cm] BSBob2.center) pic{detector};
    \draw[thick,draw=myred] (-1,0) (BSBob2.center) -- ([yshift=-1.0cm] BSBob2.center) pic[rotate=-90]{detector};

    % Eve's possible ranges of influence
    \begin{scope}[on background layer]
        \draw[thick,dotted,draw=myblue] ([xshift=0.5cm,yshift=-1.3cm] PM.east) --
            ([xshift=0.5cm,yshift=2.5cm] PM.east) -- node[below,text=myblue]{Situation 2}
            ([xshift=0.5cm,yshift=2.5cm] BSBob2.east) --
            ([xshift=0.5cm,yshift=+0.6cm] BSBob2.east) --
            ([xshift=-0.6cm,yshift=+0.6cm] BSBob2.west) --
            ([xshift=-0.6cm,yshift=-1.3cm] BSBob2.west) --
            cycle;
        \node[fit={([xshift=0.8cm,yshift=-1cm] PM.east) ([xshift=-0.8cm,yshift=1.7cm] BSBob1.west)},thick,draw=mygreen] (Box1) {};
        \draw (Box1.north) node[below,text=mygreen] {Situation 1};
    \end{scope}
\end{tikzpicture}
    \caption{The different possible regions of Eve's influence. Situation 1 is the actual situation of the DPS protocol which does not satisfy the non-signalling constraint. So instead we give Eve access to the memory system $R$ (the upper arm of the interferometer) as well as one of Bob's beam splitters. This is shown as situation 2. This change can only decrease the secret key rate. }
	\label{fig:dps_attack_regions}
\end{figure}

\begin{figure}[t]
    \centering
    \newcommand{\channels}[1]{
    \node[green_box] (E#1) at (0, 0) {$\mathcal{E}_{#1}$};
    \node[green_box] (F#1) at (1.5, 0) {$\mathcal{F}_{#1}$};
    \node[blue_box] (A#1) at ([yshift=+1cm] E#1.north) {$\mathcal{A}$};
    \node[blue_box,minimum height=1.5cm] (K#1) at ([yshift=0.4cm] F#1.north|-A#1.east) {$\mathcal{K}$};

    \node[blue_box] (BS#1) at ([yshift=-1cm] E#1.south) {BS};
    \node[blue_box] (B#1) at ([yshift=-2.5cm] E#1.south) {$\mathcal{B}$};

    % Internal lines
    \draw[thick,->,>=stealth,rounded corners] (B#1.east) -- node[above]{$I_{#1}J_{#1}$} (B#1.east-|F#1.south) -- (F#1.south);
    \draw[thick,->,>=stealth] (F#1.north) -- (K#1.south);
    \draw[thick,->,>=stealth] (E#1.east) -- (F#1.west);
    \draw[thick,->,>=stealth] (A#1.east) -- node[above]{$U_{#1}$} (A#1.east-|K#1.west);
    \draw[thick,->,>=stealth] ([xshift=-0.2cm] K#1.north) -- ([xshift=-0.4cm,yshift=0.7cm] K#1.north) node[above] {$A_#1$};
    \draw[thick,->,>=stealth] ([xshift=+0.2cm] K#1.north) -- ([xshift=+0.4cm,yshift=0.7cm] K#1.north) node[above] {$C_#1$};

    \draw[thick,->,>=stealth] (A#1.south) -- (E#1.north);
    \draw[thick,->,>=stealth] (E#1.south) -- (BS#1.north);
    \draw[thick,->,>=stealth] (BS#1.south) -- (B#1.north);
    \draw[thick,->,>=stealth] (B#1.south) -- ([yshift=-0.5cm] B#1.south) node[below]{$B_#1$};

    % Output of the beam splitter
    \draw[thick,-,>=stealth] (BS#1.east) -- ([xshift=-0.07cm] F#1.north|-BS#1.east);
    \node (BSOut#1) at (BS#1.east-|F#1.east) {};
    \draw[thick,-,>=stealth] ([xshift=+0.07cm] F#1.north|-BS#1.east) -- (BSOut#1);

    \begin{scope}[on background layer]
        % Channel M
        \draw[fill=LightGray!50,rounded corners=0.1cm]
            ([xshift=-0.2cm,yshift=-1.1cm] B#1.south west) -- node[below]{$\mathcal{M}_{#1}$}
            ([xshift=+0.2cm,yshift=-1.1cm] B#1.south east-|F#1.east) -- 
            ([xshift=+0.2cm,yshift=+0.2cm] K#1.north east) -- 
            ([xshift=-0.2cm,yshift=+0.2cm] K#1.north east-|A#1.north west) --
            cycle;

        % Eve's control
        % \draw[thick,draw=mygreen,rounded corners=0.1cm]
        %     ([xshift=-0.1cm,yshift=-0.1cm] BS#1.south west) --
        %     ([xshift=+0.1cm,yshift=-0.1cm] BS#1.south east-|F#1.south east) -- 
        %     ([xshift=+0.1cm,yshift=+0.1cm] F#1.north east) -- 
        %     ([xshift=-0.1cm,yshift=+0.1cm] E#1.north west) --
        %     cycle;
    \end{scope}[on background layer]
}

\begin{tikzpicture}[
    green_box/.style={
        text=black,
        draw=black,
        minimum height=1.5cm,
        minimum width=0.8cm,
        fill=mygreen!30,
    },
    blue_box/.style={
        align=center,
        draw=black,
        minimum height=0.8cm,
        minimum width=0.8cm,
        fill=myblue!30,
    },
]
    \channels{1}
    % Initial inputs
    \draw[thick,->,>=stealth] ([xshift=-0.9cm] E1.west) -- node[above,pos=0.4]{$\hat{E}_0$} (E1.west);
    \draw[thick,->,>=stealth] ([xshift=-0.9cm] B1.west) -- node[above,pos=0.4]{$S_0$} (B1.west);
    \draw[thick,->,>=stealth] ([xshift=-2.4cm,yshift=0.4cm] K1.west) -- node[above,pos=0.15]{$R_0$} ([yshift=0.4cm] K1.west);

    \begin{scope}[shift={(4cm,0)}]
        \channels{2}
    \end{scope}
    \draw[thick,->,>=stealth,rounded corners] (BSOut1.west) -- ([xshift=0.3cm] BSOut1.east) -- ([xshift=0.3cm] BSOut1.east|-B2.west) -- node[above,pos=0.4](tmp){$S_1$} (B2.west);
    \draw[thick,->,>=stealth] ([yshift=0.4cm] K1.east) -- node[above=0.10cm](tmp2){} ([yshift=0.4cm] K2.west);
    \node at (tmp2-|tmp) {$R_1$};
    \draw[thick,->,>=stealth] (F1.east) -- node[above=0.15cm](tmp2){} (E2.west);
    \node at (tmp2-|tmp) {$\hat{E}_1$};

    \begin{scope}[shift={(10.2cm,0)}]
        \channels{n}
    \end{scope}

    \node (dots) at ([xshift=1.9cm] F2.east) {$\cdots$};
    \node at ([yshift=0.4cm] K2.east-|dots) {$\cdots$};
    \node at (Bn.east-|dots) {$\cdots$};
    \draw[thick,->,>=stealth,rounded corners] (BSOut2.west) -- ([xshift=0.3cm] BSOut2.east) -- ([xshift=+0.3cm] BSOut2.east|-Bn) -- node[above](tmp){$S_2$} ([xshift=-0.3cm] Bn-|dots.west);
    \draw[thick,->,>=stealth] ([yshift=0.4cm] K2.east) -- node[above=0.10cm](tmp2){} ([xshift=-0.3cm,yshift=0.4cm] K2-|dots.west);
    \node at (tmp2-|tmp) {$R_2$};
    \draw[thick,->,>=stealth] (F2.east) -- node[above=0.15cm](tmp2){} ([xshift=-0.3cm] F2-|dots.west);
    \node at (tmp2-|tmp) {$\hat{E}_2$};

    % Inputs to round n
    \draw[thick,->,>=stealth] ([xshift=+0.3cm,yshift=0.4cm] dots.east|-Kn.west) -- node[above,pos=0.2]{$R_{n-1}$} ([yshift=0.4cm] Kn.west);
    \draw[thick,->,>=stealth] ([xshift=+0.3cm] dots.east|-En.west) -- node[above,pos=0.4]{$\hat{E}_{n-1}$} (En.west);
    \draw[thick,->,>=stealth] ([xshift=+0.3cm] dots.east|-Bn.west) -- node[above,pos=0.4]{$S_{n-1}$} (Bn.west);

    % Outputs of round n
    \draw[thick,->,>=stealth,rounded corners] (BSOutn.west) -- ([xshift=0.3cm] BSOutn.east) -- ([xshift=+0.3cm] BSOutn.east|-Bn) -- node[above](tmp){$S_n$} ([xshift=+1.3cm] Bn-|Fn.east);
    \draw[thick,->,>=stealth] ([yshift=0.4cm] Kn.east) -- node[above=0.10cm](tmp2){} ([xshift=+1.3cm,yshift=0.4cm] Kn.east-|Fn.east);
    \node at (tmp2-|tmp) {$R_n$};
    \draw[thick,->,>=stealth] (Fn.east) -- node[above=0.15cm](tmp2){} ([xshift=+1.3cm] Fn.east);
    \node at (tmp2-|tmp) {$\hat{E}_n$};
\end{tikzpicture}
    \caption{\label{fig:dps_eat_channels}The construction of the EAT channels for the DPS protocol. The channels $\mathcal{F}_i$ simply store a copy of $I_iJ_i$ in Eve's side-information. The systems $R_i$ store a copy of the previous outputs, i.e., $R_i=U^i$. To apply the generalised EAT we make the substitution $S_i\hat{E}_i \rightarrow E_i$ and keep $R_i$ as the memory register. Note that $\mathcal{M}_i$ does not signal from $R_{i-1}$ to $E_i$ and therefore the conditions of the generalised EAT are satisfied. The only reason for having the system $R_i$ is so that $A_i$ and $C_i$ can be computed. To compute single-round entropies we include the beam splitter (denoted as BS in the figure) in Eve's attack channel (see also Situation 2 in \cref{fig:dps_attack_regions}).}
\end{figure}

One of the main technical challenges when trying to prove security of the DPS protocol is that we need to consider memory effects (the system $R$ in \cref{fig:dps_attack_regions}). Naively one might hope to be able to include these memory effects in the EAT using the register $R_i$. After all, this is precisely what this register is for. There is however a problem: Bob's outputs and hence the sifting information $I_i$ depends on the state of the memory system $R$. But the register $I_i$ will also become available to Eve thus breaking the non-signalling condition of the EAT. We circumvent this problem by including the memory system $R$ in Eve's register $E_i$ (situation 2 in \cref{fig:dps_attack_regions}). Now the non-signalling constraint is trivially satisfied since the register $R_i$ is the trivial (one dimensional) system. On the other hand this also means that we now have to condition on this additional system (to which Eve in reality has no access). It also turns out that when computing entropies for situation 1 in \cref{fig:dps_attack_regions} we observe that the entropy decreases with increasing photon number. This indicates that it is not possible to find (an obvious) squashing map for the scenario where we include the first beam splitter in Bob's measurement operator. Lastly to compute the registers $A_i$ and $C_i$ we need access to $U_{i-1}$. We can achieve this by carrying $U_{i-1}$ through the channels using the memory register of the generalised EAT. The final construction of the EAT channels is shown in \cref{fig:dps_eat_channels}.

\end{document}